\renewcommand*{\le}{\leqslant}
\renewcommand*{\ge}{\geqslant}
\renewcommand*{\epsilon}{\varepsilon}
\renewcommand*{\emptyset}{\varnothing}
\newcommand{\forced}{\setlength{\fboxsep}{0pt}\mbox{${\hspace{5pt}^\to}\hspace{-14pt}\colorbox{white}{$C$\hspace{0.5pt}}\:\:$}}
\newcommand{\free}{\setlength{\fboxsep}{0.15pt}\mbox{${\hspace{5pt}^\rightrightarrows}\hspace{-14pt}\colorbox{white}{$C$\hspace{0.5pt}}\:\:$}}
\newcommand*{\citet}[1]{\mshortciteA{#1}}
\renewcommand{\cite}[1]{\mshortcite{#1}}
\newcommand*{\citealp}[1]{\mshortciteR{#1}}
\newcommand{\vecs}{\mathbf{s}}
\newcommand{\pos}{\mathrm{pos}}
\newcommand{\score}{\mathrm{score}}
\newcommand{\cost}{\mathrm{cost}}
\newtheorem{theorem}{Theorem}[section]
\newtheorem{corollary}[theorem]{Corollary}
\newtheorem{lemma}[theorem]{Lemma}
\newtheorem{proposition}[theorem]{Proposition}
\newtheorem{definition}[theorem]{Definition}
\newtheorem{example}[theorem]{Example}
\begin{document}

\title{Preferences Single-Peaked on a Tree: \\ Multiwinner Elections and Structural Results
}

\author{\name Dominik Peters \email dominik@cs.toronto.edu \\
		\addr Department of Computer Science \\
		University of Toronto, Toronto, ON, Canada
		\AND
		\name Lan Yu \email jen.lan.yu@gmail.com \\
		\addr Google Inc. \\
		Mountain View, CA, USA
		\AND
		\name Hau Chan \email hchan3@unl.edu \\
		\addr Department of Computer Science and Engineering\\
		University of Nebraska--Lincoln, NE, USA
		\AND
		\name Edith Elkind \email elkind@cs.ox.ac.uk \\
		\addr Department of Computer Science \\
		University of Oxford, UK
}

\maketitle

\begin{abstract}
	A preference profile is single-peaked on a tree if the candidate set 
        can be equipped with a tree structure so that the preferences of each 
        voter are decreasing from their top candidate along all paths in the tree.
	This notion was introduced by \citet{demange1982single}, 
        and subsequently \citet{trick1989recognizing} described
        an efficient algorithm for deciding if a given profile is single-peaked on a tree.
	We study the complexity of multiwinner elections under several variants 
	of the Chamberlin--Courant rule for preferences single-peaked on trees. 
	We show that in this setting the egalitarian version of this rule admits a polynomial-time 
	winner determination algorithm. 
	For the utilitarian version,
	we prove that winner determination remains NP-hard for the Borda scoring function; indeed, 
	this hardness results extends to a large family of scoring functions.
	However, a winning committee can be found in polynomial time
	if either the number of leaves or the number of internal vertices 
	of the underlying tree is bounded by a constant. 
	To benefit from these positive results, we need a procedure that can determine
	whether a given profile is single-peaked on a tree that has additional desirable
	properties (such as, e.g., a small number of leaves).
        To address this challenge, we develop a structural approach that enables
	us to compactly represent all trees with respect to which a given profile is single-peaked.
	We show how to use this representation to efficiently 
	find the best tree for a given profile for use 
	with our winner determination algorithms:
	Given a profile, we can efficiently find a tree with the minimum number of leaves,
	or a tree with the minimum number of internal vertices among trees on which
	the profile is single-peaked.
	We then explore the power and limitations of this framework: we develop
	polynomial-time algorithms to find trees with the smallest maximum degree, diameter,
	or pathwidth, but show that it is NP-hard to check whether a given profile
	is single-peaked on a tree that is isomorphic to a given tree, or on a regular tree. 
\end{abstract}

\section{Introduction}\label{sec:introduction}
Computational social choice deals
with algorithmic aspects of collective decision-making. One of the 
fundamental questions studied in this area is the complexity of
determining the election winner(s) for voting rules:
indeed, for a rule to be practically applicable, it has to be the case
that we can find the winner of an election in a reasonable amount of time.

Most common rules that are designed to output a single winner admit 
polynomial-time winner determination algorithms;
examples include such diverse rules as Plurality, Borda, Maximin, Copeland, and Bucklin
(for definitions, see, e.g., the handbook 
by~\shortciteauthor{arrow-handbook}, \citeyearR{arrow-handbook}). 
However, there are also some intuitively appealing single-winner rules for 
which winner determination is known to be computationally hard: this is the case, 
for instance, for Dodgson's rule~\cite{dodgson-hard1,dodgson-hard2}, 
Young's rule~\cite{young-hard}, and Kemeny's rule~\cite{dodgson-hard1,kemeny-hard}.
More recently, there has been much interest in the computational complexity of voting rules 
whose purpose is to elect a representative \emph{committee} of candidates rather 
than select a single winner \cite{mw-survey}. One can adapt common single-winner
rules to this setting, for example by appointing the candidates with the top
$k$ scores, where $k$ is the target committee size. Such rules will pick
candidates of high ``quality'', and are useful for shortlisting purposes.
However, if we aim for a representative committee, it is preferable to use
a voting rule that is specifically designed for this purpose.
We note that \citet{mw-survey} provide a detailed discussion of different goals in
multi-winner elections, and which types of rules are suitable for each
goal.

An important representation-focused rule was proposed by \citet{chamberlin}.
Given a committee~$A$ of $k$ candidates, the rule assumes that each
voter $i$ is \emph{represented} by her most-preferred candidate in $A$, 
that is, the member of $A$ ranked highest in her preferences.
Voter $i$ is assumed to obtain utility from this representation. 
This utility is non-decreasing in the rank of her representative in her preference ranking. 
For example, her utility could be obtained as the Borda score 
she assigns to her representative (i.e., the number of candidates 
she ranks below that representative), 
but other scoring functions can be used as well.
There are no constraints on the number of voters that can be represented by a single candidate; 
the assumption is that the committee will make its decisions by weighted voting, where the weight
of each candidate is proportional to the fraction of the electorate that she represents
(or, alternatively, that the purpose of the committee is deliberation 
rather than decision-making, so the goal is to select a diverse 
committee that represents many voters).
Given a target committee size $k$, Chamberlin and Courant's scheme outputs a size-$k$ committee that 
maximizes the sum of voters' utilities according to the chosen scoring function 
(see Section~\ref{sec:prelim} for a formal definition).%
\footnote{\citet{monroe} has subsequently proposed a variant
of this scheme where the committee is assumed to use non-weighted voting, and, consequently, 
each member of the committee is required to represent approximately 
the same number of voters (up to a rounding error).}
Subsequently, \citet{betzler2013computation}
suggested an egalitarian, or maximin, variant, where the quality of a committee
is measured by the utility of the worst-off voter rather than the sum of individual utilities.

Unfortunately, the problem of identifying an optimal committee under  
the Chamberlin--Courant rule is known to be computationally hard, 
even for fairly simple scoring functions. 
In particular, \citet{mw-hard1} show that this is the case 
under $r$-approval scoring functions, where a voter obtains utility~1 
if her representative is one of her $r$ highest-ranked candidates, and utility~0 otherwise.
\citet{mw-hard2} give an NP-hardness proof for the Chamberlin--Courant rule
under the Borda scoring function. 
\citet{betzler2013computation} extend these hardness results
to the egalitarian variant.

Clearly, this is bad news if we want to use the Chamberlin--Courant rule 
in practice: elections may involve millions of voters and 
hundreds of candidates, and the election outcome needs to be announced 
soon after the votes have been cast.
On the other hand, simply abandoning these voting rules in favor
of easy-to-compute adaptations of single-winner rules is not acceptable if the goal
is to select a truly representative committee. 
Thus, it is natural to try to circumvent
the hardness results, either by designing efficient algorithms that compute an
\emph{approximately optimal} committee or by identifying reasonable assumptions
on the structure of the election that ensure computational tractability. 
The former approach was initiated by \citet{mw-hard2},
and subsequently \citet{mw-approx} and \citet{munagala2021optimal} have developed algorithms
with strong approximation guarantees; see the survey by \citet{mw-survey}.  
The latter approach was proposed by \citet{betzler2013computation},
who provide an extensive analysis of the fixed-parameter tractability 
of the winner determination problem under both utilitarian and egalitarian variants 
of the Chamberlin--Courant rule.
They also investigate the complexity of this problem for \emph{single-peaked electorates}.

A profile is said to be \emph{single-peaked} \cite{black-book} if 
the set of candidates can be placed on a one-dimensional axis,
so that a voter prefers candidates that are close to her top choice on the axis.
We can expect a profile to be single-peaked when every voter evaluates
the candidates according to their position on a numerical issue, 
such as the income tax rate or minimum wage level, 
or by their position on the left-right ideological axis.
Many voting-related problems that are known to be computationally hard for 
general preferences become easy when voters' preferences are assumed to be single-peaked
\cite{structure-survey}.
For instance, this is the case for the winner determination problem under Dodgson's, 
Young's and Kemeny's rules \cite{brandt2015bypassing}.
\citet{betzler2013computation}
show that this is also the case for winner determination 
of both the utilitarian and the egalitarian version of the Chamberlin--Courant rule.

\paragraph{Our Contribution}
The goal of this paper is to investigate whether 
the easiness results of \citet{betzler2013computation} for single-peaked electorates can be extended
to a more general class of profiles.
To this end, we explore a generalization of single-peaked preferences introduced by 
\citet{demange1982single}, which captures a much broader class of voters' preferences, while 
still implying the existence of a Condorcet winner.
This is the class of preference profiles that are single-peaked \emph{on a tree}. 
Informally, an election belongs to this class if we can construct a tree 
whose vertices are candidates in the election, and each voter
ranks all candidates according to their
perceived distance along this tree from her most-preferred candidate,
with closer candidates preferred to those who are further away. 
A profile is single-peaked if and only if it is single-peaked on a path.
Preferences that are single-peaked on a tree capture, e.g., the setting
where voters' preferences are single-peaked over non-extreme candidates, 
but a small number of extreme candidates prove to be difficult to order;
the resulting tree may then have each of the extreme candidates as a leaf. 
They also arise in the context
of choosing a location for a facility (such as a hospital or a convenience store)
given an acyclic road network. Further examples are provided by \citet{demange1982single}.
Moreover, this preference domain is a natural and well-studied 
extension of the single-peaked domain, and checking if the algorithms 
of \citet{betzler2013computation} extend to preferences that are single-peaked
on trees helps us understand whether the linear structure 
is essential for tractability. As we will see, it turns out that the answer to this question 
is `no'.


We focus on the Chamberlin--Courant rule.
We first show that, for the egalitarian variant of this rule,
winner determination is easy for an arbitrary scoring function when 
voters' preferences are single-peaked on a tree. 
Our proof proceeds by reducing our problem to an easy variant of the \textsc{Hitting Set} problem. 
For the utilitarian setting, we show that winner determination for the 
Chamberlin--Courant rule remains NP-complete if preferences are single-peaked on a tree, 
for many scoring functions, including the Borda scoring function. 
Hardness holds even if preferences are single-peaked on a tree of 
bounded diameter and bounded pathwidth.
However, we present an efficient winner determination algorithm 
for preferences that are single-peaked on a tree with a \emph{small number of leaves}:
the running time of our algorithm is polynomial in the size of the profile, 
but exponential in the number of leaves. 
Formally, the problem is in XP with respect to the number of leaves.
Further, we give an algorithm that works for trees with a small number of 
\emph{internal vertices} (i.e., with a \emph{large} number of leaves) 
when using the Borda scoring function. This algorithm places the problem 
in XP with respect to the number of internal vertices and
in FPT with respect to the combined parameter
`committee size+the number of internal vertices'.

Now, these parameterized algorithms assume that the tree with respect to which 
the preferences are single-peaked is given as an input. However, in practice
we cannot expect this to be the case: typically, we are only given the voters'
preferences and have to construct such a tree (if it exists) ourselves. 
While the algorithm of Betzler \emph{et al.}~faces the same issue (i.e., it needs to know
the societal axis), there exist efficient algorithms for determining
the societal axis given the voters' preferences 
\cite{bar-tri:j:sp,doignon1994polynomial,escoffier2008single}.
In contrast, for trees the situation is more complicated.  
\citeauthor{trick1989recognizing}~\citeyear{trick1989recognizing} 
describes a polynomial-time algorithm that
decides whether there exists a tree such that a given election 
is single-peaked with respect to it, 
and constructs \emph{some} such tree if this is indeed the case.
However, Trick's algorithm leaves us a lot of freedom when constructing the tree.
As a result, if the election is single-peaked with respect to several different trees, the output
of Trick's algorithm will be dependent on the implementation details.
In particular, there is no guarantee that an arbitrary implementation will find
a tree that caters to the demands of the winner determination algorithms that we present:
for example, the algorithm may return a tree with many leaves, while we wish to find one
that has as few leaves as possible. 
Indeed, Trick's algorithm may output a complex tree even when 
the input profile is single-peaked on a line.

To address this issue, we propose a general framework for finding trees with desired properties, 
and use it to obtain polynomial-time algorithms for identifying `nice' trees when they exist,
for several appealing notions of `niceness'. Specifically, we define a digraph
that encodes, in a compact fashion, all trees with respect to which a given profile
is single-peaked. This digraph enables us to count and/or enumerate all such trees.
Moreover, we show that it has many useful structural properties. These properties can be 
exploited to efficiently find trees with the minimum number of leaves, or the number 
of internal vertices, or the degree, or diameter, or pathwidth, among all trees 
with respect to which a given profile is single-peaked.
These recognition algorithms complement our parameterized algorithms for winner determination.
However, not all interesting questions about finding special trees are easy to solve.
In particular, we show that it is NP-hard to decide whether a given profile is single-peaked
on a regular tree, i.e., a tree all of whose internal vertices have the same degree.
It is also NP-complete to decide whether a profile is single-peaked on a tree which 
is isomorphic to a given tree.

\paragraph{Related Work}
The recent literature on the use of structured preferences in computational social choice is 
surveyed by \citet{elkind2016preferencerestrictions,structure-survey}.

\citet{demange1982single} introduced the domain of preferences single-peaked on a tree and 
showed that every profile in this domain admits a Condorcet winner. Thus, there exists a 
strategyproof voting rule on this domain. \citet{danilov1994structure} characterized the set 
of all strategyproof voting rules on this domain, generalizing the classic characterization 
for preferences single-peaked on a line by \citet{moulin1980strategy}. 
\citet{schummer2002strategy} consider strategyproofness for the case where the tree is 
embedded in $\mathbb R^2$ and preferences are Euclidean. \citet{peters2019unanimous} 
characterize strategyproof probabilistic voting rules for preferences single-peaked on trees 
and other graphs.
%
The domain of single-\emph{crossing} preferences \cite{mir:j:sc,rob:j:tax}
has also been extended to trees and other 
graphs \cite{kung2015sorting,clearwater2015generalizing,puppe2019condorcet}.

A polynomial-time algorithm for recognizing whether a profile is single-peaked 
on a line was given by 
\citet{bar-tri:j:sp}. Subsequently, faster algorithms were developed by
\citet{doignon1994polynomial} and \citet{escoffier2008single}. 
\citeauthor{fitzsimmons2020incomplete}~\citeyear{fitzsimmons2020incomplete} 
put forward an efficient algorithm for preferences 
that may contain ties. For 
single-peakedness on trees, \citet{trick1989recognizing} gives an algorithm that we describe in 
detail later. Trick's algorithm only works when voters' preferences are strict.
For preferences that may contain ties, more complicated algorithms have been 
proposed \cite{trick1988induced,conitzer2004combinatorial,tarjan1984simple,sheng2012review}. 
\citet{spoc} give a polynomial-time algorithm for recognizing preferences single-peaked on a 
circle; very recently, these results have been extended to pseudotrees 
\cite{escoffier2020}.
On the other hand, a result of \citet{gottlob2013decomposing} implies that recognizing 
whether preferences are single-peaked on a graph of treewidth 3 is NP-hard.

The complexity of winner determination under the Chamberlin--Courant rule
has been investigated by a number of authors, starting with the work of 
\citet{mw-hard1} and \citet{mw-hard2}; see the survey of \citet{mw-survey}. 
The first paper to consider this problem for a structured preference domain
was the work of \citet{betzler2013computation}, who 
gave a dynamic programming algorithm for single-peaked preferences.
This result was extended by \citet{cor-gal-spa:c:spwidth} to profiles with bounded 
\emph{single-peaked width}.
\citet{skowron2015complexity} show that 
a winning committee under the Chamberlin--Courant rule can be computed in polynomial time 
for preferences that are single-crossing, or, more generally, have bounded single-crossing width.
\citet{andrei} build on the work of \citet{clearwater2015generalizing} to
extend this result to preferences that are single-crossing on a tree.
\citet{spoc} develop a polynomial-time winner determination algorithm for profiles that are 
single-peaked on a circle, via an integer linear program that is totally unimodular if 
preferences are single-peaked on a circle, and hence optimally solved by its linear 
programming relaxation. 
In contrast, for 2D-Euclidean preferences,
\citet{godziszewski2021analysis} obtain an NP-hardness   
result for a variant of the Chamberlin--Courant rule
that uses approval ballots. 
The computational complexity of the winner determination problem 
under structured preferences has also been studied
for other voting rules: for example, \citet{brandt2015bypassing}
consider the complexity of the Dodgson rule and the Kemeny rule under 
single-peaked preferences. 

\citet{conitzer2009eliciting} showed that single-peakedness offers advantages if we wish to 
elicit voters' preferences using comparison queries: Without any information about 
the structure of the preferences, one requires $\Omega(nm\log m)$ 
comparison queries to discover the preferences of 
$n$ voters over $m$ alternatives. If preferences are known to be single-peaked on a line, then 
$O(nm)$ queries suffice. \citet{dey2016elicitation} show that single-peakedness on trees can 
also be used to speed up elicitation, as long as we know the underlying tree and this tree is 
well-structured (the relevant notions of structure are
similar to the ones considered in Section~\ref{sec:recogn}). 
\citet{sliwinski2019preferences} consider the problem of sampling preferences that are 
single-peaked on a given tree uniformly at random, and explain how to identify 
a tree that is most likely to generate a given profile, assuming that preferences
are sampled uniformly at random.


\section{Preliminaries}\label{sec:prelim}
Let $C$ be a finite set of \emph{candidates}, and let $N = \{ 1, \dots, n \}$ be a set of \emph{voters}.
A \emph{preference profile} $P$ assigns to each voter a strict total order over $C$.
For each $i \in N$ we write $\succ_i$ 
for the preference order of $i$. If $a \succ_i b$, 
then we say that voter $i$ 
prefers $a$ to $b$.

Given a profile $P$, we denote by $\pos(i, a)$ the position of 
candidate $a\in C$ in the preference order of voter $i\in N$:
\[\pos(i, a)=|\{b\in C : b\succ_i a\}|+1.\]
We write $\text{top}(i)$ for voter $i$'s most-preferred candidate, 
i.e., the candidate in position 1, we write $\text{second}(i)$ 
for the candidate in position 2, and $\text{bottom}(i)$ 
for $i$'s least-preferred candidate, i.e., the candidate in position $m$.
Given a subset of candidates $W \subseteq C$,
we extend this notation and let $\text{top}(i, W)$, $\text{second}(i, W)$, 
and $\text{bottom}(i, W)$ denote voter $i$'s most-, second-most- and least-preferred 
candidate in $W$, respectively, provided that $|W| \ge 3$.

Given a subset $W \subseteq C$, we write $P|_{W}$ for the profile obtained from $P$ by 
restricting the candidate set to $W$.

\paragraph{Multi-Winner Elections}
A \emph{scoring function} for a given $N$ and $C$ is a mapping 
$\mu:N\times C\to \mathbb Z$ such that $\pos(i, a)<\pos(i, b)$
implies $\mu(i, a) \ge \mu(i, b)$. Intuitively, $\mu(i, a)$ indicates
how well candidate $a$ represents voter $i$. 
A scoring function is said to be \emph{positional} if there exists a vector
$\vecs=(s_1, \dots, s_m)\in \mathbb Z^m$ with $s_1\ge s_2\ge\dots\ge s_m$ such that 
$\mu(i, a) = s_{\pos(i, a)}$; when this is the case, 
we will say that the scoring function is {\em induced} by the vector $\vecs$. 
It will be convenient to work with vectors $\vecs$ such that $s_1 = 0$ and 
$s_2, \dots, s_m \le 0$, where negative values correspond to 
`misrepresentation'. This choice is without loss of generality, 
as applying a positive affine transformation to $\vecs$ 
does not change the output of the voting rules we introduce below.
We will refer to the positional scoring function that corresponds 
to the vector $(0, -1, \dots, -m+1)$ as the \emph{Borda scoring function}.

Given a preference profile $P$, a committee of candidates $W\subseteq C$,
and a scoring function $\mu:N\times C\to\mathbb Z$, 
we take voter $i$'s utility from the committee $W$ to be $\mu(i, \textup{top}(i, W))$,
that is, the score she gives to her favorite candidate in $W$.
We also write
\[
\score_\mu^+(P, W)=\sum_{i\in N}\mu(i, \textup{top}(i, W))
\]
for the sum of the utilities of all voters (the \emph{utilitarian Chamberlin--Courant score}), and
\[
\score_\mu^{\min}(P, W) = \min_{i\in N}\mu(i, \textup{top}(i, W))
\]
for the utility obtained by the worst-off voter (the \emph{egalitarian Chamberlin--Courant score}).
Given a committee size $k$ with $1 \le k \le |C|$, the \emph{utilitarian Chamberlin--Courant rule} elects all committees $W\subseteq C$ with $|W| = k$ such that $\score_\mu^+(P, W)$ is maximum. 
The \emph{egalitarian} Chamberlin--Courant rule elects all committees $W\subseteq C$ with $|W| = k$ such that $\score_\mu^{\min}(P, W)$ is maximum. 
When referring to the Chamberlin--Courant rule, we will mean the utilitarian version by default. 
Sometimes it is useful to think of this rule as minimizing costs rather than maximizing scores, 
so we will write 
$\cost_\mu^+(P, W) = -\score_\mu^+(P, W)$ and 
$\cost_\mu^{\min}(P, W) = -\score_\mu^{\min}(P, W)$. 
The Chamberlin--Courant rule minimizes these costs.

To study the computation of winning committees under these rules, we now formally define the 
decision problems associated with their optimization versions.

\begin{definition}
	An instance of the \textsc{Utilitarian CC} (respectively, \textsc{Egalitarian CC})
	problem is given by a preference profile $P$,
	a committee size $k$, $1 \le k \le |C|$, 
	a scoring function $\mu:N\times C\to \mathbb Z$, 
	and a bound $B \in \mathbb Z$. 
	It is a `yes'-instance if there is a subset of candidates $W\subseteq C$
	with $|W|=k$ such that $\score_\mu^+(P, W)\ge B$ (respectively, $\score_\mu^{\min}(P, W)\ge B$)
	and a `no'-instance otherwise.%
	\footnote{
		Under our definition
		it may happen that some candidate
		in the committee does not represent any voter, i.e., there exists an $a\in W$ 
		such that $a\neq\textup{top}(i, W)$ for all $i\in N$; 
		equivalently, we allow for committees of size $k'<k$. 
		It is assumed that the voting weight of such candidate in the resulting committee will be $0$. 
		This definition is also used by, e.g., \citet{cor-gal-spa:c:spwidth}
                and \citet{mw-approx}. In contrast, 
		\citet{betzler2013computation} define the Chamberlin--Courant
		rule by explicitly specifying an assignment of voters to candidates, so that each candidate
		in $W$ has at least one voter who is assigned to her. The resulting voting rule
		is somewhat harder to analyze algorithmically. 
		Note that when $|\{\textup{top}(i, C) : i\in N\}|\ge k$, the two variants
		of the Chamberlin--Courant rule coincide.
	}
\end{definition}

We will sometimes consider the complexity of these problems for specific families
of scoring functions.
Note that a scoring function is defined for fixed $C$ and $N$, so the question
of asymptotic complexity makes sense for families of scoring functions
(parameterized by $C$ and $N$), but not for individual scoring functions.
For instance, the Borda scoring function can be viewed as a family 
of scoring functions, as it is well-defined for any $C$ and $N$.

\paragraph{Graphs and Digraphs}
A \emph{digraph} $D = (C, A)$ is given by a set $C$ of vertices and a set 
$A \subseteq C \times C$ of pairs, which we call \emph{arcs}. 
If $(a,b) \in A$, we say that $(a,b)$ is an \emph{outgoing arc} of~$a$.
An \textit{acyclic} digraph (a \textit{dag}) is a digraph with no directed cycles. 
For a vertex $a \in C$, its \textit{out-degree} $d^+(a) = |\{ b \in C : (a,b) \in A \}|$ 
is the number of outgoing arcs of $a$.
A \textit{sink} is a vertex $a$ with $d^+(a) = 0$, i.e., a vertex without outgoing arcs.
It is easy to see that every dag has at least one sink.
Given a digraph $D = (C, A)$, we write $\mathcal G(D)$
for the undirected graph $(C, E)$ where for all $a,b \in C$ we have $\{a,b\} \in E$ 
if and only if $(a,b)\in A$ or $(b, a)\in A$.
Thus, $\mathcal G(D)$ is the graph obtained from $D$ when 
we forget about the orientations of the arcs of $D$.

Given a digraph $D = (C, A)$ and a set $W \subseteq C$, we write $D|_{W}$ for the induced 
subdigraph. Similarly, for a graph $G = (C, E)$, we write $G|_{W}$ for the induced subgraph. 
We say that a set $W \subseteq C$ is \emph{connected} in a graph $G$ if $G|_{W}$ is connected.

\paragraph{Classes of Trees}

\begin{figure}
	\begin{subfigure}[b]{0.3\linewidth}
		\[
		\begin{tikzpicture}
		\graph[empty nodes,nodes={circle, fill=black!80, draw=none, inner sep=1.8pt}] {
			subgraph I_n[n=7, clockwise] -- {z/$z$}
		};
		\end{tikzpicture}
		\]
		\caption{Star}
	\end{subfigure}%
	\begin{subfigure}[b]{0.3\linewidth}
		\[
		\begin{tikzpicture}
		[every node/.style={circle, fill=black!80, draw=none, inner sep=1.8pt}]
		\node (b1) at (0.75,0.8) {};
		\node (b2) at (1.25,0.8) {};
		\node (b3) at (1,-0.8) {};
		\node (c1) at (1.75,-0.8) {};
		\node (c2) at (2.25,-0.8) {};
		\node (d1) at (3,0.8) {};
		\graph[empty nodes] {
			a -- b -- c -- d;
			b -- {(b1), (b2), (b3)};
			c -- {(c1),(c2)};
			d -- (d1);
		};
		\end{tikzpicture}
		\]
		\caption{Caterpillar}
	\end{subfigure}%
	\begin{subfigure}[b]{0.3\linewidth}
		\[
		\begin{tikzpicture}
		[every node/.style={circle, fill=black!80, draw=none, inner sep=1.8pt}]
		\node (b1) at (-0.25,0.25) {};
		\node (b2) at (-0.50,0.50) {};
		\node (b3) at (-0.75,0.75) {};
		\node (c1) at (0.25,0.25) {};
		\node (c2) at (0.50,0.50) {};
		\node (c3) at (0.75,0.75) {};
		\node (c4) at (1,1) {};
		\node (d1) at (0,-0.33) {};
		\node (d2) at (0,-0.66) {};
		\node (d3) at (0,-1.00) {};
		\graph[empty nodes] {
			c -- (c1) -- (c2) -- (c3) -- (c4);
			c -- (b1) -- (b2) -- (b3);
			c -- (d1) -- (d2) -- (d3);
		};
		\end{tikzpicture}
		\]
		\caption{Subdivision of a star}
	\end{subfigure}
	\caption{Examples of different classes of trees}
	\label{fig:example-trees}
\end{figure}
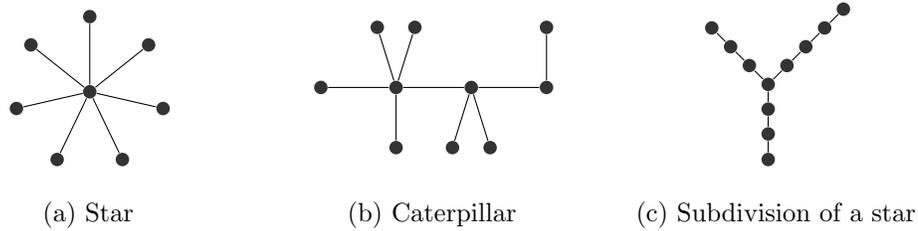

Recall that a \emph{tree} is a connected graph that has no cycles.
A \emph{leaf} of a tree is a vertex of degree~$1$. 
Vertices that are not leaves are \textit{internal} vertices. 
A \emph{path} is a tree that has exactly two leaves. 
A \emph{star} $K_{1,n}$ is a tree that has exactly one internal vertex and $n$ leaves. The internal vertex is called the \emph{center} of the star. 
The \emph{diameter} of a tree $T$ is the maximum distance between two vertices of $T$;
e.g., the diameter of a star is~$2$.
A \textit{$k$-regular tree} is a tree in which every internal vertex has degree~$k$.
Note that paths are $2$-regular, and the star $K_{1,n}$ is $n$-regular.
A \textit{caterpillar} is a tree in which every vertex is within distance $1$ of a central path.
A \textit{subdivision of a star} is a tree obtained from a star by means of replacing 
each edge by a path. Figure~\ref{fig:example-trees} illustrates some of these concepts.

\paragraph{Pathwidth}
The \emph{pathwidth} of a tree $T$ is a measure of how close~$T$ is to being a path. A \emph{path decomposition} of $T = (C,E)$ is given by a sequence $S_1, \dots, S_r$ of subsets of $C$ (called \emph{bags}) such that 
\begin{itemize}
	\item for each edge $\{a, b\}\in E$, there is a bag $S_i$ with $a,b \in S_i$, and
	\item for each $a \in C$, the bags containing $a$ form an interval of the sequence, 
              so that if $a \in S_i$ and $a \in S_j$ for $i < j$, 
              then $a$ also belongs to each of $S_{i+1}, S_{i+2}, \dots, S_{j-1}$.
\end{itemize}
The \emph{width} of the path decomposition is $\max_{i\in [r]} |S_i| -1$. 
The \emph{pathwidth} of $T$ is the minimum width of a path decomposition of $T$. 
For more on pathwidth and the related concept of treewidth, see, e.g., 
the survey by \citet{bodlaender1994tourist}.

\paragraph{Preferences that are Single-Peaked on a Tree}

Consider a tree $T$ with vertex set~$C$. 
A preference profile $P$ is said to be \emph{single-peaked on $T$}~\cite{demange1982single} if 
for every voter $i\in N$ and every pair of distinct candidates $a, b \in C$ 
such that $b$ lies on the unique path from $\text{top}(i)$ to $a$ in $T$ it holds that 
$\text{top}(i) \succ_i b \succ_i a$.
The profile $P$ is said to be \emph{single-peaked on a tree} if there exists a tree $T$
with vertex set $C$ such that $P$ is single-peaked on $T$.
The profile $P$ is said to be \emph{single-peaked} if 
$P$ is single-peaked on some tree $T$ that is a path.

The following proposition considers alternative ways of defining
preferences single-peaked on a tree $T$. The (known) proof is straightforward 
from the definitions.

\begin{proposition}
	\label{prop:spt-equivalences}
	Let $P$ be a preference profile and let $T$ be a tree on vertex set $C$. 
        The following properties are equivalent:
	\begin{itemize}
		\item $P$ is single-peaked on $T$.
		\item For every $W \subseteq C$ that is connected in $T$, $P|_{W}$ is single-peaked on $T|_{W}$.
		\item For every $i \in N$ and every $a \in C$, the \emph{top-initial segment} 
                      $\{ b \in C : b \succ_i a \}$ is connected in $T$.
	\end{itemize}
\end{proposition}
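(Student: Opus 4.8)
The plan is to prove the two implications that close the cycle among the three stated properties: namely, single-peakedness on $T$ implies the connected-restriction property, which in turn implies the top-initial-segment property, which in turn implies single-peakedness on $T$. Throughout, the key observation is that in a tree, for any connected set $W$ and any two vertices $a, b \in W$, the unique path in $T$ from $a$ to $b$ lies entirely inside $W$; hence the ``betweenness'' relation used to define single-peakedness is stable under passing to connected induced subtrees. I will use this fact repeatedly without much comment.

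\medskip
\noindent\textbf{Single-peaked on $T$ $\Rightarrow$ connected-restriction property.} Fix $W \subseteq C$ connected in $T$ and a voter $i$. Let $t = \text{top}(i, W)$. I first note that $t$ is exactly the top candidate in $W$ of the restricted order $\succ_i$, since $P|_W$ just deletes candidates. Now take $a, b \in W$ with $b$ on the unique path from $t$ to $a$ in $T|_W$. Because $W$ is connected, this path coincides with the unique path from $t$ to $a$ in $T$ (paths in a tree are unique, and the $T|_W$-path is a $T$-path using only vertices of $W$). So $b$ lies on the $T$-path from $t$ to $a$. The only subtlety is that $t$ need not equal $\text{top}(i)$, i.e.\ $i$'s global top. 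To handle this, I invoke the fact that if $b$ is between $\text{top}(i,W)$ and $a$ on a tree path, then either $b$ is between $\text{top}(i)$ and $a$, or $b$ is between $\text{top}(i)$ and $\text{top}(i,W)$; in the first case single-peakedness on $T$ gives $t \succeq_i \text{top}(i) \succ_i b \succ_i a$ directly (after noting $t \succ_i b$ follows since $b$ is strictly between), and in the second case it gives $\text{top}(i) \succ_i b \succ_i t$, but then $t$ would not be $i$'s favourite in $W$ unless $b \notin W$ — contradiction, since $b \in W$. Hence $t \succ_i b \succ_i a$ as required. (This case analysis, relying on the tree structure of top-initial segments, is the one genuinely non-trivial bookkeeping step.)

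\medskip
\noindent\textbf{Connected-restriction property $\Rightarrow$ top-initial-segment property.} Fix $i \in N$ and $a \in C$, and let $U = \{b \in C : b \succ_i a\}$; note $\text{top}(i) \in U$ and $U$ is a proper subset not containing $a$. Suppose for contradiction that $U$ is not connected in $T$. Then there exist $b, b' \in U$ such that the unique $T$-path from $b$ to $b'$ passes through some vertex $c \notin U$; choose such a triple with the $T$-path as short as possible, so that $c$ is the only off-$U$ vertex on it and in particular $c$ is adjacent on this path to a vertex of $U$. Let $W$ be the connected set $U \cup \{c\}$ — one checks it is connected in $T$ because every vertex of $U$ reaches $c$ via a path lying in $U \cup \{c\}$ (using minimality together with uniqueness of tree paths). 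Apply the hypothesis to $W$: $P|_W$ is single-peaked on $T|_W$. In $W$, voter $i$'s favourite is $\text{top}(i)$, and $c$ is $i$'s least-preferred element of $W$ (since $c \notin U$ means $a \succ_i c$, hence $c$ is ranked below everything in $U$). But $c$ lies on a $T|_W$-path between two elements $b, b'$ of $U \setminus \{\text{top}(i)\}$ or, more carefully, on the path from $\text{top}(i)$ to one of $b, b'$; tracing paths in the tree, $c$ is between $\text{top}(i)$ and some $d \in W$, so single-peakedness forces $\text{top}(i) \succ_i c \succ_i d$, contradicting that $d \succ_i a \succ_i c$. This contradiction shows $U$ is connected.

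\medskip
\noindent\textbf{Top-initial-segment property $\Rightarrow$ single-peaked on $T$.} Fix $i \in N$ and distinct $a, b \in C$ with $b$ on the unique $T$-path from $\text{top}(i)$ to $a$. I must show $\text{top}(i) \succ_i b \succ_i a$; the first inequality is immediate since $\text{top}(i)$ is $i$'s global favourite and $b \neq \text{top}(i)$. For the second, suppose toward a contradiction that $a \succ_i b$, and consider the top-initial segment $U = \{c \in C : c \succ_i b\} \cup \{b\} = \{c : c \succeq_i b\}$, equivalently $\{c : c \succ_i b'\}$ where $b'$ is the candidate ranked just below $b$ by $i$; this is a top-initial segment, hence connected in $T$ by hypothesis. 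It contains both $\text{top}(i)$ and $a$ (the latter because $a \succ_i b$), but not $b$'s successor, and crucially it need not contain $b$ if we use the strict version — so let me instead take $U = \{c : c \succ_i b\}$, which contains $\text{top}(i)$ and $a$ but not $b$, and is connected in $T$ by hypothesis. Since $\text{top}(i), a \in U$ and $U$ is connected, the unique $T$-path between them lies in $U$; but $b$ lies on that path and $b \notin U$ — contradiction. Hence $a \succ_i b$ is impossible, so $b \succ_i a$, completing the proof.

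\medskip
\noindent\textbf{Main obstacle.} The only real obstacle is the first implication: unlike the others, it requires comparing the \emph{local} top $\text{top}(i, W)$ with the \emph{global} top $\text{top}(i)$ and arguing that the betweenness witnessed inside $T|_W$ lifts to betweenness inside $T$ with the correct endpoint. The clean way to discharge this is precisely the lemma that top-initial segments are connected in $T$ (which is one of the equivalent conditions) applied to $\text{top}(i,W)$'s own initial segment, together with uniqueness of paths in trees; once that is set up, all three implications reduce to the single slogan ``in a tree, connected sets are path-closed.''
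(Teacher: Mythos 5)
The paper itself gives no argument for this proposition (it is stated as ``straightforward from the definitions''), so your proposal stands or falls on its own. Your third implication ((3) $\Rightarrow$ (1)) is correct, and your first ((1) $\Rightarrow$ (2)) is essentially correct despite the garbled chain ``$t \succeq_i \text{top}(i) \succ_i b \succ_i a$'' (the order is backwards; what actually does the work, and what you do note, is that $t \succ_i b$ simply because $t = \text{top}(i,W)$ and $b \in W \setminus \{t\}$, while $b \succ_i a$ comes from single-peakedness on $T$, and your second case is ruled out exactly as you say).

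The genuine gap is in the middle implication, (2) $\Rightarrow$ (3). You claim that choosing $b,b' \in U$ with a shortest $T$-path leaving $U$ makes $c$ the \emph{only} off-$U$ vertex on that path, and that $W = U \cup \{c\}$ is connected. Neither claim holds: minimality over pairs only forces \emph{all} interior vertices of the path to lie outside $U$ (if an interior vertex were in $U$, one of the two sub-pairs would be shorter and still bad), so there may be several off-$U$ vertices; and adding a single vertex $c$ to a disconnected $U$ need not connect it at all. Concretely, if $T$ is the path $v_1\!-\!v_2\!-\!v_3\!-\!v_4\!-\!v_5$ and the (hypothetical) segment is $U = \{v_1,v_5\}$, every choice of $c$ leaves $U \cup \{c\}$ disconnected, so hypothesis (2) cannot be applied to your $W$ and the argument stalls -- the contradiction in such a situation comes from $W = C$, which your construction never looks at. The repair is cheap and is surely what the authors have in mind: (2) $\Rightarrow$ (1) is immediate by taking $W = C$, and then (1) $\Rightarrow$ (3) follows from the standard tree fact that any $c$ on the path between $b, b' \in U$ lies on the path from $\text{top}(i)$ to $b$ or on the path from $\text{top}(i)$ to $b'$, whence single-peakedness on $T$ gives $c \succ_i b$ or $c \succ_i b'$ and so $c \in U$ by transitivity; with that replacement (or by restructuring the cycle as (1) $\Rightarrow$ (3) $\Rightarrow$ (2) $\Rightarrow$ (1), the last step being trivial), your proof is complete.
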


\noindent Given a profile $P$, we denote the set of all trees $T$ such that $P$ is 
single-peaked on $T$ by $\mathcal{T}(P)$.

\section{Egalitarian Chamberlin--Courant on Arbitrary Trees}

We start by presenting a simple algorithm for finding a winning committee under the egalitarian Chamberlin--Courant rule that works for preferences single-peaked on arbitrary trees.  
Our algorithm proceeds by finding a committee of minimum size that satisfies a given 
worst-case utility bound.

First, we show that the winner determination problem in the egalitarian case
can be reduced to the following 
variant of the \textsc{Hitting Set} problem, where the ground set is the vertex set of a tree $T$, 
and we need to hit a collection of connected subsets of $T$.  

\begin{definition}
	An instance of the \textsc{Tree Hitting Set} problem is given by a tree $T$ on a vertex set $C$, 
	a family $\mathcal C=\{C_1,\ldots,C_n\}$ of subsets of $C$ such that each $C_i$ is connected in $T$, and a target cover size $k\in {\mathbb Z}_+$.  
	It is a `yes'-instance if there is a subset of vertices $W\subseteq C$ with $|W|\le k$ 
	such that $W\cap C_i\neq \emptyset$ for $i=1,\ldots,n$, and a `no'-instance otherwise.
\end{definition}

\citet{guo2006exact} show that the \textsc{Tree Hitting Set} problem can be solved in polynomial time. Since they consider a dual formulation (in terms of set cover), we present an adaptation of the short argument here.
\begin{theorem}[\citet{guo2006exact}]
	\textsc{Tree Hitting Set} can be solved in polynomial time.
\end{theorem}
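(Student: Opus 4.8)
The plan is to design a greedy/dynamic-programming algorithm on the tree $T$ that processes vertices from the leaves toward the root. First I would root $T$ at an arbitrary vertex $\rho$, so that each connected set $C_i$ has a well-defined \emph{highest} vertex (the vertex of $C_i$ closest to $\rho$); call it $r_i$. The key observation is that a connected subtree $C_i$, viewed in the rooted tree, consists of $r_i$ together with some subset of its descendants, and once we decide to \emph{not} pick any vertex of $C_i$ strictly below $r_i$, the only remaining chance to hit $C_i$ is to pick $r_i$ itself. This suggests processing vertices in order of non-increasing depth and, whenever we encounter a vertex $v$ such that some still-unhit set $C_i$ has $r_i = v$, we are forced to add $v$ to our solution $W$ (because no ancestor of $v$ lies in $C_i$, and all of $C_i$'s other vertices, being descendants of $v$, have already been processed and rejected).

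The main steps are: (i) root $T$ and compute for each $C_i$ its apex $r_i$ and its depth; (ii) process the vertices of $T$ in non-increasing order of depth; (iii) maintain the set of sets not yet hit; when processing $v$, if there is an unhit $C_i$ with $r_i = v$, add $v$ to $W$ and mark every $C_j$ containing $v$ as hit; (iv) at the end, accept iff $|W| \le k$. For efficiency, ``which sets contain $v$'' and ``which sets have apex $v$'' can be precomputed, and the total work is polynomial in $|C| + \sum_i |C_i|$. The correctness argument for feasibility is immediate by construction: every $C_i$ gets hit, since at the moment we process $r_i$, if $C_i$ is still unhit then we add $r_i \in C_i$. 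The substance is optimality.

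The hard part will be the exchange argument showing the greedy solution has minimum size. I would argue that the algorithm's solution $W$ is contained in (or can be matched against) any feasible solution in a structured way: more precisely, I claim that at every point during the execution, the partial solution built so far can be extended to an optimal one. When the algorithm is forced to add $v$ because of an unhit set $C_i$ with apex $v$, any feasible solution must contain some vertex of $C_i$; since all vertices of $C_i$ other than $v$ are strictly deeper than $v$ and — by the invariant that the greedy has been ``ahead'' on all deeper vertices — a standard swap shows we may assume the feasible solution uses $v$ rather than a deeper vertex of $C_i$, without increasing its size and without un-hitting any already-hit set (any set containing a deeper vertex $u$ of $C_i$ also contains $v$, because its apex is an ancestor of $u$ and hence lies on the path from $u$ to $\rho$, which passes through $v$ as $v = r_i$ is the apex of the connected set $C_i \ni u$). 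Pushing this invariant through to termination yields $|W| \le |W^*|$ for any optimal $W^*$. I would then also note that the same algorithm, run with $k = \infty$, produces a \emph{minimum} hitting set, so the decision problem is solved by comparing its size to $k$; this is exactly the subroutine needed for egalitarian Chamberlin--Courant, where we binary-search over the utility bound and, for each candidate bound, build the family of top-initial segments that must be hit (each connected in $T$ by Proposition~\ref{prop:spt-equivalences}) and test whether a hitting set of size $\le k$ exists.
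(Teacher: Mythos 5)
Your algorithm is correct, and it is a genuinely different route from the paper's: the paper never roots the tree, but instead repeatedly examines a leaf $a$ with neighbor $b$ and argues that either some $C_i=\{a\}$ forces $a$ into the solution, or every set hit by $a$ is also hit by $b$, so $a$ can simply be deleted; your deepest-first ``forced apex'' greedy is the classic piercing algorithm for subtrees of a tree and also yields a minimum hitting set. However, the step you yourself flag as the substance -- the exchange argument -- is justified by a claim that is false as stated. It is not true that every set containing a strictly deeper vertex $u$ of $C_i$ also contains $v$: the apex of such a set is an ancestor of $u$, but it may lie strictly below $v$ on the path from $u$ to $\rho$. The simplest counterexample is a set $C_j=\{u\}$, or any $C_j$ contained entirely in the subtree rooted at a child of $v$. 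For such a $C_j$, swapping $u$ for $v$ in a feasible solution could, as far as your stated justification goes, un-hit $C_j$, so the swap does not go through as written.

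The argument is repairable, and the missing ingredient is exactly the processing order combined with the invariant you allude to but never use: maintain by induction an optimal solution $W^*$ that contains every vertex the greedy has added so far. If $C_j$ contains $u$ (a strict descendant of $v$) but not $v$, then the apex of $C_j$ is strictly deeper than $v$, hence was processed before $v$; at that moment $C_j$ must already have been hit by the greedy's partial solution (otherwise the greedy would have added that apex, which again hits $C_j$), and that greedy vertex lies in $W^*$ and is distinct from $u$ (since $u$ belongs to the still-unhit witness $C_i$), so it survives the swap. Only sets whose apex is at depth at most that of $v$ actually need $v$, and those do contain $v$ by connectivity. Alternatively, you can bypass the exchange entirely with a witness argument: charge each forced addition of $v$ to the unhit set $C_i$ with apex $v$; these witnesses are pairwise disjoint (a later witness, whose apex is no deeper than $v$, that shared a vertex with $C_i$ would by connectivity contain $v$ and hence already be hit), so any hitting set has size at least the number of greedy additions. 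Either fix completes your proof; the paper's leaf-pruning argument reaches the same conclusion with noticeably less machinery.
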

\begin{proof}
	Consider a vertex $a \in C$ that is a leaf of $T$, and let $b \in C$ be the (unique) 
	vertex that $a$ is adjacent to. Suppose that $a \in C_i$ for some $i$. Then, because $C_i$ is 
	a connected subset of $T$, we either have $C_i = \{a\}$ or $b \in C_i$.
	
	With this observation, we can now give a simple algorithm that constructs 
        a minimum hitting set: Consider a leaf vertex $a \in C$ adjacent to $b \in C$. 
        If there exists some $C_i \in \mathcal C$ with $C_i = \{a\}$, then any hitting set 
        must include $a$, so add $a$ to the hitting set under construction, remove $a$ 
        from $T$ and remove all copies of $\{a\}$ from $\mathcal C$. Otherwise, every set $C_i$ 
        that would be hit by $a$ is also hit by $b$, so any hitting set including $a$ 
        remains a hitting set when $a$ is replaced by $b$. Hence it is safe to delete $a$ 
        from $T$ and from each $C_i \in \mathcal C$. Now repeat the process on the smaller 
        instance constructed. Once all vertices have been deleted, return the constructed 
        hitting set, which is minimum by our argument.
\end{proof}

Now we show how to reduce our winner determination problem to the hitting set problem. 
Suppose we are given an instance of the \textsc{Egalitarian CC} problem 
consisting of a profile $P$, a tree $T$ on which $P$ is single-peaked, a target committee size $k$, and the bound $B$.
We construct a \textsc{Tree Hitting Set} instance as follows.

The ground set is the candidate set $C$, the tree $T$ is the tree with 
respect to which voters' preferences are single-peaked, and the target cover size 
equals the committee size $k$. 
For each $i\in N = \{1, \dots, n\}$, construct the set 
$$
C_i = \{a\in C : \mu(i,a)\ge B\}.
$$  
Since $\mu$ is monotone, the set $C_i$ is a top-initial segment of $i$'s preference order, 
i.e., is of the form $\{a\in C : a \succ_i b \}$ for some $b \in C$.
By Proposition~\ref{prop:spt-equivalences}, since $P$ is single-peaked on $T$, 
each set $C_i$ is connected in $T$, so we have constructed an instance 
of \textsc{Tree Hitting Set}.
Now note that for every set $W \in C$ we have
\[
\score_\mu^{\max}(P, W)\ge B \text{ if and only if }
W \cap C_i = W\cap \{a\in C : \mu(i,a)\ge B\}\neq \emptyset \text{ for all~$i$}.
\]
It follows that our reduction is correct.

Using this reduction and the algorithm for \textsc{Tree Hitting Set}, we can solve \textsc{Egalitarian CC} in polynomial time.

\begin{theorem}
	For profiles that are single-peaked on a tree, we can find a winning committee under the egalitarian Chamberlin--Courant rule in polynomial time.
\end{theorem}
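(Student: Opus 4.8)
The plan is to leverage the reduction to \textsc{Tree Hitting Set} just described, applying it once for each possible value of the egalitarian objective. First, if a suitable tree is not part of the input, I would run Trick's algorithm to compute in polynomial time some tree $T \in \mathcal{T}(P)$; since the egalitarian Chamberlin--Courant score depends only on $P$ and $W$ (and not on the choice of tree), any $T$ on which $P$ is single-peaked will do. Fix this $T$. Next, observe that $\score_\mu^{\min}(P,W)$ can only take values in the finite set $S = \{\mu(i,a) : i \in N,\ a \in C\}$, which has size at most $|N|\cdot|C|$, so it suffices to determine, for each $B \in S$, whether some size-$k$ committee achieves egalitarian score at least $B$, and then report a witness for the largest such $B$.

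For each threshold $B \in S$, I would build the \textsc{Tree Hitting Set} instance with ground set $C$, tree $T$, target $k$, and sets $C_i = \{a \in C : \mu(i,a) \ge B\}$. By monotonicity of $\mu$ each $C_i$ is a top-initial segment, hence connected in $T$ by Proposition~\ref{prop:spt-equivalences}, so the instance is well-formed (if some $C_i = \emptyset$, i.e.\ $B$ exceeds voter $i$'s top score, the instance is simply infeasible, reflecting that no committee attains score $B$). By the theorem of \citet{guo2006exact} we can compute a minimum hitting set $W_B$ in polynomial time, and by the correctness of the reduction there is a committee of size at most $k$ with $\score_\mu^{\min}(P,W) \ge B$ if and only if $|W_B| \le k$. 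Let $B^\star$ be the largest $B \in S$ with $|W_{B^\star}| \le k$; such a $B$ exists since the minimum element of $S$ yields $C_i = C$ for all $i$ and hence a hitting set of size $1 \le k$. Output $W_{B^\star}$, padded with arbitrary additional candidates until it has size exactly $k$ (possible since $k \le |C|$).

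For correctness I would argue two directions. Padding is harmless: if $W \subseteq W'$ then $\textup{top}(i,W')$ is ranked no lower than $\textup{top}(i,W)$ for every $i$, so $\mu(i,\textup{top}(i,W')) \ge \mu(i,\textup{top}(i,W))$ and hence $\score_\mu^{\min}(P,W') \ge \score_\mu^{\min}(P,W) \ge B^\star$. Conversely, no size-$k$ committee does better: its egalitarian score lies in $S$, so if it were some $B > B^\star$ it would in particular be a committee of size at most $k$ with $\score_\mu^{\min} \ge B$, contradicting the maximality of $B^\star$. Thus the padded $W_{B^\star}$ is a winning committee under the egalitarian Chamberlin--Courant rule. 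The total running time is polynomial: we solve $O(|N|\cdot|C|)$ \textsc{Tree Hitting Set} instances, each in polynomial time — and this could be sharpened to $O(\log(|N|\cdot|C|))$ instances via binary search, since $|W_B|$ is non-decreasing in $B$ (shrinking each $C_i$ can only enlarge the minimum hitting set, as any hitting set for the smaller family is one for the larger family).

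I do not expect a genuine obstacle here: the two substantive ingredients — polynomial-time solvability of \textsc{Tree Hitting Set} and the connectedness of top-initial segments on $T$ — are already in hand. The only points requiring care are (i) that the search over thresholds is finite and polynomial, handled by restricting to values actually attained by $\mu$, and (ii) reconciling the size-$\le k$ output of the hitting-set routine with the size-exactly-$k$ requirement in the definition of \textsc{Egalitarian CC}, handled by the padding argument above.
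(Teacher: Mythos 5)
Your proposal is correct and follows essentially the same route as the paper: reduce to \textsc{Tree Hitting Set} via the connectedness of the top-initial segments $C_i=\{a: \mu(i,a)\ge B\}$ (Proposition~\ref{prop:spt-equivalences}) and apply the polynomial-time algorithm of \citet{guo2006exact}. The only additions are details the paper leaves implicit (iterating or binary-searching over the polynomially many attainable thresholds, obtaining a tree via Trick's algorithm, and padding to size exactly $k$), all of which you handle correctly.
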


\section{Hardness of Utilitarian Chamberlin--Courant on Arbitrary Trees}\label{sec:hard}

For preferences single-peaked on a \emph{path}, the utilitarian version of the Chamberlin--Courant 
rule is computationally easy: 
a winning committee can be computed using a dynamic programming
algorithm \cite{betzler2013computation}. 
While we are able to generalize this algorithm to work for some other trees 
(see Section~\ref{sec:few-leaves}), it is not clear how to extend it to arbitrary trees. 
Indeed, we will now show that for the utilitarian Chamberlin--Courant rule 
the winner determination problem remains NP-complete for 
preferences single-peaked on a tree. This hardness result holds for 
the Borda scoring function, and applies
even to trees that have diameter 4 and pathwidth 2.

We have defined the Borda scoring function as the vector $(0, -1, \dots, -(m - 1))$. 
Recall that we defined $\cost_\mu^+(P, W) = -\score_\mu^+(P, W)$, and we will use costs in the following proof to avoid negative numbers.

\begin{theorem}\label{thm:hard-borda}
	Given a profile $P$ that is single-peaked on a tree, 
        a target committee size $k$, and a target score $B$, it is \textup{NP}-complete 
        to decide whether there exists a committee of size $k$ with score at least $B$ 
        under the utilitarian Chamberlin--Courant rule with the Borda scoring function.
        Hardness holds even restricted to profiles single-peaked on a tree 
        with diameter 4 and pathwidth 2.
\end{theorem}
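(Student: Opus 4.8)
Membership in \textup{NP} is clear: a committee $W$ with $|W|=k$ is a certificate, and $\score_\mu^+(P,W)$ can be evaluated in polynomial time. For hardness I would reduce from \exactcover{} (exact cover by $3$-sets): given a ground set $X$ with $|X|=3q$ and a family $\mathcal S=\{S_1,\dots,S_m\}$ of $3$-element subsets, decide whether some subfamily of size $q$ partitions $X$. The target committee size will be (essentially) $q$, and the reduction is arranged so that an exact cover corresponds to a committee that represents every voter cheaply.

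The construction produces a rooted tree $T$ of depth $2$ --- hence of diameter $4$ --- which automatically has pathwidth at most $2$: one obtains a path decomposition by keeping the root in every bag and, for each of its children $u$ in turn, listing the bags $\{\mathrm{root},u,w\}$ over the children $w$ of $u$. The candidate set consists of a root candidate, for each set $S_j$ a block of candidates hanging below the root (a ``set-gadget''), candidates encoding the three element-of-$S_j$ incidences, and a large number of padding leaves used to blow up $|C|$ (so that the Borda penalty of an unrepresented voter dwarfs every other quantity in the instance). For each element and for each set I would add a group of many identical voters whose preference order is completed so as to be single-peaked on $T$; by Proposition~\ref{prop:spt-equivalences} it is enough to choose the completion so that every top-initial segment is connected in $T$, which is routine in a depth-$2$ tree (a voter peaked at a child $u$ of the root may rank $u$, then the children of $u$ in any order, then the root, then the remaining subtrees each below its own root). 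The multiplicities of the voter groups and the bound $B$ are fixed last, once the case analysis below dictates the required separations.

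The forward direction is a direct computation: the committee consisting of the set-gadget representatives of an exact cover represents every voter within a bounded number of positions, so its utilitarian score meets $B$. The converse is where the work lies. From a committee $W$ of the prescribed size with $\score_\mu^+(P,W)\ge B$ one first argues that $W$ cannot contain padding leaves, nor the root, nor candidates from a set-gadget in a ``partial'' way --- by the choice of multiplicities, any such waste of a committee slot leaves some voter group represented only by a near-bottom candidate and forces the score below $B$. Hence $W$ is (up to the waste-free part) a union of $q$ complete set-gadgets, i.e.\ a choice of $q$ sets from $\mathcal S$; a further group of voters ensures these $q$ sets cover every element, and $q$ sets of size $3$ covering $3q$ elements are necessarily disjoint, giving an exact cover. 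The main obstacle is calibrating the gadget so that both of these steps go through \emph{while the profile remains single-peaked on $T$}: single-peakedness forces the root to sit fairly high in the ranking of every voter peaked inside a level-$1$ subtree, so one must insert enough padding \emph{inside} the subtrees to push the root far enough down that electing it can never ``half-represent'' the whole electorate more cheaply than an honest exact cover. Balancing this padding, the sizes of the set-gadgets, the multiplicities, and $B$ is the delicate part of the argument; once it is done, diameter $4$ and pathwidth $2$ are immediate from the shape of $T$.
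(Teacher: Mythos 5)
Your sketch stops short of the actual content of the theorem. You correctly identify the central obstacle for Borda on a depth-$2$ tree: every voter peaked inside a level-$1$ subtree must rank the root above everything outside that subtree, so a single committee slot spent on the root represents the whole electorate at small Borda cost, and the reduction must make this unprofitable. But you then defer exactly this point ("balancing this padding, the sizes of the set-gadgets, the multiplicities, and $B$ is the delicate part"), and the fix you gesture at --- padding \emph{inside} the subtrees to push the root down --- is not worked out and is not obviously workable: it inflates the subtrees whose candidates you later want the committee to absorb, and it changes every voter's Borda penalties, so nothing in the sketch certifies that a consistent choice of multiplicities and $B$ exists. Since the whole difficulty of this statement (left open in the earlier conference version of the paper) is precisely this calibration, what you have is a plan, not a proof.

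There is also a structural inconsistency in the converse direction. You say the committee is forced to be "a union of $q$ complete set-gadgets" by giving huge multiplicities to voter groups peaked inside gadgets; but identical high-multiplicity groups exist for \emph{every} gadget, so they would force candidates from every gadget into the committee, not only from the $q$ gadgets of a cover --- or, if the multiplicities are not that large, they force nothing. The paper's reduction (from \textsc{Vertex Cover} rather than \exactcover{}) resolves this by decoupling the two roles: for each vertex $u_i$ it attaches a leaf $z_i$ below $y_i$ and adds $M$ voters topped by $z_i$, so that \emph{all} $p$ candidates $z_i$ are forced into the size-$(p+t)$ committee, and only the remaining $t$ slots select a vertex cover among the $y_i$; the $N_1$ groups (with multiplicity $5q$ each, ranking $y_i\succ z_i\succ a$) then pay cost exactly $1$ when $y_i$ is missing, and the bound $B=(5q)(p-t)+2q$ is tight enough that spending a slot on the center $a$ (which would serve all edge-voters for free) always loses more on the $N_1$ side than it gains --- no intra-subtree padding is needed at all. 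Your proposal lacks an analogue of this forced-leaf mechanism and of the explicit cost accounting around the center, and without them the converse direction does not go through as described.
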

\begin{proof}
	We will reduce the classic \textsc{Vertex Cover} problem
	to \textsc{Utilitarian CC}. 
	An instance of \textsc{Vertex Cover} is given by 
	an undirected graph $G=(V, E)$ and a positive integer $t$.
	It is a `yes'-instance if it admits a vertex cover of size $t$, 
	i.e., a subset of vertices $S\subseteq V$ such that for each $\{u, v\}\in E$
	we have that $u\in S$ or $v\in S$. This problem is known to be NP-hard \cite{karp1972reducibility}.
	
	Given an instance $(G, t)$ of \textsc{Vertex Cover} such that $G=(V, E)$,
	$V=\{u_1, \dots, u_p\}$ and $E=\{e_1, \dots, e_q\}$,
        we construct an instance 
	of \textsc{Utilitarian CC} as follows. 

	Let $M=5pq$; intuitively, $M$ is a large number.
	We introduce a candidate $a$, two candidates $y_i$ and $z_i$ 
        for each vertex $u_i \in V$, and $M$ dummy candidates. 
        Formally, we set 
 	$$
	Y=\{y_1, \dots, y_p\}, \quad Z=\{z_1, \dots, z_p\}, \quad
        D = \{ d_1,\dots,d_M \},
	$$ 
	and define the candidate set to be
	$$C=\{a\} \cup Y \cup Z\cup D.$$
	We set the target committee size to be $k = p + t$.
	
	We now introduce the voters, who will come in three types $N = N_1 \cup N_2 \cup N_3$.
	
	\[\def\arraystretch{1.3}
	\begin{array}{cccccccccccccccccccccc}
	\toprule
	\multicolumn{3}{c}{N_1} & & \multicolumn{3}{c}{N_2} & & \multicolumn{3}{c}{N_3} \\
	\cmidrule{1-3} \cmidrule{5-7} \cmidrule{9-11} 
	5q & \cdots & 5q      && 1 & \cdots & 1 && M & \cdots & M \\
	\cmidrule{1-3} \cmidrule{5-7} \cmidrule{9-11} 
	y_1 &              & y_p          && a              && a                  &&  z_1   &&   z_p  \\
	z_1 &              & z_p          && y_{j_{1,1}}    && y_{j_{q,1}}        &&  y_1   &&   y_p   \\
	a   &              & a            && y_{j_{1,2}}    && y_{j_{q,2}}        &&  a     && a \\
	\smash\vdots &     & \smash\vdots && d_1            && d_1                &&  \smash\vdots && \smash\vdots \\
	&         &                       && \smash\vdots   && \smash\vdots       &&  \\[-2pt]
	&         &                       && d_M            && d_M                &&  \\[-2pt]
	&         &                       && \smash\vdots   && \smash\vdots       &&  \\
	\bottomrule
	\end{array}
	\]
	
	\begin{itemize}
		\item The set $N_1$ consists of $5q$ identical voters for each $u_i\in V$: 
                      they rank $y_i$ first, $z_i$ second, and $a$ third, followed by 
                      all other candidates as specified below. Intuitively,
		      the purpose of these voters 
                      is to ensure that good committees contain 
                      representatives $y_i$ of vertices in $V$.
		\item The set $N_2$ consists of a single voter for each edge $e_j\in E$: 
                      this voter ranks $a$ first, followed by the two candidates 
                      from $Y$ that correspond to the endpoints of $e_j$ (in an arbitrary order),
                      followed by the dummy candidates $d_1,\dots,d_M$, 
                      followed by all other candidates as specified below. 
                      The purpose of these voters is to ensure that every edge is covered 
                      by one of the vertices that correspond to a committee member, 
                      and to incur a heavy penalty of $M$ if the edge is uncovered.
		\item The set $N_3$ is a set of $M$ identical voters for each $u_i\in V$ 
                      who all rank $z_i$ first, $y_i$ second, and $a$ third, followed by all other 
                      candidates as specified below. 
                      The purpose of these voters is to force every good committee to include 
                      \emph{all} the $z_i$ candidates. 
	\end{itemize}
	
	We complete the voters' preferences so that the resulting profile is 
        single-peaked on the following tree:
	\[
	\xymatrix{
		&&& a  \ar@{-}[dlll] \ar@{-}[dl] \ar@{-}[dr] \ar@{-}[drrr]
		\\
		y_1 \ar@{-}[d] & \cdots & y_p \ar@{-}[d] & \quad & d_1 & \cdots & d_M \\
		z_1 & \cdots & z_p
	}
	\]
	This tree is obtained by taking a star with center $a$ and leaves $Y\cup D$, 
        and then attaching $z_i$ as a leaf onto $y_i$ for every $i = 1,\dots,p$. 
	It is easy to see that it has diameter~4 and pathwidth~2 (with bags $\{a,y_1,z_1\}, \dots, \{a,y_p,z_p\}, \{a,d_1\}, \dots, \{a,d_M\}$). 
        We will now specify how to complete each vote in our profile
	to ensure that the resulting profile is single-peaked on this tree. 
        Inspecting the tree, we see that it suffices to ensure that 
        for each $i=1, \dots, p$ it holds that in all votes where the positions 
        of $y_i$ and $z_i$ are not given explicitly, candidate $y_i$ is ranked above~$z_i$.
	
	This completes the construction of the profile $P$ with voter set $N$ and candidate set $C$.
	We use the Borda scoring function $\vecs = (0, -1, -2, \dots)$,
	and set the cost bound to be $B=(5q)(p-t)+2q$ and ask whether there exists a committee with  $\cost_\mu ^+(P, W) \le B$. Note that, by construction, $M > B$. 
	This completes the description of our instance of the \textsc{Utilitarian CC} 
	problem.
	Intuitively, the `correct committee' we have in mind consists of all $z_i$ candidates 
	(of which there are $p$) and a selection of $y_i$ candidates that corresponds to a vertex cover 
	(of which there should be $t$), should a vertex cover of size $t$ exist. 
	Now let us prove that the reduction is correct.
	
	Suppose we have started with a `yes'-instance of \textsc{Vertex Cover}, 
        and let $S$ be a collection of $t$ vertices that form a vertex cover of $G$. 
	Consider the committee $W=Z\cup\{y_i : u_i\in S\}$; note that $|W| = p + t = k$.
	The voters in $N_3$ and $5qt$ voters in $N_1$ have their most-preferred 
        candidate in $W$, so they contribute $0$ to the cost of $W$.
	The remaining $(5q)(p-t)$ voters in $N_1$ contribute $1$ to the cost of $W$, 
        since $z_i\in W$ for all $i$. 
	Further, each voter in $N_2$ contributes at most $2$ to the cost of $W$.
	Indeed, the candidates that correspond to the endpoints of the 
        respective edge are ranked in positions 2 and 3
        in this voter's ranking, and since $S$ is a vertex cover for $G$,
	one of these candidates is in $W$. We conclude that $\cost^+_\mu(P, W)\le (5q)(p-t)+2q = B$.
	
	Conversely, suppose there exists a committee $W$ of size $k = p+t$
	with $\cost^+_\mu(P, W)\le B$.
	Note first that $W$ has to contain all candidates in $Z$: otherwise, 
        there are $M$ voters in $N_3$
	who contribute at least $1$ to the cost of $W$, 
        and then the utilitarian Chamberlin--Courant cost of $W$
	is at least $M > B$. Thus $Z\subseteq W$. 
	We will now argue that $W\setminus Z$ is a subset of $Y$, 
	and that $S' = \{u_i : y_i\in W\setminus Z\}$ is a vertex cover of $G$.
	Suppose that $W\setminus Z$ contains too few candidates from $Y$, 
        i.e., at most $t-1$ candidates from $Y$. Then $N_1$ contains at least $(5q)(p-(t-1))$
	voters who contribute at least $1$ to the cost of $W$, so 
	$\cost^+_\mu(P, W)\ge (5q)(p-t+1) > (5q)(p-t) + 2q = B$, a contradiction.
	Thus, we have $W\setminus Z\subseteq Y$. 
	Now, suppose that $S'$ is not a vertex cover for~$G$.
	Let $e_j\in E$ be an edge that is not covered by $S'$, 
	and consider the voter in $N_2$ corresponding to $e_j$. Clearly, 
	none of the candidates ranked in positions $1, \dots, M + 3$ by this voter
	appear in $W$. Thus, this voter contributes 
	more than $M$ to the cost of $W$, so the total cost of $W$ is more than $M > B$, 
        a contradiction.
	Thus, a `yes'-instance of  \textsc{Utilitarian CC} corresponds
	to a `yes'-instance of \textsc{Vertex Cover}.
\end{proof}

In the appendix, we modify this reduction to establish 
that {\sc Utilitarian CC} remains hard even on trees with maximum 
degree 3 (Theorem~\ref{thm:hard-deg3}); intuitively, it suffices to `clone' candidate $a$. 

\section{Utilitarian Chamberlin--Courant on Trees with Few Leaves}
\label{sec:few-leaves}

The hardness result in Section~\ref{sec:hard} shows that single-peakedness on trees 
is not a strong enough assumption to make {\sc Utilitarian CC} tractable. 
However, we will now show that it is possible to achieve tractability by
placing further constraints on the shape of the underlying tree. 

Specifically, 
in this section, we present an algorithm for the utilitarian Chamberlin--Courant rule 
whose running time is polynomial
on any profile that is single-peaked on a tree with a constant number of leaves. 
The algorithm proceeds by dynamic programming. It can be viewed as a generalization 
of the algorithm due to \citet{betzler2013computation} for preferences single-peaked 
on a path (i.e., a tree with two leaves).

\begin{theorem}\label{thm:fewleaves}
	Given a profile $P$ with $|C|=m$ and $|N|=n$,
	a tree $T$ with~$\lambda$ leaves such that $P$
	is single-peaked on $T$, and a target committee size $k$,
        we can find a winning committee
	under the utilitarian Chamberlin--Courant rule
	in time 
	$\mathrm{poly}(n, m^\lambda, k^\lambda)$.
\end{theorem}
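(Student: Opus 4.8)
The plan is to generalize the dynamic-programming algorithm of \citet{betzler2013computation} from paths to trees with $\lambda$ leaves. The key structural fact driving the algorithm is Proposition~\ref{prop:spt-equivalences}: if $P$ is single-peaked on $T$, then for every voter $i$ and every candidate $a$, the top-initial segment $\{b : b \succ_i a\}$ is connected in $T$. Equivalently, once we fix which candidates are in the committee $W$, each voter's representative $\textup{top}(i,W)$ can be identified from only a small amount of "local" information about how $W$ intersects the part of the tree near $i$'s peak. The challenge, compared to the path case, is that a path has a natural linear scan order with a one-dimensional "frontier," whereas a tree with $\lambda$ leaves has a branching structure. The remedy is to root $T$ at an internal vertex (or process it leaf-by-leaf toward a root) and let the DP state record, for each of the up to $\lambda$ "directions/branches" currently active at the frontier, the nearest already-decided committee member in that direction. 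This is why the running time is polynomial in $m^\lambda$ and $k^\lambda$ rather than in $m$ and $k$ alone.

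\medskip
\noindent\textbf{Step 1: Reduce to a canonical scan.} First I would root $T$ and define a traversal (e.g.\ a DFS or a leaf-peeling order) that processes the vertices of $C$ one at a time. At any point in the scan, the "boundary" between processed and unprocessed vertices consists of at most $\lambda$ edges — because a tree with $\lambda$ leaves has at most $\lambda$ edge-disjoint root-to-leaf paths, so any antichain-style cut into "seen" and "unseen" parts crosses at most $\lambda$ edges. (I would make this precise by choosing the traversal so that the processed set is always connected, hence its edge-boundary corresponds to distinct subtrees hanging off it, and the number of such subtrees stays bounded by $\lambda$ throughout.)

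\medskip
\noindent\textbf{Step 2: Define the DP state.} The state at a given point of the scan consists of: (i) how many committee members have been placed so far among the processed vertices (an integer in $\{0,\dots,k\}$); and (ii) for each of the at most $\lambda$ boundary edges, a pointer to the closest committee member placed so far in the processed subtree attached through that edge, recorded as its distance (or a bounded summary thereof) — roughly $(m+1)^\lambda$ possibilities, or $(k+1)^\lambda$ if we index committee members instead. Crucially, by the connectivity property of top-initial segments, each voter $i$'s representative lies on the path from $\textup{top}(i)$ toward $W$, so whenever the scan "closes off" the subtree containing $\textup{top}(i)$, the utility $\mu(i,\textup{top}(i,W))$ is determined by the boundary pointers already in the state — no future information is needed. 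This is the key lemma: the contribution of voter $i$ to $\score_\mu^+$ becomes computable (and can be added to the accumulated objective) at a well-defined moment in the scan, using only the current state.

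\medskip
\noindent\textbf{Step 3: Transitions and read-off.} At each scan step we branch on whether the newly processed vertex is put into $W$ or not (updating the count and the relevant boundary pointer), and we add the utility of any voter whose peak-subtree just got finalized. The number of states is $\mathrm{poly}(n)\cdot(k+1)\cdot O(m^\lambda)$ (or $k^\lambda$), each with $O(1)$ branching times $O(n)$ work to account for voter contributions, so the total running time is $\mathrm{poly}(n, m^\lambda, k^\lambda)$ as claimed; the optimal committee is recovered by standard backtracking through the DP table. \textbf{The main obstacle} I anticipate is Step~1–2: setting up the traversal so that the boundary genuinely stays bounded by $\lambda$ \emph{and} so that every voter's representative is finalized exactly when their peak-subtree is closed, handling carefully the voters whose peak is a high-degree internal vertex (where several branches merge). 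Getting the bookkeeping of boundary pointers right — especially updating them correctly when two processed subtrees merge at an internal vertex — is where the real care is needed, but it is a finite case analysis, not a deep difficulty.
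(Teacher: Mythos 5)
Your high-level plan---a tree DP whose state has size exponential in $\lambda$, capturing the interaction of the committee across at most $\lambda$ ``directions''---is in the same spirit as the paper's proof, which roots $T$ and works with anti-chains of at most $\lambda$ vertices. However, the concrete state you propose does not carry enough information, and your key finalization lemma fails. Recording the nearest placed committee member per boundary edge ``as its distance (or a bounded summary thereof)'' cannot work: single-peakedness on a tree only constrains comparisons along paths emanating from a voter's peak, and says nothing about candidates sitting in different branches, so a voter's favourite committee member inside a processed component need not be the one closest to the boundary, and two voters may disagree about which of two incomparable members they prefer. Even storing the \emph{identity} of one member per boundary edge is insufficient: if the attachment vertex of a processed component is not itself in the committee, that component can contain several committee members none of which lies on the path from the attachment vertex to another, and which of them represents a given outside voter is voter-specific. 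To cope with this you would have to store the whole set of such maximal (``unshielded'') members---an anti-chain of total size at most $\lambda$ across the frontier---which is exactly the object the paper's recurrence enumerates.

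The more serious gap is the claim that $\mu(i,\textup{top}(i,W))$ ``is determined by the boundary pointers already in the state'' once the part of the tree containing $\textup{top}(i)$ is closed off. This is false without further conditioning: a committee member placed later, just outside the closed-off part, can still end up being $i$'s representative (already on a path this happens---in the algorithm of \citet{betzler2013computation} a voter is charged only when the next committee member beyond her peak is fixed, not when her peak is scanned), and whether $i$ prefers that future member to the best member inside her part cannot be deduced from a per-edge pointer. The paper resolves this by defining the subproblem value $M(r,\ell)$ as the best score achievable in the subtree $T_r$ by a committee of size $\ell$ that \emph{contains} the root $r$; conditioning on $r\in W$ shields the subtree in both directions (voters with peak in $T_r$ prefer $r$ to everything outside $T_r$, and voters outside prefer $r$, or the relevant anti-chain roots, to everything strictly inside), so every voter's contribution can be computed locally. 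The recurrence then maximizes over anti-chains of at most $\lambda$ subtree roots and over the ways to split the remaining $\ell-1$ slots among them, giving the $\mathrm{poly}(n,m^\lambda,k^\lambda)$ bound. Your proposal is missing this conditioning (or an equivalent device), and without it the bookkeeping you defer to as ``a finite case analysis'' does not close.
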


	\noindent
	\emph{Proof.}\,
	We use dynamic programming to find a committee
	of size $k$ that maximizes the utilitarian Chamberlin--Courant score.
	
	\begin{wrapfigure}[8]{r}{0.37\linewidth}
		\centering
		\vspace{-15pt}
		\begin{tikzpicture}
		[n/.style={circle, fill=black!80, draw=none, inner sep=1.8pt},
		a/.style={circle, draw, inner sep=2.8pt}]
		\node (rlabel) at (0.1, 0.35) {$r^*$};
		\node[n] (r) at (0,0) {};
		
		\node[n] (a) at (-2,-1) {};
		\node[n] (b) at (0,-1) {};
		\node[n] (c) at (2,-1) {};
		
		\node[n] (d) at (-2.5,-2) {};
		\node[n] (e) at (-2,-2) {};
		\node[n] (f) at (-1.5,-2) {};
		
		\node[n] (g) at (-0.25,-2) {};
		\node[n] (h) at (0.25,-2) {};
		
		\node[n] (i) at (1.5,-2) {};
		\node[n] (j) at (2,-2) {};
		\node[n] (k) at (2.5,-2) {};
		
		\node[a] (d-) at (-2.5,-2) {};
		\node[a] (e-) at (-2,-2) {};
		\node[a] (b-) at (0,-1) {};
		\node[a] (k-) at (2.5,-2) {};
		
		\graph{ 
			(r) -- {(a), (b), (c)}; 
			(a) -- {(d), (e), (f)}; 
			(b) -- {(g), (h)};
			(c) -- {(i), (j), (k)}; 
		};
		\end{tikzpicture}
		\caption{An anti-chain}
		\label{fig:anti-chain}
	\end{wrapfigure}
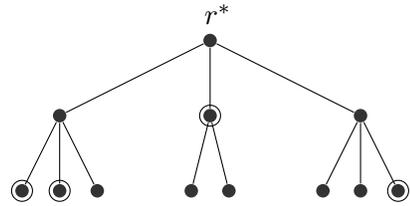
	We pick an arbitrary vertex $r^*$ to be the root of~$T$.
	This choice induces a partial order $\succ$ on $C$: we set 
	$a\succ b$ if $a$ lies on the (unique) path from $r^*$ to~$b$ in~$T$.
	Thus, $r^* \succ a$ for every $a \in C \setminus \{r^*\}$.
	A set $A\subseteq C$ is said to be an \emph{anti-chain}
	if no two elements of $A$ are comparable with respect to~$\succ$. 
	Figure~\ref{fig:anti-chain} on the right provides an example; 
        if we added the left child of $r^*$ to the set, it would no longer be an anti-chain.
	Observe that for 
	every subset of $C$ its set of maximal elements 
	with respect to $\succ$ forms an anti-chain.  
	Note also that every two ancestors of a leaf are comparable with respect to $\succ$.
        Hence, if $a$ and $b$ belong to an anti-chain $A\subseteq C$
	and $c$ is a leaf of $T$, then it cannot be the case that both 
	$a$ and $b$ are ancestors of $c$. Therefore, $|A|\le \lambda$.
	
	Given a vertex $r$, let $T_r$ be the subtree of $T$ rooted at $r$.
	The vertex set of $T_r$ is $C_r = \{r\}\cup\{a : r\succ a\}$. 
	Let $N_r = \{ i \in N : \textup{top}(i) \in C_r\}$ be the set of all voters whose most-preferred candidate
	belongs to $C_r$.
	Let $P_r$ be the profile obtained from $P$ by restricting the candidate
	set to $C_r$ and the voter set to $N_r$. 
	For each $r\in C$ and each $\ell=1, \dots, k$ let 
	\[ 
        M(r, \ell) = \max \big\{ \score_\mu^+(P_r, W) : 
        W \subseteq C_r \text{ with } |W| = \ell \text{ and } r \in W \big\} 
       \]
	be the highest Chamberlin--Courant score obtainable in $P_r$ by a committee from $C_r$ of size at most $\ell$, subject to $r$ being selected.
	
	Suppose that we have computed these quantities for all descendants
	of $r$. We will now explain how to compute them for $r$.
	Let $W \subseteq C_r$ be an optimal committee for $P_r$ that has size $\ell$ 
        and includes $r$, so that $\score_\mu^+(P_r, W) = M(r, \ell)$.
	Let $A=\{r_1, \dots, r_s\}$ be the set of maximal elements of $W\setminus \{r\}$ with respect to $\succ$
	and let $\ell_j=|W\cap C_{r_j}|$ for $j=1, \dots, s$; we have $\ell_1+\dots+\ell_s=\ell-1$.
	Since $P_r$ is single-peaked on $T$, for each $j=1, \dots, s$ it holds that each voter in $N_{r_j}$ 
        is better represented by $r_j$ than by any candidate not in $C_{r_j}$. 
        Thus, the contribution of voters in $N_{r_j}$ to the total score $M(r, \ell)$ of $W$ 
        is given by $\score_\mu^+(P_{r_j}, W \cap C_{r_j})$. In fact, this quantity must equal 
        $M(r_j, \ell_j)$, since otherwise we could replace the candidates in $W \cap C_{r_j}$ 
        by an optimizer of $M(r_j, \ell_j)$, thereby increasing the score of $W$, 
        which would be a contradiction.
	On the other hand, 
	consider a voter $i$ in $N_r\setminus(N_{r_1}\cup\dots\cup N_{r_s})$. 
	Since $P_r$ is single-peaked on $T$,
	for each $j=1, \dots, s$ it holds that candidate
	$r_j$ is a better representative for $i$ than any other candidate in $C_{r_j}$. 
	Thus, voter $i$'s most-preferred
	candidate in $W$ must be one of $r,r_1, \dots, r_s$.
	
	This suggests the following procedure for computing $M(r, \ell)$.
	The case $\ell=1$ is straightforward, as the unique optimal committee in this 
        case is $\{r\}$. For $\ell>1$,
	let $\mathcal{T}_r$ be the set of all anti-chains in $T_r$.
	A \emph{$t$-division scheme} for an anti-chain $A=\{r_1,\dots,r_s\}\in\mathcal{T}_r$
	is a list $L = (\ell_1, \dots, \ell_s)$ such that $\ell_j\ge 1$ for all $j=1, \dots, s$
	and $\ell_1+\dots+\ell_s=t$. We denote by $\mathcal{L}^A_t$ the set of all $t$-division
	schemes for $A$.
	Now, for every anti-chain $A=\{r_1, \dots, r_s\}\in \mathcal{T}_r\setminus\{\{r\}\}$ 
        and every $(\ell-1)$-division scheme
	$L=(\ell_1,\dots, \ell_s)\in\mathcal{L}^A_{\ell-1}$, we set 
	$N'_r = N_r\setminus(N_{r_1}\cup\ldots\cup N_{r_s})$
	and 
	\[
	M(A, L) = \sum_{j=1}^s M(r_j, \ell_j)+\sum_{i\in N'_r} \mu(i, \textup{top}(i, A\cup\{r\})).
	\]
	We then have $M(r, \ell)=\max_{A\in \mathcal{T}_r\setminus\{\{r\}\},L\in \mathcal{L}^A_{\ell-1}}M(A, L)$,
	where we maximize
	over all anti-chains in $T_r$ except $\{r\}$ and over all ways of dividing the $\ell-1$ slots
	among the elements of the anti-chain.
	The base case for this recurrence corresponds to the case where 
	$r$ is a leaf, 
	and is easy to deal with. 
	
	Now, the score of an optimum Chamberlin--Courant committee containing $r^*$
	is $M(r^*, k)$. We can repeat the algorithm for all possible choices of $r^*$. Then, the optimum Chamberlin--Courant score is the highest value of $M(r^*, k)$ that we have encountered.
	
	
	We have argued that the size of each anti-chain is at most~$\lambda$.
	Therefore, to calculate each $M(r, \ell)$, we enumerate at most $m^\lambda$ anti-chains
	and at most $k^\lambda$ divisions. This calculation needs to be performed for every vertex 
	$r$ (proceeding from the leaves towards the root) and for each $\ell\in [k]$.
	This establishes our bound on the running time.
	\qed
\smallskip

Notice that the time bound in Theorem~\ref{thm:fewleaves} 
implies that our problem is in XP with respect to 
the number $\lambda$ of leaves in the underlying tree. Whether there is an FPT algorithm for 
this parameter, or even for the combined parameter $(k, \lambda)$, remains 
an open problem.

\section{Utilitarian Chamberlin--Courant on Trees with Few Internal Vertices}\label{sec:few-internal}

Consider the star with center candidate $z$ and leaf candidates $c_1,\dots,c_7$. 
Which preference orders are single-peaked on this tree?
\[
\begin{tikzpicture}
\graph {
	subgraph I_n[n=7, V={$c_1$,$c_2$,$c_3$,$c_4$,$c_5$,$c_6$,$c_7$}, clockwise] -- {z/$z$}
};
\end{tikzpicture}
\]
A ranking could begin with $z$. After $z$, 
we can rank the other candidates in an arbitrary order without violating single-peakedness. But 
suppose we begin the ranking with a leaf candidate such as $c_1$. Then $z$ must be the 
second candidate, because the set consisting of the top two candidates must be connected in the 
tree. After ranking $c_1$ and $z$, we can then order the remaining candidates arbitrarily 
without violating single-peakedness. Thus, the rankings that 
are single-peaked on the star are precisely the rankings in which the center vertex is 
ranked first or second.

\begin{proposition}
	\label{prop:spt-star}
	A preference profile is single-peaked on a star if and only if there exists a candidate that every voter ranks in first or second position.
\end{proposition}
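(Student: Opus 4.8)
The plan is to prove both directions of the equivalence directly from the definition of single-peakedness on a tree, using the characterization already recorded in Proposition~\ref{prop:spt-equivalences}, namely that a profile is single-peaked on $T$ if and only if every top-initial segment $\{b \in C : b \succ_i a\}$ is connected in $T$.

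For the ``only if'' direction, I would fix a star $T$ with center $z$ and assume $P$ is single-peaked on $T$. Take any voter $i$ and suppose, for contradiction, that $z$ is neither first nor second in $\succ_i$; then there are two distinct leaves $c, c'$ with $c \succ_i c' \succ_i z$, so that $c' \succ_i z$, and hence the top-initial segment $\{b \in C : b \succ_i z\}$ contains $c$ and $c'$ but not $z$. But in a star, the only path between two distinct leaves $c$ and $c'$ passes through the center $z$, so any connected set containing both $c$ and $c'$ must contain $z$. This contradicts connectedness of that top-initial segment, so $z$ must be in position 1 or 2 for voter $i$; since $i$ was arbitrary, $z$ works for everyone.

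For the ``if'' direction, suppose there is a candidate $z$ that every voter ranks first or second, and let $T$ be the star with center $z$ and the remaining candidates as leaves. I would verify the third condition of Proposition~\ref{prop:spt-equivalences}: for any voter $i$ and any candidate $a$, the set $U = \{b \in C : b \succ_i a\}$ is connected in $T$. In a star, a vertex subset is connected exactly when it is a single vertex, or it contains the center $z$; so I need to show that if $|U| \ge 2$ then $z \in U$. If $|U| \ge 2$, then $a$ is in position at least $3$, so both of $i$'s top two candidates lie in $U$; since one of those two is $z$, we get $z \in U$, as required. Hence $P$ is single-peaked on $T$, and in particular single-peaked on a star.

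I do not anticipate a genuine obstacle here; the only thing to be careful about is correctly stating the elementary fact that connected subsets of a star are precisely the singletons together with the sets containing the center, and handling the small edge cases ($|U| \le 1$, or $m \le 2$, where every ranking is trivially single-peaked on the star). One could alternatively argue ``if'' by exhibiting the completions explicitly as in the paragraph preceding the proposition, but routing everything through Proposition~\ref{prop:spt-equivalences} keeps the argument short and uniform.
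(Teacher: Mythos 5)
Your proof is correct and follows essentially the same route as the paper, which argues informally (in the paragraph preceding the proposition) that a ranking is single-peaked on the star exactly when the center is ranked first or second, using precisely the connectivity of top-initial segments that you invoke via Proposition~\ref{prop:spt-equivalences}. Your write-up is just a more formal rendering of that argument, including the observation that connected subsets of a star are the singletons and the sets containing the center.
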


This observation implies that, in some sense, the restriction of being single-peaked on a tree 
does not give us much information. For example, consider the problem of computing an optimal 
Kemeny ranking, i.e., a ranking that minimizes the sum of Kendall tau distances to the rankings
in the input profile. This problem is NP-hard in general \cite{dodgson-hard1}, and we can easily 
see that it remains hard for preferences single-peaked on a star. Indeed, we can transform
a general instance of this problem into one that is single-peaked on a star by adding 
a new candidate that is ranked in the first position by every voter; the resulting problem 
is clearly as hard as the original one.

For some other problems, though, the restriction to stars makes the problem easy. In 
particular, this is the case for the utilitarian Chamberlin--Courant rule with the Borda 
scoring function. To see this, note that it will often be a good idea to include the candidate 
who is the center vertex of the star in the committee. Once we have done so, every voter is 
already quite well represented: the Borda score of each voter's representative is either $0$ 
or $-1$. Thus, it remains to identify $k-1$ candidates whose inclusion in the committee 
would bring the score of as many voters as possible up to $0$, which amounts to simply 
selecting $k-1$ candidates with highest Plurality scores.%
\footnote{Recall that a candidate's Plurality score is the number of voters 
who rank this candidate first.} Finally, we need to consider the 
case where an optimal committee does not include the center vertex. 
One can check that in this case the committee necessarily consists 
of $k$ candidates with highest Plurality scores 
(see the proof of Theorem~\ref{thm:borda} below). By selecting the better of these two 
committees, we find a winning committee. This procedure works for many scoring 
functions other than Borda (see the end of this section). However, as we show in the appendix, 
this argument does not extend to \emph{all} scoring functions: 
For some positional scoring functions, winner 
determination for utilitarian Chamberlin--Courant remains hard 
even for preferences single-peaked on a star (Theorem~\ref{thm:hard-star}).

The algorithm we have sketched for the Borda scoring function on stars can be generalized to 
trees that have a small number of internal vertices (and thus a large number of 
leaves). While for stars it suffices to guess whether the center vertex would be part of the winning 
committee, we now have to make a similar guess for \emph{each} internal vertex. 

\begin{theorem}\label{thm:borda}
	Given a profile $P$ with $|C| = m$ and $|N| = n$,
	a tree $T \in \mathcal{T}(P)$ with $\eta$ internal vertices 
        such that $P$ is single-peaked on $T$,
	and a target committee size $k\ge 1$, we can find a winning committee of size $k$
	for $P$ under the Chamberlin--Courant rule with the Borda scoring
	function in time $\mathrm{poly}(n,m, (k+1)^\eta)$. 
\end{theorem}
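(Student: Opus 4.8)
The plan is to lift the argument sketched above for stars to trees with $\eta$ internal vertices. If $\eta=0$ then $|C|\le 2$ and there is nothing to prove, so assume $\eta\ge 1$. Deleting the leaves of $T$ leaves a connected subtree $T[I]$ on the set $I$ of internal vertices, $|I|=\eta$, and every leaf of $T$ has a unique neighbour in $I$, which we call its \emph{parent}; for $v\in I$ write $\mathrm{cl}(v)$ for the \emph{cluster} consisting of $v$ together with all leaf-children of $v$. The clusters partition $C$. The algorithm iterates over all guesses that specify, for each $v\in I$, (i)~whether $v\in W$, and (ii)~the number $c_v\in\{0,\dots,k\}$ of committee members inside $\mathrm{cl}(v)$; there are at most $(2(k+1))^{\eta}=\mathrm{poly}\big((k+1)^{\eta}\big)$ such guesses. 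For each guess that is consistent (the $c_v$ sum to $k$, each cluster has enough leaf-children, etc.), the remaining task is to choose which leaf-children to include; we will do this optimally in polynomial time and then return the best committee obtained over all guesses.

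Fix a guess; it determines $J:=W\cap I$ and the cluster budgets $(c_v)_{v\in I}$. The key structural fact, immediate from Proposition~\ref{prop:spt-equivalences}, is a locality property: for every voter $i$ the top-initial segment $\{c: c\succeq_i d\}$ is connected in $T$ and contains the $T$-path from $\textup{top}(i)$ to $d$; in particular $\textup{second}(i)$ is a neighbour of $\textup{top}(i)$. Using this, one checks that adding a leaf-child $\ell$ of an internal vertex $v$ to a committee that already contains $v$ improves the representation of \emph{exactly} the voters with $\textup{top}(i)=\ell$, and of each of them by exactly one Borda point (from $-1$ to $0$): a voter with $\textup{top}(i)\neq\ell$ who benefited would have to rank $\ell$ above every committee member, but the path from $\textup{top}(i)$ to the leaf $\ell$ passes through $v$, so $v\succ_i\ell$ and $v$ is already a weakly better representative. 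Hence, for a cluster whose internal vertex lies in $J$, an optimal completion simply picks the $c_v-1$ leaf-children of $v$ with the largest Plurality scores, independently of all other clusters.

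It remains to handle the clusters of the internal vertices \emph{outside} $J$, and here the per-cluster decoupling breaks down: for a connected component $K$ of $T[I]\setminus J$, a voter whose top candidate lies in $K$ or is a leaf-child of a vertex of $K$ is served either by one of the $\sum_{v\in K}c_v$ leaf-children chosen from within $K$ or --- since no internal vertex of $K$ is in the committee --- by the first $J$-vertex her preference order meets upon leaving $K$, whose Borda value is determined by $J$ and the fixed tree; but the precise representative may ``migrate'' from the cluster of one vertex of $K$ to that of a neighbouring one, so the clusters of $K$ are coupled. I would resolve this by solving the residual optimization over each component $K$ directly: it is a Chamberlin--Courant instance on the subtree spanned by $K$ and its leaf-children in which we must select a prescribed number of leaves per internal vertex, and in which, because the budgets cap the number of useful leaves per cluster at $c_v\le k$, the relevant structure involves only $|K|\le\eta$ internal vertices and at most $k$ leaves that are actually used; such an instance can be solved in polynomial time, e.g.\ by a dynamic program along $T[I]$ or by invoking the few-leaves algorithm of Theorem~\ref{thm:fewleaves} after discarding the unused leaf-children. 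This step --- verifying that the contributions really do decouple once $J$ and the budgets are fixed, so that a polynomial-time completion exists --- is the main obstacle. With it in hand, each of the $\mathrm{poly}\big((k+1)^{\eta}\big)$ guesses is processed in polynomial time and the best committee found is optimal, giving the stated bound $\mathrm{poly}(n,m,(k+1)^{\eta})$. (Borda is genuinely used here: by Theorem~\ref{thm:hard-star} the problem is NP-hard already on stars for some positional scoring functions, i.e.\ for $\eta=1$.)
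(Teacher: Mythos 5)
Your guessing step and your handling of clusters whose center is selected match the paper, but the part you yourself flag as ``the main obstacle'' --- optimally completing the clusters whose internal vertex is \emph{not} selected --- is a genuine gap, and the fixes you sketch do not close it. Invoking Theorem~\ref{thm:fewleaves} on the subtree spanned by a component $K$ and its leaf-children gives running time $m^{\lambda}$ where $\lambda$ counts the leaves of that \emph{subtree}, which can be $\Theta(m)$; ``discarding the unused leaf-children'' first is circular, since which leaves are used is exactly what has to be computed. The unspecified dynamic program along $T[I]$ would have to cope with precisely the coupling you describe: under Borda, a selected leaf in the region of $K$ can represent voters whose top lies at a different vertex of $K$, and its marginal contribution depends on which other leaves are chosen --- the paper even remarks, right after the theorem, that this is why a globally greedy completion fails. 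So as written, the proposal does not establish a polynomial-time completion per guess, and hence does not prove the theorem.

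The paper's resolution is different and worth internalizing: it never solves a per-guess residual optimization at all. For \emph{every} internal vertex $c$ --- selected or not --- it fills the cluster greedily with the $\ell(c)$ leaf-children of $c$ of highest Plurality score (ties broken by a fixed order $\rhd$), and proves that for the guess induced by a suitably chosen optimal committee this greedy committee \emph{is} that optimal committee. The choice is an exchange-friendly one: among all optimal committees take $S^*$ maximizing the number of internal members, tie-broken by $\rhd$. If the greedy committee $S$ differed from $S^*$ in some cluster of center $c$, with $a\in S\setminus S^*$ and $b\in S^*\setminus S$ leaf-children of $c$, then: if $c\in S^*$, swapping $b$ for $a$ changes only the voters topping $a$ or $b$ (everyone else prefers $c$, by Proposition~\ref{prop:spt-leaf-attachment}), and $\mathrm{plu}(a)\ge\mathrm{plu}(b)$ forces a tie-breaking contradiction; if $c\notin S^*$, replacing $b$ by $c$ costs each $b$-top voter exactly $1$ (they rank $c$ second), gains each $a$-top voter at least $1$, and hurts nobody else, yielding an optimal committee with strictly more internal vertices --- contradicting the choice of $S^*$. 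This is where the specific Borda values ($s_1=0$, $s_2=-1$, $s_3\le-2$) are used, and it is exactly the ingredient your proposal is missing.
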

\begin{proof}
	\newcommand{\internal}{C^\circ}
	\newcommand{\ch}{{\mathrm{lvs}}}
	\newcommand{\plu}{{\mathrm{plu}}}
	Given a candidate $c\in C$, let $\plu(c) = |\{ i \in N : \text{top}(i) = c \}|$ be the number of voters in $P$ that rank $c$ first.
	Let $\internal$ be the set of internal vertices of $T$.
	For each candidate $c\in \internal$, let $\ch(c)$ denote the set of leaf candidates in $C\setminus \internal$
	that are adjacent to $c$ in $T$.  
	
	Our algorithm proceeds as follows. 
	For each candidate $c\in \internal$ it guesses a pair $(x(c), \ell(c))$,
	where $x(c)\in\{0,1\}$ and $0\le \ell(c) \le |\ch(c)|$.
	The component $x(c)$ indicates whether $c$
	itself is in the committee, and $\ell(c)$ indicates 
	how many candidates in $\ch(c)$ are in the committee. 
	We require $\sum_{c\in \internal}(x(c)+\ell(c))=k$. 
	Next, the algorithm sets $W=\{c\in \internal: x(c)=1\}$, and then for each $c\in \internal$
	it orders the candidates in $\ch(c)$ in non-increasing
	order of $\plu(c)$ (breaking ties according to a fixed ordering $\rhd$ over $C$), 
	and adds the first $\ell(c)$ candidates in this order to $W$. 
	
	Each guess corresponds to a committee of size $k$. Guessing can be implemented
	deterministically: consider all options for the collection of pairs 
	$\{(x(c), \ell(c))\}_{c\in \internal}$ satisfying $\sum_{c\in \internal}(x(c)+\ell(c))=k$ (there are at most $2^\eta\cdot (k+1)^\eta$ possibilities),
	compute the Chamberlin--Courant score of the resulting committee for each option, and output the best one.
	
	It remains to argue that this algorithm finds a committee with the maximum Chamberlin--Courant score.
	To see this, 
	let $\cal S$ be the set of all size-$k$ committees with the maximum Chamberlin--Courant score,
	and let $S^*$ be the maximal committee in $\arg\max_{W\in \cal S}|W\cap \internal|$ with respect
	to the fixed tie-breaking ordering $\rhd$ (where, given two size-$k$ committees $S, S'\in\cal S$, 
	we write $S'\rhd S$ if $a'\rhd a$, where $a'$ is the maximal element of $S'\setminus S$
	with respect to $\rhd$ and $a$ is the maximal element of $S\setminus S'$ with respect to $\rhd$).
	
	For each $c\in \internal$, let $x^*(c)=1$ if $c\in S^*$
	and $x^*(c)=0$ otherwise, and let $\ell^*(c)=|\ch(c)\cap S^*|$. 
	Our algorithm will consider the collection of pairs $\{(x^*(c), \ell^*(c))\}_{c\in \internal}$ 
	at some point, and construct a committee $S$ based on this collection. 
        We will now argue that $S=S^*$. 
	This would show correctness of our algorithm, 
        since it returns a committee with a total score at least as high as that of $S$.
	
	Clearly, we have $\internal\cap S=\internal\cap S^*$, so it remains to argue that
	$\ch(c)\cap S^* = \ch(c)\cap S$ for each $c\in \internal$.
	Suppose for the sake of contradiction that this is not the case,
	i.e., there exists a $c\in \internal$ and a pair of candidates $a, b\in\ch(c)$
	with $a\in S\setminus S^*$ and $b\in S^*\setminus S$. 
	We distinguish two cases: $c \in S^*$ or $c \not \in S^*$.
	
	If $c\in S^*$, consider the committee $S'=(S^*\setminus\{b\})\cup\{a\}$.
	We claim that $S'$ has the same Chamberlin--Courant score as $S^*$. 
        Note that when moving from $S^*$ to $S'$,
	\begin{itemize}
		\item the contribution of the $\plu(b)$ voters who rank $b$ first changes from $0$ to $-1$,
		\item the contribution of the $\plu(a)$ voters who rank $a$ first changes from $-1$ to $0$, 
		\item the contribution of all other voters is unaffected by the change, 
                      since they prefer $c \in S^* \cap S'$ to both $a$ and $b$.
	\end{itemize}
	We also have $\plu(a)\ge \plu(b)$ by construction of $S$, 
        and so the score of $S'$ is at least the score of $S^*$, and hence $\plu(a)=\plu(b)$.
	But then by construction of $S$ we have $a\rhd b$, and this contradicts
	our choice of $S^*$ from $\arg\max_{W\in \cal S}|W\cap \internal|$.
	
	Now, suppose that $c\not\in S^*$.
	Consider the committee $S'=(S^*\setminus\{b\})\cup\{c\}$.
	Again, we claim that $S'$ has the same Chamberlin--Courant score as $S^*$. 
        Note that when moving from $S^*$ to $S'$,
	\begin{itemize}
		\item the contribution of each of the $\plu(b)$ voters who rank $b$ first decreases by $1$
		      (as all of them rank $c$ second),
		\item the contribution of each of the $\plu(a)$ voters who rank $a$ first increases by at least $1$ 
		(as all of them rank $c$ second),
		\item the contribution of any other voter does not decrease
                      (as all of them prefer $c$ to $b$).
	\end{itemize}
	Again, we have $\plu(a)\ge \plu(b)$ by construction of $S$, 
        and so the score of $S'$ is at least the score of $S^*$, and hence $\plu(a)=\plu(b)$.
	Thus, the Chamberlin--Courant score of $S'$ is optimal, and so $S' \in \cal S$. 
        But $|S'\cap \internal|>|S^*\cap \internal|$, which contradicts our choice of $S^*$ 
	from $\arg\max_{W\in \cal S}|W\cap \internal|$.
\end{proof}
The reader may wonder if Theorem~\ref{thm:borda} can be strengthened to an FPT
result with respect to $\eta$, e.g., by guessing the subset of internal nodes 
to be selected and then picking the leaves in a globally greedy fashion in order
of their Plurality scores. However, it can be shown that this approach does
not necessarily produce an optimal committee: this is because, when considering 
a leaf that is adjacent to an internal node that is not selected, we need
to take into account its contribution to the utility of voters 
who do not rank it first, and this contribution may depend on what other 
leaves have been selected.

Further, it is clear from our proof that Theorem~\ref{thm:borda} holds for every positional
scoring function whose score vector satisfies $s_1=0$, $s_2=-1$, $s_3\le -2$.
The proof does not extend to arbitrary positional scoring functions, and
Theorem~\ref{thm:hard-star} in the appendix shows that 
for some positional scoring functions {\sc Utilitarian CC} is NP-hard
even if preferences are single-peaked on a star. Note that, in contrast, 
Theorem~\ref{thm:fewleaves} holds for any positional scoring function.
On the other hand, the algorithm described in the proof of Theorem~\ref{thm:borda}
is in FPT with respect to the combined parameter $(k, \eta)$.
In contrast, for general preferences computing the Chamberlin--Courant winners
is $\mathrm{W}[2]$-hard with respect to $k$ under the Borda scoring function
\cite{betzler2013computation}.

\section{Structure of the Set of Trees a Profile is Single-Peaked on: The Attachment Digraph}
\label{sec:attachment-digraph}
We now move on from our study of multiwinner elections and turn towards the problem of 
recognizing when a given preference profile is single-peaked on a tree. In particular, for 
each profile $P$, we will study the collection $\mathcal T(P)$ of \emph{all} trees on which 
$P$ is single-peaked. It turns out that the set $\mathcal T(P)$ has interesting structural 
properties, and admits a concise representation. In many cases, this will allow us to pick a 
`nice' tree from $\mathcal T(P)$ i.e., a tree that satisfies certain additional requirements. 
For example, 
to use the algorithm from Section~\ref{sec:few-leaves}, we would want to pick a tree from 
$\mathcal T(P)$ with the smallest number of leaves, and to use the algorithm from 
Section~\ref{sec:few-internal}, we would want to use a tree with the smallest number of internal 
vertices.

\citet{trick1989recognizing} presents an algorithm that decides whether $\mathcal T(P)$ 
is non-empty. If so, the algorithm produces some tree $T$ with $T \in \mathcal T(P)$. While 
building the tree, the algorithm makes various arbitrary choices. In our approach, we will 
store all the choices that the algorithm could take. To this end, 
we introduce a data structure, which we 
call the \textit{attachment digraph} of profile $P$.

We will start by giving a description of Trick's algorithm and its proof of correctness.
We follow the presentation of Trick's paper closely, but give somewhat more detailed proofs.

Trick's algorithm can be seen as taking inspiration from algorithms for recognizing preferences that 
are single-peaked on a line. Those typically start out by noticing that an alternative that is ranked 
bottom-most by some voter must be placed at one of the ends of the axis. Trick's algorithm uses the 
same idea; the analogue for trees is as follows.
\begin{proposition}
	[\citealp{trick1989recognizing}]
	\label{prop:bottom-is-leaf}
	Suppose $P$ is single-peaked on $T$, and suppose $a$ occurs as a bottom-most alternative, 
	that is, $\operatorname{bottom}(i) = a$ for some $i\in N$. Then $a$ is a leaf of $T$.
\end{proposition}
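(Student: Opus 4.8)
The plan is to derive the claim directly from the third characterization in Proposition~\ref{prop:spt-equivalences}, namely that if $P$ is single-peaked on $T$, then for every voter $i$ and every candidate $a$ the top-initial segment $\{b \in C : b \succ_i a\}$ is connected in $T$. So first I would fix the voter $i$ with $\operatorname{bottom}(i) = a$ and apply this characterization to the candidate $a$ itself.

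Since $a$ is ranked last by $i$, every other candidate is preferred by $i$ to $a$, so the top-initial segment $\{b \in C : b \succ_i a\}$ is exactly $C \setminus \{a\}$. Proposition~\ref{prop:spt-equivalences} then tells us that $C \setminus \{a\}$ is connected in $T$, i.e., the induced subgraph $T|_{C \setminus \{a\}} = T - a$ is connected.

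It remains to invoke the elementary fact that, in a tree $T$ on at least two vertices, deleting a vertex $v$ leaves a connected graph if and only if $v$ is a leaf: a vertex of degree $d\ge 1$ in a tree splits $T$ into exactly $d$ components when removed, so $T - v$ is connected precisely when $d = 1$. Applying this with $v = a$ and using that $T - a$ is connected yields that $a$ has degree $1$, i.e., $a$ is a leaf, as desired. (If $|C| = 1$ the statement is vacuous; single-peakedness on a tree is only interesting when $|C| \ge 2$.)

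There is essentially no obstacle here: the argument is short and self-contained once Proposition~\ref{prop:spt-equivalences} is available, and the only ``hard'' part is simply recalling the cut-vertex fact about trees. An equivalent route that bypasses Proposition~\ref{prop:spt-equivalences} and works straight from the definition is to argue by contradiction: if $a$ had two distinct neighbours $b, c$ in $T$, then at least one of them --- say $c$ --- does not lie on the path from $\operatorname{top}(i)$ to $a$ (the path is simple and $b \neq c$), so the path from $\operatorname{top}(i)$ to $c$ passes through $a$, and single-peakedness on $T$ would force $\operatorname{top}(i) \succ_i a \succ_i c$, contradicting $a = \operatorname{bottom}(i)$. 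The only mild care needed in that version is checking that such a neighbour $c$ genuinely exists and is off the $\operatorname{top}(i)$-to-$a$ path.
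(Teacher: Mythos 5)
Your main argument is exactly the paper's proof: the top-initial segment $\{b \in C : b \succ_i a\} = C\setminus\{a\}$ is connected in $T$ by Proposition~\ref{prop:spt-equivalences}, which forces $a$ to be a leaf; you merely spell out the elementary cut-vertex fact that the paper leaves implicit. The alternative contradiction argument you sketch is also fine, but the primary route is essentially identical to the paper's.
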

\begin{proof}
	The set $A\setminus\{a\}$ is a top-initial segment of the $i$-th vote, and 
        hence must be connected in $T$. This can only be the case if $a$ is a leaf of $T$.
\end{proof}

Suppose we have identified a bottom-ranked alternative $a$. We deduce that if our profile is 
single-peaked on any tree $T$, then $a$ is a leaf of $T$. Now, being a leaf, $a$ must have 
exactly one neighboring vertex $b$. Which vertex could this be? The following simple 
observation gives some necessary conditions.

\begin{proposition}
	[\citealp{trick1989recognizing}]
	\label{prop:spt-leaf-attachment}
	Suppose $P$ is single-peaked on $T$, and suppose $a\in C$ is a leaf of $T$, 
        adjacent to $b\in C$. Let $i \in N$ be a voter. Then either 
	\begin{enumerate}
		\item[(i)] $b \succ_i a$, or
		\item[(ii)] $a = \textup{top}(i)$ and $b = \textup{second}(i)$.
	\end{enumerate}
\end{proposition}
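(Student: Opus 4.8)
The plan is to argue by contradiction using the characterization of single-peakedness on a tree in terms of top-initial segments (Proposition~\ref{prop:spt-equivalences}), together with the fact that $a$ is a leaf whose only neighbour is $b$. Suppose that neither (i) nor (ii) holds for some voter $i$. The negation of (i) is that $a \succ_i b$; in particular $a \neq \textup{bottom}(i)$, so there is at least one candidate ranked below $a$ by $i$ (namely $b$, among possibly others). The negation of (ii), given that $a \succ_i b$, means we are not simultaneously in the situation $a = \textup{top}(i)$ and $b = \textup{second}(i)$. I would then split into two cases according to whether $a = \textup{top}(i)$ or not.

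If $a \neq \textup{top}(i)$, then $\textup{top}(i)$ is some candidate $\neq a$, and since $a$ is a leaf adjacent only to $b$, the unique path in $T$ from $\textup{top}(i)$ to $a$ passes through $b$ as the penultimate vertex. By the definition of single-peakedness on $T$, this forces $\textup{top}(i) \succ_i b \succ_i a$, contradicting $a \succ_i b$. If instead $a = \textup{top}(i)$, then since (ii) fails we must have $\textup{second}(i) = c$ for some $c \neq b$. Consider the top-initial segment $\{a, c\}$ consisting of $i$'s two favourite candidates; by Proposition~\ref{prop:spt-equivalences} this set must be connected in $T$, i.e.\ $a$ and $c$ are adjacent. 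But $a$ is a leaf with unique neighbour $b \neq c$, a contradiction. In either case we reach a contradiction, so (i) or (ii) must hold.

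I do not expect any serious obstacle here: the statement is an immediate consequence of the definitions, and the only mild care needed is in correctly negating the disjunction (i)$\,\vee\,$(ii) and handling the boundary case $a = \textup{top}(i)$ separately, since there the path-based argument degenerates and one instead invokes the connectedness of the two-element top-initial segment $\{\textup{top}(i), \textup{second}(i)\}$. One could equivalently phrase the whole argument via the third bullet of Proposition~\ref{prop:spt-equivalences} applied to the top-initial segment $\{b' : b' \succ_i a\} \cup \{a\}$ (the top-initial segment down to and including $a$), observing that if $a$ is not the top choice this segment must contain $b$ by connectedness and leaf-ness of $a$, which is exactly option (i); the leftover case is again $a = \textup{top}(i)$.
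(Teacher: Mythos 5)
Your proposal is correct and uses essentially the same argument as the paper: a case split on whether $a = \textup{top}(i)$, with the path from $\textup{top}(i)$ through $b$ to $a$ handling the first case and the connectedness of the two-element top-initial segment $\{a,\textup{second}(i)\}$ handling the second. The only difference is the cosmetic contradiction framing; the substance is identical.
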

\begin{proof}
	(i) Suppose first that $a$ is not $i$'s top-ranked alternative, 
            and rather $\text{top}(i) = c$. Take the unique path in $T$ from $c$ to $a$, 
            which passes through $b$ since $b$ is the only neighbor of $a$. 
	    Since $i$'s vote is single-peaked on $T$, it is single-peaked on this path, 
	    and hence $i$'s preference decreases along it from $c$ to $a$. Since $b$ is visited before $a$,
	    it follows that $b\succ_i a$.
	
	(ii) Suppose, otherwise, that $a$ is $i$'s top-ranked alternative. Then $\{a,\textup{second}(i)\}$
	    is a top-initial segment of $i$'s vote, which by Proposition~\ref{prop:spt-equivalences} 
            is a connected set in $T$, and hence forms an edge. Thus, $a$ is adjacent to 
            $\textup{second}(i)$, so $\textup{second}(i) = b$ as required. 
\end{proof}

Thus, in our search for a neighbor of the leaf $a$, we can restrict our attention 
to those alternatives $b$ that satisfy either (i) or (ii) in the proposition above, 
for every voter $i\in N$. Let us write this down more formally: 
For each $a\in C$ and $i\in N$, define
\[
B(i,a) = \begin{cases}
\{ c \in C : c \succ_i a \} & \text{if } \text{top}(i) \neq a, \\
\{ \operatorname{second}(i) \} & \text{if }  \text{top}(i) = a.
\end{cases}
\]
Applying Proposition~\ref{prop:spt-leaf-attachment} to all voters $i\in N$, 
we see that $a$ needs to be adjacent to an element in
\[
B(a) = \bigcap_{i\in N} B(i,a).
\]
Thus, we have the following corollary.
\begin{corollary}
	[\citealp{trick1989recognizing}]
	\label{cor:must-attach-to-B}
	Suppose a profile is single-peaked on $T$, and $a\in C$ is a leaf of $T$. 
	Then $a$ must be adjacent to an element of $B(a)$.
\end{corollary}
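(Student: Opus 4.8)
The plan is to derive this immediately from Proposition~\ref{prop:spt-leaf-attachment} by intersecting over all voters. Since $a$ is a leaf of $T$, it has a unique neighbor; call it $b \in C$. The goal is to show $b \in B(a) = \bigcap_{i\in N} B(i,a)$, which is exactly the assertion that $a$ is adjacent to an element of $B(a)$.

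First I would fix an arbitrary voter $i \in N$ and apply Proposition~\ref{prop:spt-leaf-attachment} to the leaf $a$ and its neighbor $b$. This gives two cases. In case (i), we have $b \succ_i a$; in particular $a$ is not $i$'s top choice, so by definition $B(i,a) = \{c \in C : c \succ_i a\}$, and this set contains $b$. In case (ii), we have $a = \operatorname{top}(i)$ and $b = \operatorname{second}(i)$, so by definition $B(i,a) = \{\operatorname{second}(i)\} = \{b\}$, which again contains $b$. Either way, $b \in B(i,a)$.

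Since $i$ was arbitrary, we conclude $b \in B(i,a)$ for every $i \in N$, hence $b \in \bigcap_{i\in N} B(i,a) = B(a)$. As $a$ is adjacent to $b$, this establishes the corollary.

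There is no real obstacle here: the statement is a direct bookkeeping consequence of the preceding proposition together with the definition of $B(i,a)$ and $B(a)$. The only thing to be slightly careful about is matching the two cases of Proposition~\ref{prop:spt-leaf-attachment} to the two cases in the definition of $B(i,a)$ (the case split is on whether $\operatorname{top}(i) = a$), but these line up exactly.
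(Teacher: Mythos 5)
Your proof is correct and follows exactly the paper's route: the paper derives the corollary by applying Proposition~\ref{prop:spt-leaf-attachment} to every voter and intersecting, which is precisely your argument, only spelled out in more detail with the case split against the definition of $B(i,a)$.
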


We have established that it is necessary for leaf $a$ to be adjacent to some alternative in
$B(a)$. It turns out that if the profile is single-peaked on a tree, then for \emph{every} 
alternative $b\in B(a)$ there is some tree $T \in \mathcal T(P)$ in which $a$ is adjacent to 
$b$.

\begin{proposition}
	[\citealp{trick1989recognizing}]
	\label{prop:can-attach}
	Let $P$ be a profile in which $a$ occurs bottom-ranked. Suppose that 
	$P|_{C \setminus \{a\}}$ is single-peaked on some tree $T_{-a}$
	with vertex set $C\setminus\{a\}$, and let $T$ 
        be a tree obtained from $T_{-a}$ by attaching $a$ as a leaf adjacent 
	to some element $b\in B(a)$.
        Then $P$ is single-peaked on $T$.
\end{proposition}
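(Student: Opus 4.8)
The plan is to verify that $P$ is single-peaked on $T$ by checking the characterization from Proposition~\ref{prop:spt-equivalences}: it suffices to show that for every voter $i \in N$ and every candidate $c \in C$, the top-initial segment $U_{i,c} = \{ d \in C : d \succ_i c \}$ is connected in $T$. Since $P|_{C \setminus \{a\}}$ is single-peaked on $T_{-a}$, we already know that for every voter $i$ the sets $U_{i,c} \setminus \{a\} = \{ d \in C \setminus \{a\} : d \succ_i c \}$ are connected in $T_{-a}$; and connectivity in $T_{-a}$ implies connectivity in $T$ (since $T_{-a}$ is a subtree of $T$). So the only work is to handle top-initial segments that contain $a$, i.e., to show: if $a \succ_i c$ then $U_{i,c}$ is connected in $T$. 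Note that since $a$ is bottom-ranked by some voter, but not necessarily by $i$, the candidate $a$ may sit anywhere in $i$'s ranking.

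First I would fix a voter $i$ and a candidate $c$ with $a \succ_i c$, and write $U = U_{i,c}$; we must show $U$ is connected in $T$. Since $U \setminus \{a\}$ is connected in $T_{-a}$, hence connected in $T$, and $a$ is a leaf of $T$ adjacent only to $b$, the set $U = (U \setminus \{a\}) \cup \{a\}$ is connected in $T$ if and only if either $U = \{a\}$ or $b \in U \setminus \{a\}$ (so that $a$ attaches to the already-connected remainder). The case $U = \{a\}$ happens exactly when $a = \text{top}(i)$ and $c = \text{second}(i)$; I would dispatch this immediately since a singleton is trivially connected. So assume $|U| \ge 2$; I need to show $b \in U$, i.e., $b \succ_i c$. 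Here is where the definition of $B(a)$ does the work: since $b \in B(a) = \bigcap_{j \in N} B(j,a)$, in particular $b \in B(i,a)$.

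Now I split on whether $\text{top}(i) = a$. If $\text{top}(i) \ne a$, then $B(i,a) = \{ d \in C : d \succ_i a\}$, so $b \succ_i a$; combined with $a \succ_i c$ and transitivity of $\succ_i$, we get $b \succ_i c$, hence $b \in U$, and $U$ is connected. If $\text{top}(i) = a$, then $B(i,a) = \{\text{second}(i)\}$, so $b = \text{second}(i)$. In this subcase $a \succ_i c$ holds automatically (as $a$ is $i$'s top), and since $|U| \ge 2$ we have $c \ne \text{second}(i)$ and $c \ne a$, so $c$ is ranked below $\text{second}(i) = b$, giving $b \succ_i c$ and again $b \in U$. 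In all cases $U$ is connected in $T$, so by Proposition~\ref{prop:spt-equivalences} the profile $P$ is single-peaked on $T$.

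The reasoning is elementary once the right characterization is invoked, so I do not anticipate a genuine obstacle; the only thing to be careful about is the bookkeeping in the case $\text{top}(i) = a$ — one must confirm that $|U| \ge 2$ really does force $c$ to be ranked strictly below $b = \text{second}(i)$ (rather than, say, $c$ equalling $b$), which is immediate since $c \notin U$ while $b, a \in U$ would need $|U| \ge 2$ to make sense. It is also worth remarking explicitly that we never use that $a$ is bottom-ranked by $i$ specifically — only that $b \in B(i,a)$ for every $i$, which is exactly what membership in $B(a)$ guarantees — so the argument goes through uniformly for all voters.
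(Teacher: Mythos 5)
Your proof is correct and follows essentially the same route as the paper: verify the top-initial-segment characterization from Proposition~\ref{prop:spt-equivalences}, reduce to segments containing $a$, and use $b \in B(i,a)$ with the case split $\text{top}(i) = a$ versus $\text{top}(i) \neq a$ to show the neighbor $b$ lies in the segment. The only cosmetic difference is that you parametrize segments by the defining candidate $c$ and invoke transitivity explicitly, where the paper argues directly with the segment; the substance is identical.
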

\begin{proof}
	Let $T$ be a tree obtained as described. We show that $P$ is single-peaked on $T$.
	Let $S \subseteq C$ be a top-initial segment of the ranking of some voter $i$ in $P$. 
        We need to show that $S$ is connected in $T$. 
	\begin{itemize}
		\item If $a\not\in S$, then $S$ is connected in $T_{-a}$ 
                      because $P|_{C \setminus \{a\}}$ is single-peaked on $T_{-a}$. 
                      Hence $S$ is also connected in $T$.
		\item If $S = \{a\}$, then $S$ is trivially connected in $T$.
		\item If $a \in S$ and $S \neq \{a\}$, then $S \setminus \{a\}$ 
                      is connected in $T_{-a}$ because $P|_{C \setminus \{a\}}$ 
                      is single-peaked on $T_{-a}$. Therefore, to show that $S$ 
                      is connected in $T$, it suffices to show that $a$'s neighbor $b$ 
                      is also an element of $S$. Since $b \in B(a) = \bigcap_{i\in N} B(i,a)$, 
                      we have that $b \in B(i,a)$. If $\text{top}(i) = a$, 
                      then $B(i,a) = \{\text{second}(i)\}$, so $b = \text{second}(i)$. 
                      As $S$ is a top-initial segment of $i$ with $|S| \ge 2$, 
                      we have $b \in S$, as desired. Otherwise $\text{top}(i) \neq a$, 
                      and so $B(i,a) = \{ c : c \succ_i a \}$, hence $b \succ_i a$. 
                      As $S$ is a top-initial segment of $i$ including $a$, 
                      we must have $b\in S$, as desired.
		\qedhere
	\end{itemize}
\end{proof}

\begin{algorithm}[th]
	\caption{Trick's algorithm to decide whether a profile is single-peaked on a tree}
	\begin{algorithmic}
		\State $T \gets (C, \emptyset)$, the empty graph on $C$
		\State $C_1 \gets C$, $r \gets 1$
		\While{$|C_r| \ge 3$}
		\State $L_r \gets \{ \text{bottom}(i, C_r) : i \in N \} $
		\For{each candidate $a\in L_r$}
		\State $B(a) \gets \bigcap_{i\in N} B(i, C_r, a)$
		\If {$B(a) = \emptyset$}
		\State \Return \textit{fail : $P$ is not single-peaked on any tree}
		\Else
		\State select $b \in B(a)$ arbitrarily
		\State add an edge between $a$ and $b$ in $T$
		\EndIf
		\EndFor
		\State $C_{i+1} \gets C_r \setminus L_r$
		\State $r \gets r+1$
		\EndWhile
		\If{$|C_r| = 2$}
		\State add an edge between the two candidates in $C_r$ to $T$
		\EndIf
		\State \Return $P$ is single-peaked on $T$
	\end{algorithmic}
	\label{alg:trick}
\end{algorithm}

With these results in place, we can now see how a recognition algorithm could work. 
Select an alternative $a$ that is ranked bottom-most by some voter, select an arbitrary 
candidate $b \in B(a)$, add an edge $\{b,a\}$ to the tree under construction, remove $a$ from 
the profile, and recurse on the remaining candidates. If at any point we find that 
$B(a) = \emptyset$, then we can conclude from Corollary~\ref{cor:must-attach-to-B} 
that the profile is 
not single-peaked on any tree. Algorithm~\ref{alg:trick} formalizes this procedure. To avoid 
recursion, the algorithm uses the following notation: for every subset $S \subset C$, for each 
$a \in S$, and each $i \in N$, define
\[
B(i,S,a) = \begin{cases}
\{ c \in S : c \succ_i a \} & \text{if } \text{top}(i, S) \neq a, \\
\{ \operatorname{second}(i, S) \} & \text{if }  \text{top}(i, S) = a.
\end{cases}
\]

\begin{theorem}
	[\citealp{trick1989recognizing}]
	\label{thm:trick-is-correct}
	Algorithm~\ref{alg:trick} correctly decides whether a profile is single-peaked on a tree.
\end{theorem}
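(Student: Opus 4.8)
The plan is to prove, by induction on the number of candidates $m = |C|$, the conjunction of two statements: (a) if Algorithm~\ref{alg:trick} reports failure, then $P$ is single-peaked on no tree; and (b) if it returns ``$P$ is single-peaked on $T$'', then $T$ is indeed a tree on $C$ and $P$ is single-peaked on it. Since $|C_r|$ strictly decreases from one iteration to the next (the set $L_r$ is never empty), the algorithm terminates with exactly one of these answers, so (a) and (b) together give the theorem. The base case $m \le 2$ is immediate: the while-loop is not entered, the algorithm outputs the empty graph or the single edge on $C$, and every profile on at most two candidates is single-peaked on that graph.

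The induction step hinges on two observations about the first iteration, in which the algorithm processes the whole batch $L_1 = \{\operatorname{bottom}(i,C) : i \in N\}$ simultaneously. First, $L_1 \neq \emptyset$, and by Proposition~\ref{prop:bottom-is-leaf} every candidate in $L_1$ is a leaf of any tree on which $P$ is single-peaked. Second --- and reconciling this batch processing with the one-candidate-at-a-time propositions proved above is the crux of the argument --- for distinct $a, a' \in L_1$ we have $a' \notin B(a)$: since $a'$ is bottom-ranked by some voter $i^*$ we have $a \succ_{i^*} a'$, which is incompatible with $a' \in B(i^*,a)$, because if $\operatorname{top}(i^*) \neq a$ then $B(i^*,a)$ consists of candidates ranked above $a$, while if $\operatorname{top}(i^*) = a$ then $B(i^*,a) = \{\operatorname{second}(i^*)\}$, which cannot equal the bottom candidate $a'$ since $m \ge 3$. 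Consequently, whenever the algorithm does not fail in the first iteration, every neighbor $b_a$ it chooses for a candidate $a \in L_1$ lies in $C_2 = C \setminus L_1$, and the run of the while-loop from the second iteration onward is, edge for edge, a run of Algorithm~\ref{alg:trick} on the strictly smaller profile $P|_{C_2}$, to which the inductive hypothesis applies.

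For (a): if the algorithm fails in the first iteration, some $a \in L_1$ has $B(a) = \emptyset$; this $a$ is a leaf of any tree on which $P$ would be single-peaked (Proposition~\ref{prop:bottom-is-leaf}), so by Corollary~\ref{cor:must-attach-to-B} it would have to be adjacent to an element of $B(a) = \emptyset$, which is impossible, and hence $P$ is single-peaked on no tree. If the algorithm fails in a later iteration, then the run on $P|_{C_2}$ fails, so by the inductive hypothesis $P|_{C_2}$ is single-peaked on no tree; but if $P$ were single-peaked on some tree $T$, then all candidates of $L_1$ would be leaves of $T$, so $C_2$ would be connected in $T$ (deleting leaves from a tree preserves connectivity), $T|_{C_2}$ would be a tree, and by Proposition~\ref{prop:spt-equivalences} $P|_{C_2}$ would be single-peaked on $T|_{C_2}$ --- a contradiction.

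For (b): if the algorithm returns ``$P$ is single-peaked on $T$'', then it did not fail in the first iteration, so the run on $P|_{C_2}$ also succeeds and, by the inductive hypothesis, returns a tree $T_{C_2}$ on $C_2$ with $P|_{C_2}$ single-peaked on $T_{C_2}$; since $T$ is obtained from $T_{C_2}$ by attaching each $a \in L_1$ as a pendant vertex adjacent to some $b_a \in C_2$, the graph $T$ is a tree on $C$. To show $P$ is single-peaked on $T$, I would use the third characterization in Proposition~\ref{prop:spt-equivalences}: it suffices to prove that every top-initial segment $S$ of every voter $i$ is connected in $T$. This is trivial when $|S| \le 1$, so assume $|S| \ge 2$. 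If $S \cap L_1 = \emptyset$, then $S \subseteq C_2$ is a top-initial segment of $i$ in $P|_{C_2}$, hence connected in $T_{C_2} \subseteq T$. Otherwise $S \cap C_2$ is a top-initial segment of $i$ in $P|_{C_2}$ and so is connected in $T_{C_2}$, and it is enough to show that for each $a \in S \cap L_1$ its neighbor $b_a$ lies in $S$, since then $S$ is just the connected set $S \cap C_2$ with some pendant vertices adjoined. But $b_a \in B(a) = \bigcap_{j} B(j,a) \subseteq B(i,a)$, and unwinding the definition of $B(i,a)$ gives that either $b_a \succ_i a$, or $a = \operatorname{top}(i)$ and $b_a = \operatorname{second}(i)$; in either case $b_a \in S$ because $S$ is a top-initial segment containing $a$ with $|S| \ge 2$. (One can also derive soundness by iterating Proposition~\ref{prop:can-attach}, peeling the candidates of $L_1$ off one at a time, but arguing directly via Proposition~\ref{prop:spt-equivalences} avoids having to track how the sets $B(\cdot)$ evolve under candidate removals.)
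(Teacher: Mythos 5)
Your proof is correct, and its skeleton matches the paper's: induction on $|C|$, the observation that after the first iteration the algorithm behaves exactly as if run on $P|_{C_2}$, and the same use of Proposition~\ref{prop:bottom-is-leaf}, Corollary~\ref{cor:must-attach-to-B} and Proposition~\ref{prop:spt-equivalences} in the two failure cases. Where you genuinely deviate is the success case. The paper shows the output is a tree by counting edges plus a connectivity argument, and then settles soundness in one line by citing Proposition~\ref{prop:can-attach}; strictly speaking that proposition attaches a single bottom-ranked candidate whose set $B(a)$ is computed in the profile missing only $a$, so applying it to the whole batch $L_1$ requires peeling its members off one at a time and checking that the sets $B(a)$ computed by the algorithm over all of $C$ remain valid attachment options along the way --- a point the paper leaves implicit. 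You instead prove $B(a)\cap L_1=\emptyset$ directly (this is exactly the content of Proposition~\ref{prop:arcs-point-from-i-to-j}, which the paper only establishes later for the attachment digraph), conclude that $T$ is $T_{C_2}$ with pendant leaves attached (which also gives treeness for free), and verify single-peakedness by checking connectivity of every top-initial segment via the third item of Proposition~\ref{prop:spt-equivalences}. This handles all of $L_1$ in one stroke and makes the batch-attachment step fully explicit, at the cost of re-deriving inside the induction what Proposition~\ref{prop:can-attach} packages; the paper's route is shorter on the page but relies on that unstated iteration. Both arguments are valid.
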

\begin{proof}

	First, note that if Algorithm~\ref{alg:trick} succeeds and returns a graph $T$, then 
        $T$ is a tree. Indeed, it is easy to see that $T$ has $|C| - 1$ edges. Moreover, 
        $T$ is connected, because every vertex has a path to a vertex in the set $C_r$ 
	at the end of the algorithm, and $C_r$ is either a singleton or a connected set of size 2.
	
	We show that the algorithm is correct by induction on $|C|$. 
	If $|C| = 1$ or $|C| = 2$, every profile is single-peaked on the unique tree on $C$, 
        and Algorithm~\ref{alg:trick} correctly determines this. If $|C| \ge 3$, 
        then the \emph{while}-loop is executed at least once. If in the first iteration 
        the algorithm claims that the profile is not single-peaked on a tree because 
        $B(a) = \emptyset$ for some $a \in L_1$, then this statement is correct by 
        Corollary~\ref{cor:must-attach-to-B}. Otherwise, after the first iteration
	the algorithm behaves as if it was run on $P|_{C_2}$ 
        (recall that $C_2 = C \setminus L_1$). 
	
	Now, if the algorithm fails in later iterations, by the inductive hypothesis, 
        $P|_{C_2}$ is not single-peaked on a tree. But then $P$ is not single-peaked 
        on a tree either: Suppose it was single-peaked on $T$. 
        Then, by Proposition~\ref{prop:bottom-is-leaf}, all candidates in $L_1$ are leaves 
        of $T$, and therefore $T|_{C_2}$ is still a tree, and so $P|_{C_2}$ is single-peaked 
        on $T|_{C_2}$ (by Proposition~\ref{prop:spt-equivalences}), a contradiction. 
        Thus, in this case, the algorithm executed on $P$ correctly determines that $P$ 
        is not single-peaked on a tree.
	
	On the other hand, if the algorithm's run on $P$ terminates and returns a tree $T$, 
        then its run on $P|_{C_2}$ would have terminated and returned the tree 
        $T|_{C_2}$. By the inductive hypothesis, $P|_{C_2}$ is single-peaked on $T|_{C_2}$. 
        Hence, by Proposition~\ref{prop:can-attach}, $P$ is single-peaked on $T$, 
        and so the algorithm is correct.
\end{proof}
This concludes our presentation of Trick's approach. \\

Trick's algorithm makes some arbitrary choices when selecting alternatives $b \in B(a)$. Our 
aim is to understand the set $\mathcal T(P)$ of \emph{all} trees that the input profile $P$ is 
single-peaked on, so a natural approach is to record all possible choices that Trick's 
algorithm could make at each step, as this will encode all possible outputs of the algorithm. 
We do this by running Algorithm~\ref{alg:attachment-digraph}, which has the same structure as 
Algorithm~\ref{alg:trick}. Given a profile that is single-peaked on \emph{some} tree, it 
constructs and returns a digraph $D$ with vertex set $C$ which contains all possible choices 
that Trick's algorithm can make. We call $D$ the \emph{attachment digraph} of the input 
profile.

\begin{algorithm}[th]
	\caption{Build attachment digraph $D=(C, A)$ of $P$}
	\begin{algorithmic}
		\State $D \gets (C, A)$, $A \gets \emptyset$, so $D$ is the empty digraph on $C$
		\State $C_1 \gets C$, $r \gets 1$
		\While{$|C_r| \ge 3$}
		\State $L_r \gets \{ \text{bottom}(i, C_r) : i \in N \} $
		\For{each candidate $a\in L_r$}
		\State $B(a) \gets \bigcap_{i\in N} B(i, C_r, a)$
		\If {$B(a) = \emptyset$}
		\State \Return \textit{fail : $P$ is not single-peaked on any tree}
		\Else
		\State for each $b \in B(a)$, add an arc $(a,b)$ to $A$
		\EndIf
		\EndFor
		\State $C_{i+1} \gets C_r \setminus L_r$
		\State $r \gets r+1$
		\EndWhile
		\If{$|C_r| = 2$}
		\State add an arc between the two candidates in $C_r$ to $A$, arbitrarily directed
		\EndIf
		\State \Return $D$
	\end{algorithmic}
	\label{alg:attachment-digraph}
\end{algorithm}

\begin{figure}[t]
	\centering
	\begin{subfigure}[b]{0.4\linewidth}
		\centering
		\begin{tikzpicture}
		[trans/.style={->,gray!60,thick,bend left=20},
		forced/.style={->,thick},
		node distance=4mm and 10mm, inner sep=1.2pt]
		
		\node (c) {$a$};
		\node (d) [right=of c] {$b$};
		\node (e) [right=of d] {$c$};
		\node (f) [right=of e] {$d$};
		\node (g) [right=of f] {$e$};
		
		\path 
		(c) edge [forced] (d)
		(d) edge [forced] (e)
		(g) edge [forced] (f)
		(f) edge [forced] (e)
		;
		\end{tikzpicture}
		\caption{}
	\end{subfigure}%
	\begin{subfigure}[b]{0.3\linewidth}
		\centering
		\begin{tikzpicture}
		[trans/.style={->,gray!60,thick},
		forced/.style={->,thick},
		node distance=8mm and 13mm, inner sep=1pt]
		
		\node (a) {$a$};
		\node (b) [right=of a] {$b$};
		\node (c) [right=of b] {$c$};
		\node (d) [above=of c] {$d$};
		\node (e) [above=of b] {$e$};
		
		\path 
		(a) edge [forced] (b)
		(e) edge [forced] (b)
		(d) edge [trans,bend left=20] (c)
		(d) edge [trans,bend right=20] (b)
		(c) edge [forced] (b)
		;
		\end{tikzpicture}
		\caption{}
	\end{subfigure}

	\begin{subfigure}[b]{0.6\linewidth}
		\centering
		\begin{tikzpicture}
		[trans/.style={->,gray!60,thick,bend left=20},
		forced/.style={->,thick},
		node distance=4mm and 15mm, inner sep=1pt]
		
		\node (c) {$c$};
		\node (d) [right=of c] {$d$};
		\node (b) [above=of d] {$b$};
		\node (a) [above=of b] {$a$};
		\node (e) [right=of d] {$e$};
		\node (f) [right=of e] {$f$};
		\node (g) [right=of f] {$g$};
		\node (h) [above=of g] {$h$};
		\node (i)  [above left=8mm and 9mm of h] {$i$};
		\node (j) [above right=of a] {$j$};
		\node (k) [below=2mm of g] {$k$};
		
		\path 
		(a) edge [trans] (b)
		edge [trans,bend right] (c)
		edge [trans,bend right] (d)
		edge [trans] (e)
		(b) edge [trans,bend right] (c)
		edge [trans] (d)
		(c) edge [forced] (d)
		(d) edge [forced] (e)
		(g) edge [forced] (f)
		(k) edge [forced] (f)
		(e) edge [forced] (f)
		(j) edge [trans,bend right] (c)
		edge [trans,bend right] (d)
		edge [trans,bend right] (e)
		edge [trans] (f)
		edge [trans] (g)
		edge [trans] (h)
		edge [trans] (i)
		(i) edge [trans,bend right=20] (f)
		edge [trans] (g)
		edge [trans] (h)
		(h) edge [trans,bend right=20] (f)
		edge [trans] (g)
		;
		\end{tikzpicture}
		\caption{}
	\end{subfigure}%
	\caption{The attachment digraphs of the profiles in Example~\ref{ex:attachment-digraphs}. If a vertex has a unique outgoing arc, the arc is drawn in black. If the vertex has at least two outgoing arcs, the arcs are drawn in gray and curved.}
	\label{fig:attachment-digraph}
\end{figure}
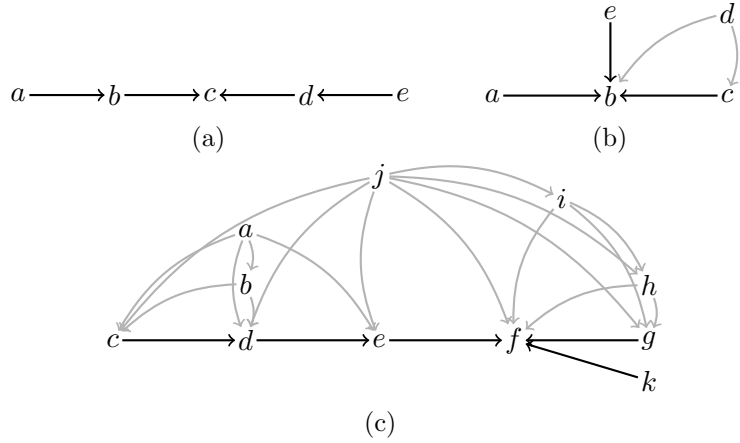
\begin{example}
	\label{ex:attachment-digraphs}
	The attachment digraphs of the following three profiles are shown in Figure~\ref{fig:attachment-digraph}.
	
	(a) Suppose $C = \{ a,b,c,d,e \}$, and let $P_1$ be the profile with voters $N = \{1,2\}$ 
            such that 
	    $$a \succ_1 b \succ_1 c \succ_1 d \succ_1 e\qquad\text{and}\qquad
	      e \succ_2 d \succ_2 c \succ_2 b \succ_2 a,$$ 
	    so that the two votes are the reverse of each other. Running Algorithm~\ref{alg:attachment-digraph}, we consider the sets $L_1 = \{a,e\}$ and $L_2 = \{b,d\}$.
	
	(b) Suppose $C = \{ a,b,c,d,e \}$, and let $P_2$ be the profile with voters $N = \{1,2\}$ 
            such that 
	    $$a \succ_1 b \succ_1 c \succ_1 d \succ_1 e\qquad\text{and}\qquad
              e \succ_2 b \succ_2 c \succ_2 d \succ_2 a.
	    $$ 
	    Running Algorithm~\ref{alg:attachment-digraph}, we consider the sets $L_1 = \{a,e\}$ and $L_2 = \{d\}$.
	
	(c) Suppose $C = \{a,b,c,d,e,f,g,h,i,j,k\}$, and let $P_3$ be the profile with voters 
            $N = \{1,2,3\}$ such that 
            \begin{align*}
            	&k \succ_1 f \succ_1 e \succ_1 d \succ_1 g \succ_1 h \succ_1 c 
            	\succ_1 i \succ_1 j \succ_1 b \succ_1 a, \\ 
            	&d \succ_2 c \succ_2 b \succ_2 e \succ_2 a \succ_2 f \succ_2 g 
            	\succ_2 h \succ_2 i \succ_2 j \succ_2 k, \\
            	&g \succ_3 f \succ_3 h \succ_3 i \succ_3 e \succ_3 d \succ_3 c 
            	\succ_3 b \succ_3 a \succ_3 j \succ_3 k. 
            \end{align*}
            Running Algorithm~\ref{alg:attachment-digraph}, we consider the sets 
            $L_1 = \{a,k\}$, $L_2 = \{b,j\}$, $L_3 = \{c, i\}$, $L_4 = \{d,h\}$, and $L_5 = \{e,g\}$.
\end{example}

Algorithm~\ref{alg:attachment-digraph} runs in time $O(|N| \cdot |C|^2)$. 
In the rest of this section, we will analyze the structure of the attachment digraph, 
and its relation to the set $\mathcal T(P)$ of trees on which $P$ is single-peaked. 
Throughout, we fix the profile $P$ and let $C_r$, $L_r$, and $B(a)$ refer to the sets 
constructed while running the $r$-th iteration of Algorithm~\ref{alg:attachment-digraph} on $P$.
We start with a few simple properties.

\begin{proposition}
	\label{prop:arcs-point-from-i-to-j}
	Let $a \in C$ be a candidate with $a \in L_r$. Then $B(a) \cap L_r = \emptyset$. 
	Hence, for every arc $(a,b) \in A$ with $a \in L_r$ and $b \in L_s$, we have that $s > r$.
\end{proposition}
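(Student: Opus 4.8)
The plan is to deduce the first claim, $B(a)\cap L_r=\emptyset$, directly from the definitions of $L_r$ and $B(\cdot)$ by a short two-case analysis, and then obtain the ``Hence'' part by identifying the iteration of Algorithm~\ref{alg:attachment-digraph} in which a given arc $(a,b)$ is created and using the monotonicity of the sets $C_1\supseteq C_2\supseteq\cdots$.

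For the first claim, fix $a\in L_r$ and an arbitrary $b\in B(a)$; I want to show that $b\neq\text{bottom}(j,C_r)$ for every voter $j\in N$, so that $b\notin L_r$. Fix $j\in N$, and recall that when iteration $r$ runs we have $|C_r|\ge 3$, so $\text{second}(j,C_r)$ and $\text{bottom}(j,C_r)$ are well-defined and distinct. If $\text{top}(j,C_r)=a$, then $B(j,C_r,a)=\{\text{second}(j,C_r)\}$, and since $b\in B(a)\subseteq B(j,C_r,a)$ we get $b=\text{second}(j,C_r)\neq\text{bottom}(j,C_r)$. If instead $\text{top}(j,C_r)\neq a$, then $B(j,C_r,a)=\{c\in C_r:c\succ_j a\}$, so $b\succ_j a$; thus $a$ is a candidate of $C_r$ ranked below $b$ by $j$, and again $b\neq\text{bottom}(j,C_r)$. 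As $j$ was arbitrary, $b\notin L_r$, which proves $B(a)\cap L_r=\emptyset$.

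For the ``Hence'' part, suppose $(a,b)\in A$ with $a\in L_r$ and $b\in L_s$. The sets $L_1,L_2,\dots$ are pairwise disjoint, since $C_{t+1}=C_t\setminus L_t$ for every $t$; moreover, the only outgoing arcs of a vertex of $L_r$ that the algorithm ever adds are those added in the for-loop of iteration $r$ (the single arc added in the final clean-up step joins the two vertices of the terminal set, which lies in no $L_t$). Hence the arc $(a,b)$ is created in iteration $r$, so $b\in B(a)$ as computed there, and the first claim gives $s\neq r$. To rule out $s<r$, note that every $B(i,C_r,a)$ is contained in $C_r$, hence $B(a)\subseteq C_r$ and $b\in C_r$; but $b\in L_s$ gives $b\notin C_{s+1}$, and if $s<r$ the monotonicity $C_r\subseteq C_{s+1}$ would force $b\notin C_r$, a contradiction. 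Therefore $s>r$.

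Since everything is a short case analysis, I do not anticipate a genuine obstacle; the only place that needs a little care is the bookkeeping in the last paragraph — correctly matching each arc to the iteration that creates it, and invoking $C_r\subseteq C_{s+1}$ (valid because $s<r$) to locate $b$ relative to the current candidate set.
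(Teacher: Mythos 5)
Your proof is correct and follows essentially the same route as the paper's: the same two-case analysis of $B(j,C_r,a)$ (top vs.\ non-top voter) shows a bottom-ranked candidate of $C_r$ cannot lie in $B(a)$, and the ``Hence'' part likewise follows from $B(a)\subseteq C_r$ together with disjointness of the layers $L_1,L_2,\dots$. The only differences are cosmetic: you argue directly over all voters where the paper uses a contradiction with a witness voter, and you spell out the bookkeeping of which iteration creates each arc, which the paper leaves implicit.
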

\begin{proof}
	Note that the set $L_r$ is only well-defined when $|C_r|\ge 3$.
	Assume for a contradiction that $b \in B(a)$ and $b \in L_r$. Since $b \in L_r$, 
	there is some voter $i \in N$ such that $b = \text{bottom}(i, C_r)$.
	But then $b \not\succ_i a$ and $b\neq \text{second}(i, C_r)$ (as $|C_r|\ge 3$),
	so it cannot be the case that $b \in B(i, C_r, a)$, a contradiction.
	
	For the second statement, note that if $(a,b)\in A$ is an arc, then $b \in B(a)$. 
        Since $B(a) \subseteq C_r = C \setminus (L_1 \cup \ldots \cup L_{r - 1})$, 
        we must have $s \ge r$. By the previous paragraph, $s = r$ is impossible, and hence $s > r$.
\end{proof}

\begin{proposition}
	\label{prop:acyclic}
	Every attachment digraph $D = (C, A)$ is acyclic and has exactly one sink. 
\end{proposition}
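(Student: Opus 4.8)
The plan is to exploit the layered structure of Algorithm~\ref{alg:attachment-digraph}. Assume the algorithm does not return \textit{fail} on $P$, so that it outputs a genuine attachment digraph $D=(C,A)$, and let $t$ be the value of $r$ when the \emph{while}-loop terminates. Then $C$ is the disjoint union $L_1 \sqcup \dots \sqcup L_{t-1} \sqcup C_t$, where each $L_r$ is the layer removed in iteration $r$ and $|C_t|\le 2$. Note that the only arcs whose tail lies in $C_t$ are produced by the final \emph{if}-statement: there is exactly one such arc when $|C_t|=2$ (arbitrarily oriented), and none when $|C_t|\le 1$; every other arc has its tail in some $L_r$.

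First I would prove acyclicity by exhibiting a topological order: list the vertices of $L_1$, then those of $L_2$, and so on through $L_{t-1}$, and finally the vertices of $C_t$, choosing the order of the (at most two) vertices of $C_t$ so that the lone arc added by the final \emph{if}-statement, if present, points forward. By Proposition~\ref{prop:arcs-point-from-i-to-j}, any arc $(a,b)$ with $a\in L_r$ satisfies $b\in B(a)\subseteq C_{r+1}=L_{r+1}\sqcup\dots\sqcup L_{t-1}\sqcup C_t$, so $b$ occurs strictly later than $a$ in this list; and the only remaining arc, the one inside $C_t$, points forward by construction. Since $D$ admits a topological order, it is acyclic.

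Next, since every dag has at least one sink, it remains to prove uniqueness. The key observation is that every vertex $a\in L_r$ has $d^+(a)\ge 1$: at the moment $a$ is handled in the \emph{for}-loop the algorithm does not fail, so $B(a)\ne\emptyset$, and an arc $(a,b)$ is added for each $b\in B(a)$. Hence any sink of $D$ must lie in $C_t$, which in particular forces $C_t\ne\emptyset$ (alternatively this follows from Propositions~\ref{prop:bottom-is-leaf} and~\ref{prop:spt-equivalences}, which give $|L_r|\le|C_r|-1$ whenever $|C_r|\ge 3$), so $|C_t|\in\{1,2\}$. If $|C_t|=1$ its unique element is the only vertex of out-degree $0$. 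If $C_t=\{v,w\}$ with the final arc oriented $(v,w)$, then $d^+(v)\ge 1$, while $w$---never processed in a \emph{for}-loop and not the tail of the final arc---has $d^+(w)=0$; in either case $D$ has exactly one sink.

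I expect the only delicate point to be the bookkeeping around the last iteration: confirming that $|C_t|\in\{1,2\}$ (not $0$) and that the single arbitrarily oriented arc inside $C_t$ neither closes a cycle nor creates a second sink. Both issues are resolved by the remark that all vertices outside $C_t$ have positive out-degree---this pins the unique sink inside $C_t$ and rules out $C_t=\emptyset$---together with the freedom to orient the topological order to agree with that last arc. Everything else is an immediate consequence of Proposition~\ref{prop:arcs-point-from-i-to-j}.
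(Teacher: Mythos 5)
Your proof is correct and follows essentially the same route as the paper: acyclicity via a topological order induced by the layers $L_1,\dots,L_{t-1},C_t$ together with Proposition~\ref{prop:arcs-point-from-i-to-j}, and uniqueness of the sink from the fact that every vertex outside the last layer has positive out-degree plus the single arc added in the final \emph{if}-clause. Your extra bookkeeping (ruling out $C_t=\emptyset$ and orienting the last two vertices consistently with the final arc) only makes explicit details the paper leaves implicit.
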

\begin{proof}
	Suppose that the \emph{while}-loop of Algorithm~\ref{alg:attachment-digraph} 
        is executed $R-1$ times, and consider the sets $L_1,\dots,L_{R-1}$. 
	Set $L_R := C \setminus (L_1 \cup \ldots \cup L_{R-1})$. 
	Then $L_1, \dots, L_R$ is a partition of $C$. 
	
	For acyclicity, note that for each $a \in L_r$ with $1\le r< R$ 
        we have $B(i, C_r, a) \subseteq C_r$ and hence
        $B(a) \subseteq C_r$.
	Together with Proposition~\ref{prop:arcs-point-from-i-to-j}, this implies that
        if $a\in L_r$ then all outgoing arcs of $a$ point into $L_{r+1} \cup \ldots \cup L_R$.
	Hence, based on the partition $L_1,\dots,L_R$, the set $D$ can be topologically ordered
        and thus cannot contain a cycle.
	
	For the number of sinks, note that there is at least one sink in $D$ 
        because $D$ is acyclic. Since for every $a \in C \setminus L_R$ 
        we have $B(a) \neq \emptyset$, 
	at least one outgoing arc of $a$ is added to $D$. Thus, no vertex in $C\setminus L_R$ 
        is a sink. 
	The condition of the \emph{while}-loop implies that $|L_R| \le 2$.
	If $|L_R| = 1$, then there is exactly one sink, and we are done.
	If $|L_R| = 2$, then the final {\bf if} clause of the algorithm 
        adds an arc between the two vertices in $L_R$, which ensures that only one of them is a sink.
\end{proof}

If we wish to extract a tree $T \in \mathcal T(P)$ from the attachment digraph $D$, Trick's 
algorithm tells us that we must choose, for each non-sink vertex of $D$, exactly one outgoing 
arc, and add this arc as an edge. To formalize this process, we denote the sink vertex by $t$, 
and say that a function $f : C \setminus \{t\} \to C$ is an \emph{attachment function} for $D$ 
if $(a,f(a)) \in A$ is an arc 
of $D$ for every $a \in C \setminus \{t\}$. Thus, $f$ specifies one outgoing arc for each 
$a \in C \setminus \{t\}$. Given an attachment function $f$, we write $T(f)$ for the tree on $C$ 
with edge set
\[ 
\{ \{ a, f(a) \} : a \in C \setminus \{t\}  \}. 
\]
We now prove that every attachment function corresponds to a tree, 
and all trees in $\mathcal T(P)$ can be obtained in this way.

\begin{theorem}\label{thm:trick}
	Let $P$ be a profile that is single-peaked on some tree, and let $D$ be its attachment digraph. 
	Then $T \in \mathcal T(P)$ if and only if $T = T(f)$ for some attachment function~$f$.
	In other words, $P$ is single-peaked on a tree $T$ if and only if 
	the set of edges of $T$ is obtained by picking exactly one outgoing arc 
	for each non-sink vertex of $D$ and converting it into an undirected edge. 
\end{theorem}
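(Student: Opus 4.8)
The plan is to prove both directions by induction on $|C|$, mirroring the structure of Trick's algorithm and reusing the propositions already established. The base cases $|C| \le 2$ are immediate: the algorithm adds the single arc between the two remaining candidates, there is exactly one attachment function, and the corresponding (unique) tree on $C$ is the only tree, on which every profile is single-peaked. For $|C| \ge 3$ the \emph{while}-loop executes at least once; let $L_1 = \{\,\mathrm{bottom}(i, C) : i \in N\,\}$ and $C_2 = C \setminus L_1$. Since $P$ is single-peaked on some tree, $B(a) \neq \emptyset$ for each $a \in L_1$ (by Corollary~\ref{cor:must-attach-to-B}), so the algorithm does not fail in the first iteration and then proceeds exactly as if run on $P|_{C_2}$, building the attachment digraph $D_{-L_1}$ of $P|_{C_2}$ as a sub-digraph together with the arcs $(a, b)$ for $a \in L_1$, $b \in B(a)$.

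For the ``if'' direction, suppose $f$ is an attachment function for $D$. By Proposition~\ref{prop:arcs-point-from-i-to-j}, every arc out of a vertex in $L_1$ points into $C_2$, so $f$ restricted to $C_2 \setminus \{t\}$ is an attachment function for $D_{-L_1}$, and $T(f)|_{C_2} = T(f|_{C_2})$. By the inductive hypothesis, $P|_{C_2}$ is single-peaked on $T(f|_{C_2})$. Now $T(f)$ is obtained from $T(f|_{C_2})$ by attaching each $a \in L_1$ as a leaf adjacent to $f(a) \in B(a)$. Since each $a \in L_1$ is bottom-ranked by some voter (hence occurs bottom-ranked in $P|_{\{a\} \cup C_2}$, as $C_2$ contains all candidates except those in $L_1$ and $a$ ranks below all of $C_2$), we can peel the candidates of $L_1$ off one at a time and apply Proposition~\ref{prop:can-attach} repeatedly: after attaching the first leaf we obtain a tree on which the profile restricted to that larger set is single-peaked, and we iterate. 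One technical point to check is that $B(a)$ computed by the algorithm on $C$ (which uses $B(i, C, a)$) agrees with the set $B(a)$ relevant to Proposition~\ref{prop:can-attach} at the moment $a$ is attached; this holds because $a$ is ranked last among $\{a\}\cup C_2$ by the relevant voters, so $\mathrm{top}(i, \{a\}\cup C_2) = a$ iff $\mathrm{top}(i) = a$, and $\mathrm{second}$ in that restriction equals $\mathrm{top}(i, C_2)$, which lies in $C_2$. Concluding, $P$ is single-peaked on $T(f)$.

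For the ``only if'' direction, suppose $P$ is single-peaked on a tree $T$. By Proposition~\ref{prop:bottom-is-leaf}, every $a \in L_1$ is a leaf of $T$, so $T|_{C_2}$ is a tree and, by Proposition~\ref{prop:spt-equivalences}, $P|_{C_2}$ is single-peaked on $T|_{C_2}$. By the inductive hypothesis there is an attachment function $g$ for $D_{-L_1}$ with $T|_{C_2} = T(g)$. Extend $g$ to $f$ on $C \setminus \{t\}$ by setting $f(a)$ to be the unique neighbor of $a$ in $T$ for each $a \in L_1$; by Corollary~\ref{cor:must-attach-to-B} this neighbor lies in $B(a)$, so $(a, f(a))$ is an arc of $D$ and $f$ is a genuine attachment function. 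Since $T$ is obtained from $T|_{C_2}$ by re-attaching each leaf of $L_1$ to its neighbor, $T = T(f)$, as desired.

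The main obstacle I anticipate is the bookkeeping around the ``peeling'' step: Proposition~\ref{prop:can-attach} is stated for removing/attaching a \emph{single} bottom-ranked candidate, whereas one iteration of the algorithm removes the whole set $L_1$ at once, and different candidates in $L_1$ are incomparable leaves. I would handle this by ordering $L_1 = \{a_1, \dots, a_\ell\}$ arbitrarily and noting that for each $j$, the candidate $a_j$ is bottom-ranked (by the relevant voter) in the profile restricted to $C_2 \cup \{a_1, \dots, a_j\}$, and that the sets $B(a_j)$ are unchanged whether computed relative to $C$ or to this larger set (since all of $L_1$ sits strictly below $C_2$ in every vote where $a_j$ is bottom, except possibly positions occupied by other $a_i$'s, which do not affect membership in $B(i, \cdot, a_j)$ as those sets only ever reference $\mathrm{top}$, $\mathrm{second}$, or candidates ranked above $a_j$). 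Making this invariance precise — i.e.\ that the algorithm's $B(a)$ is the ``right'' $B(a)$ for applying Proposition~\ref{prop:can-attach} — is the one place where some care is genuinely needed; everything else is a routine induction.
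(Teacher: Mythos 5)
Your proposal is correct. The ``only if'' direction is exactly the paper's argument: induction on $|C|$, peeling off $L_1$ via Proposition~\ref{prop:bottom-is-leaf} and Proposition~\ref{prop:spt-equivalences}, applying the inductive hypothesis to $P|_{C_2}$ with attachment digraph $D|_{C_2}$, and extending the attachment function using Corollary~\ref{cor:must-attach-to-B}. Where you diverge is the ``if'' direction: the paper disposes of it in two lines by observing that $T(f)$ is a possible output of Trick's procedure (Algorithm~\ref{alg:trick}), whose correctness is already Theorem~\ref{thm:trick-is-correct}; you instead inline that content, re-attaching the vertices of $L_1$ one at a time via Proposition~\ref{prop:can-attach}. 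Your route is sound, and the ``technical point'' you flag does go through: since $f(a)\in B(a)$ forces $f(a)\in C_2$ (Proposition~\ref{prop:arcs-point-from-i-to-j}), for any voter $i$ with $\mathrm{top}(i)\neq a$ we have $f(a)\succ_i a$, so $a$ cannot be $i$'s top choice in the restricted candidate set either, and for a voter with $\mathrm{top}(i)=a$ we have $f(a)=\mathrm{second}(i)$, which remains the second-ranked candidate after restriction; hence $f(a)$ lies in the $B(a)$ relevant to each application of Proposition~\ref{prop:can-attach}. (Your stated justification, ``$\mathrm{top}(i,\{a\}\cup C_2)=a$ iff $\mathrm{top}(i)=a$,'' is the right claim but should be argued as above rather than asserted.) The only cost of your approach is redundancy: the peeling bookkeeping reproduces the proof of Theorem~\ref{thm:trick-is-correct}, so citing that theorem, as the paper does, yields the same conclusion with no extra work.
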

\begin{proof}
	Suppose $T = T(f)$ for some attachment function $f$. Then $T$ is a possible output 
        of Algorithm~\ref{alg:trick}, for a suitable way of making the selections 
        from $B(a)$ for each vertex $a$ processed in the \emph{while}-loop. 
        Thus, by Theorem~\ref{thm:trick-is-correct}, the profile $P$ is single-peaked on $T$.
	
	We prove the converse by induction on $|C|$. If $|C| \le 2$, then $P$ is single-peaked 
        on the unique tree on $C$, which can be obtained as $T(f)$ for the unique attachment 
        function $f$. So suppose that $|C| \ge 3$, and that $T = (C, E)$ is a tree such that $P$ 
        is single-peaked on $T$. During the first iteration of 
        Algorithm~\ref{alg:attachment-digraph}, the algorithm determines the set $L_1$ of 
        candidates occurring in bottom position, and sets $C_2 = C \setminus L_1$. 
        By Proposition~\ref{prop:bottom-is-leaf}, each vertex in $L_1$ is a leaf of $T$. 
        Hence, the induced subgraph $T|_{C_2}$ is also a tree, and thus $P|_{C_2}$ 
        is single-peaked on $T|_{C_2}$. Also, by inspection of 
        Algorithm~\ref{alg:attachment-digraph}, the attachment digraph of $P|_{C_2}$ 
        is $D|_{C_2}$. By the inductive hypothesis, $T|_{C_2} = T(f')$ for some attachment 
        function $f'$ defined for $D|_{C_2}$. Thus, we can define an attachment function 
        $f$ so that for each $a \in C_2 \setminus \{t\}$ we set $f(a) = f'(a)$, 
        and for each $a \in L_1$ we set $f(a)$ to be the unique neighbor of $a$ in $T$. 
        By Corollary~\ref{cor:must-attach-to-B}, $T$ is obtained from $T|_{C_2}$ by attaching 
        each $a \in L_1$ to an element of $B(a)$, which implies that $f$ is a legal attachment 
        function. Thus, $T = T(f)$, which proves the claim.
\end{proof}

Using this characterization of the set $\mathcal T(P)$ and
noting that $T(f_1) \neq T(f_2)$ whenever $f_1 \neq f_2$,
we can conclude that the number of trees in $\mathcal T(P)$ 
is equal to the number of different attachment functions. 
This observation can be restated as follows.
 
\begin{corollary}\label{cor:counting}
	The number of trees in $\mathcal T(P)$ is equal to the product 
        of the out-degrees of the non-sink vertices of~$D$.
	Hence we can compute $|\mathcal T(P)|$ in polynomial time.
\end{corollary}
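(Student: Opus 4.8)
The plan is to deduce the corollary directly from Theorem~\ref{thm:trick} together with the observation (flagged just before the statement) that $T(f_1)\neq T(f_2)$ whenever $f_1\neq f_2$. Theorem~\ref{thm:trick} tells us that $f\mapsto T(f)$ is a surjection from the set of attachment functions onto $\mathcal T(P)$; to turn this into an exact count we must check that it is also injective, so the first key step is to establish this injectivity.

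For the injectivity step, suppose $f_1\neq f_2$ and pick a non-sink vertex $a\in C\setminus\{t\}$ with $f_1(a)\neq f_2(a)$. Consider the edge $e=\{a,f_1(a)\}$ of $T(f_1)$. If $e$ were also an edge of $T(f_2)$, then by the definition of $T(f_2)$ we would have $e=\{b,f_2(b)\}$ for some $b\in C\setminus\{t\}$; since $f_2(a)\neq f_1(a)$ we cannot have $b=a$, so $b=f_1(a)$ and $f_2(f_1(a))=a$. But then $(a,f_1(a))$ and $(f_1(a),a)$ are both arcs of $D$ (the first by the definition of the attachment function $f_1$, the second by that of $f_2$), contradicting the acyclicity of $D$ proved in Proposition~\ref{prop:acyclic}. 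Hence $e\notin E(T(f_2))$ and so $T(f_1)\neq T(f_2)$. (The degenerate cases $|C|\le 2$ are covered separately, exactly as in Theorem~\ref{thm:trick}.)

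It then remains to count attachment functions and to address complexity. An attachment function independently chooses, for each non-sink vertex $a$, one of its $d^+(a)$ outgoing arcs, and the sink $t$ is excluded from the domain; hence the number of attachment functions is $\prod_{a\ \mathrm{non\text{-}sink}} d^+(a)$, which by surjectivity of $f\mapsto T(f)$ and the injectivity just shown equals $|\mathcal T(P)|$. For the final sentence, I would recall that Algorithm~\ref{alg:attachment-digraph} builds $D$ in time $O(|N|\cdot|C|^2)$, after which the out-degrees and their product are obtained with $O(|C|)$ arithmetic operations on integers of $O(|C|\log|C|)$ bits, which is polynomial even though $|\mathcal T(P)|$ itself may be exponential. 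I do not expect a genuine obstacle anywhere: the only nontrivial point is the injectivity of $f\mapsto T(f)$, and that reduces immediately to the absence of a $2$-cycle in $D$; everything else is bookkeeping on top of Theorem~\ref{thm:trick}.
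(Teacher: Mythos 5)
Your argument is correct and follows essentially the same route as the paper: the paper deduces the count from Theorem~\ref{thm:trick} together with the (asserted) fact that $T(f_1)\neq T(f_2)$ whenever $f_1\neq f_2$, and then notes that attachment functions are counted by the product of out-degrees of non-sink vertices. The only difference is that you spell out the injectivity via the acyclicity of $D$ (ruling out a $2$-cycle), a detail the paper leaves implicit.
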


\begin{figure}
	\centering
		\begin{subfigure}[b]{0.4\linewidth}
		\centering
		\begin{tikzpicture}
		[	node distance=6mm and 11mm, inner sep=1.6pt]
		
		\node (b) {$b$};
		\node (a) [above left=of b] {$a$};
		\node (c) [below left=of b] {$c$};
		\node (d) [below right=of b] {$d$};
		\node (e) [above right=of b] {$e$};
		
		\path 
		(a) edge (b)
		(e) edge (b)
		(d) edge (b)
		(c) edge (b)
		;
		\end{tikzpicture}
	\end{subfigure}%
	\begin{subfigure}[b]{0.4\linewidth}
		\centering
		\begin{tikzpicture}
		[	node distance=12mm and 13mm, inner sep=2pt]
		
		\node (a) {$a$};
		\node (b) [right=of a] {$b$};
		\node (c) [right=of b] {$c$};
		\node (d) [right=of c] {$d$};
		\node (e) [above=of b] {$e$};
		
		\path 
		(a) edge (b)
		(e) edge (b)
		(d) edge (c)
		(c) edge (b)
		;
		\end{tikzpicture}
	\end{subfigure}
	\caption{The set $\mathcal{T}(P_2)$ of trees on which the profile $P_2$ from 
                 Example~\ref{ex:attachment-digraphs} is single-peaked consists of these two trees.
                 For the tree on the left, the attachment function has $f(d) = b$, 
                 while for the tree on the right, it has $f(d) = c$.}
	\label{fig:ex-all-trees}
\end{figure}
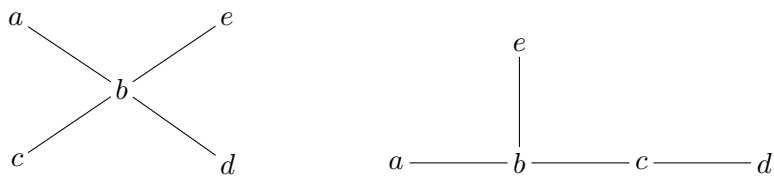

For the profiles in Example~\ref{ex:attachment-digraphs}, we see that $P_1$ is single-peaked on a 
unique tree (a path), that $P_2$ is single-peaked on exactly 2 trees (shown in 
Figure~\ref{fig:ex-all-trees}), and that $P_3$ is single-peaked on exactly 336 different trees.

It turns out that attachment digraphs have a lot of structure beyond the results of 
Proposition~\ref{prop:acyclic}. A key property, which will allow us to use essentially greedy algorithms, 
is what we call circumtransitivity.

\begin{definition}\label{def:circum}
	A directed acyclic graph $D=(C,A)$ is \emph{circumtransitive} if its vertices can be partitioned 
	into a set $\forced$ of \emph{forced} vertices and 
	a set $\free = C \setminus \forced$ of \emph{free} vertices 
	such that
	\begin{enumerate}
		\item every forced vertex $a \in \forced$ has out-degree at most 1, 
		      and if $(a,b) \in A$ then also $b \in \forced$, and
		\item every free vertex $a \in \free$ has out-degree at least 2, 
		      and whenever $a,b \in \free$ and $c \in C$ are such that $(a,b),(b,c)\in A$, 
	              then $(a,c)\in A$.
	\end{enumerate}
\end{definition}

\noindent
Intuitively, a circumtransitive digraph consists of an inner part (the \textit{forced part}), 
and an outer part, which is transitively attached to the inner part.

Recall that every directed acyclic graph $D$ has at least one sink. If $D$ is also 
circumtransitive, then its sink must be among the forced vertices.

\begin{theorem} \label{thm:circumtransitive}
	Every attachment digraph $(C, A)$ is circumtransitive.
\end{theorem}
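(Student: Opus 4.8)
The plan is to partition $C$ by out-degree: put $\free = \{a \in C : d^+(a) \ge 2\}$ and $\forced = \{a \in C : d^+(a) \le 1\}$. With this choice, the first half of each of the two clauses of Definition~\ref{def:circum} (forced vertices have out-degree $\le 1$, free vertices have out-degree $\ge 2$) holds by fiat, so the real content is in the two closure statements. The key preliminary observation I would record is a dichotomy for the set $B(a)$ computed when $a \in L_r$ is processed: if some voter $i$ has $\operatorname{top}(i, C_r) = a$, then $B(a) \subseteq B(i,C_r,a) = \{\operatorname{second}(i, C_r)\}$, so $|B(a)| \le 1$ and $a \in \forced$. Contrapositively, every free vertex $a$ is processed in a layer $L_r$ in which no voter tops $a$ within $C_r$, and therefore $B(a) = \bigcap_{i\in N}\{c \in C_r : c \succ_i a\} = \{c \in C_r : c \succ_i a \text{ for all } i \in N\}$. (Free vertices cannot be among the at most two candidates remaining when the while-loop terminates, since those have out-degree $\le 1$, so every free vertex is genuinely processed by the loop.) Write $r(a)$ for the index of the layer containing $a$; Proposition~\ref{prop:arcs-point-from-i-to-j} says every arc $(a,b) \in A$ whose tail $a$ is processed by the loop satisfies $r(a) < r(b)$, so in particular $C_{r(b)} \subseteq C_{r(a)}$.

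For the second half of clause~2, suppose $a,b \in \free$ and $(a,b),(b,c) \in A$. Since $a$ is free it is processed by the loop, so $b \in B(a)$, and by the dichotomy $b \succ_i a$ for every voter $i$. Since $b$ is free it too is processed by the loop, so $c \in B(b)$, giving $c \in C_{r(b)}$ and $c \succ_i b$ for every $i$; as $r(b) > r(a)$ this puts $c \in C_{r(a)}$, and transitivity of each $\succ_i$ yields $c \succ_i a$ for all $i$. Because $a$ is free, $B(a) = \{d \in C_{r(a)} : d \succ_i a \text{ for all } i\}$, so $c \in B(a)$, i.e.\ $(a,c) \in A$, as required.

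For the second half of clause~1 I would argue the contrapositive: if $b \in \free$ and $(a,b) \in A$, then $a \in \free$. The crucial step is to show that no voter tops $a$ in $C_{r(a)}$. Suppose for contradiction that $\operatorname{top}(i, C_{r(a)}) = a$ for some voter $i$. Then $b \in B(a) \subseteq \{\operatorname{second}(i, C_{r(a)})\}$ forces $b = \operatorname{second}(i, C_{r(a)})$, so within $C_{r(a)}$ voter $i$ ranks $a$ first and $b$ second. Since $r(b) > r(a)$ and $b \notin L_{r(a)}$ (Proposition~\ref{prop:arcs-point-from-i-to-j}), we get $b \in C_{r(b)} \subseteq C_{r(a)}\setminus\{a\}$, and hence $b = \operatorname{top}(i, C_{r(b)})$ — contradicting the fact, noted above, that no voter tops the free vertex $b$ in $C_{r(b)}$. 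So no voter tops $a$ in $C_{r(a)}$, whence $B(a) = \{d \in C_{r(a)} : d \succ_i a \text{ for all } i\}$ and in particular $b \succ_i a$ for all $i$. Finally, since $b$ is free, $B(b)$ has at least two elements, all distinct from $b$; pick one, say $c'$, so $c' \in C_{r(b)} \subseteq C_{r(a)}$ and $c' \succ_i b$ for all $i$, hence $c' \succ_i a$ for all $i$ by transitivity, so $c' \in B(a)$. Now $b, c' \in B(a)$ with $b \neq c'$, so $d^+(a) = |B(a)| \ge 2$, i.e.\ $a \in \free$. Since $D$ is acyclic by Proposition~\ref{prop:acyclic}, this establishes that $D$ is circumtransitive.

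I expect the single delicate point to be the argument that no voter tops $a$ in $C_{r(a)}$ when $(a,b) \in A$ and $b$ is free: it is the only place where one must track how a voter's top candidate shifts as layers are peeled away, rather than simply combine transitivity of the individual preferences with the layering of Proposition~\ref{prop:arcs-point-from-i-to-j}.
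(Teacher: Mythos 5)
Your proof is correct and relies on essentially the same ingredients as the paper's: the same partition of $C$ by out-degree, the same dichotomy (a vertex topped by some voter within its layer has $|B(\cdot)|\le 1$, while otherwise $B(\cdot)$ is exactly the set of remaining candidates unanimously preferred to it), and the same combination of transitivity with the layering of Proposition~\ref{prop:arcs-point-from-i-to-j}; your free-clause argument is the paper's argument verbatim. For the forced clause you argue contrapositively (from $b$ free you deduce $a$ free, ruling out a voter topping $a$ by observing that such a voter would top $b$ in the later layer $C_{r(b)}$), whereas the paper argues directly that if $d^+(a)=1$ and its out-neighbor $b$ is not a sink then some voter must top $b$ in $b$'s layer --- the same facts in a different order; the one step you leave implicit, namely that $a$ itself is processed by the \emph{while}-loop, is immediate, since the only arc leaving a vertex of the final leftover set points to the other leftover vertex, which has out-degree at most one and so cannot be the free vertex $b$.
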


\begin{proof}
	We will argue that the partition
	\[  \forced=\{a: d^+(a) \le 1\}, \qquad \free=\{a: d^+(a) \ge 2\}. \]
        satisfies the conditions in Definition~\ref{def:circum}.
	Suppose that the \emph{while}-loop of Algorithm~\ref{alg:attachment-digraph} 
        is executed $R-1$ times, partitioning the set $C$ as $L_1\cup\ldots L_{R-1}\cup L_R$,
	where $L_R:=C\setminus(L_1\cup\ldots\cup L_{R-1})$. 
	
	\smallskip
	\noindent\textit{Forced:} Let $a\in\forced$.
	If $d^+(a) = 0$, there is nothing to prove, so assume that 
	$d^+(a) = 1$, i.e., $(a,b)\in A$ for some $b\in C$. We will show that 
	$d^+(b) \in \{0,1\}$ and hence that $b \in \forced$. 
	
	If $b$ is a sink, we are done. 
	Otherwise, there exists an arc $(b,c)\in A$ for some $c\in C$.
	Suppose that $a \in L_r$ and $b \in L_s$ for some $1 \le r, s \le R$. 
        Note that $r < s\le R-1$ because neither $a$ nor $b$ are sinks, 
        and $(a,b)\in A$ (see Proposition~\ref{prop:arcs-point-from-i-to-j}).
	We will argue that $\text{top}(i, C_s)=b$ for some $i\in N$;
	this shows that $b$ has exactly one out-neighbor, as desired.

	Indeed, suppose this is not the case. Then $c\in B(b)$ implies 
	that $c\succ_i b$ for each $i\in N$. As $c\in C_r$, it cannot be the case
	that $\text{top}(i, C_r)=a$, $\text{second}(i, C_r)=b$ for some $i\in N$.
	Consequently, $B(i, C_r, a) = \{x\in C_r: x\succ_i a\}$ for each $i\in N$.
	As we have $B(a) = \{b\}$, it follows that $b\succ_i a$ for each $i\in N$.
	But then by transitivity $c\succ_i a$ for each $i\in N$, and hence $c\in B(a)$,
	a contradiction.
%
	
	\noindent\textit{Free:} 
	Consider vertices $a,b,c\in C$ with $a,b\in\free$ and $(a,b),(b,c)\in A$. 
	Since $a,b \in \free$, we have  $a,b \not\in L_R$ (indeed, recall that the out-degree
	of each vertex in $L_R$ is at most 1). 
	Thus, there exist $r, s$ with $1\le r<s< R$ such that $a \in L_r$ and $b \in L_s$.
	Note that if there was a voter $i \in N$ with $\text{top}(i, C_r) = a$, 
        then $|B(a)| = 1$, a contradiction with $d^+(a) > 1$. 
	Hence $\text{top}(i, C_r) \neq a$ for all $i \in N$.
	As $(a,b)\in A$, we have $b \in B(i, C_r, a)$ and therefore 
        $b \succ_i a$ for all $i\in N$. Similarly, since $(b,c)\in A$ and $d^+(b) > 1$, 
        we have $c \succ_i b$ for all $i \in N$. 
	Hence, by transitivity, $c \succ_i a$ for all $i\in N$. 
	Therefore $c\in \bigcap_{i\in N} B(i, C_r, a) = B(a)$, 
        and so $(a,c)\in A$, as desired.
\end{proof}

Suppose that $f$ is an attachment function for $D$.
Then for each forced vertex $a \in \forced \setminus \{t\}$, 
the value of $f(a)$ is uniquely determined, since $a$ has exactly one out-neighbor.
Note also that $D|_{\forced}$ is connected because we 
can reach the sink $t$ from every forced vertex. Hence, 
$\mathcal{G}(D|_{\forced})$ is a tree. 
It follows that for every $T \in \mathcal T(P)$, 
the tree $\mathcal{G}(D|_{\forced})$ is a subtree of $T$.

We will now study the free vertices $\free$ in more detail. The following 
proposition states that for every free vertex $a$, 
we can identify a pair of forced vertices that are adjacent in $D$ such that
$a$ can be attached to either of these vertices. 

\begin{proposition}\label{prop:twoforced}
	Suppose $|C| \ge 3$. For every free vertex $a \in \free$ of the attachment digraph $D=(C, A)$, there are two forced vertices $b, c \in \forced$ with 
	$(a,b), (a,c), (b,c) \in A$.
\end{proposition}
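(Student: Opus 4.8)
The plan is to start from a free vertex $a \in \free$ and use the fact that $d^+(a) \ge 2$, so $a$ has at least two outgoing arcs, say $(a, b_1), (a, b_2) \in A$ with $b_1 \neq b_2$. I want to show that among the out-neighbours of $a$ I can find two that are forced and moreover adjacent in $D$. The key tool will be circumtransitivity (Theorem~\ref{thm:circumtransitive}) together with acyclicity and the single-sink property (Proposition~\ref{prop:acyclic}). First I would observe that $a$ belongs to $L_r$ for some $r < R$, and since $a$ is free we have $\operatorname{top}(i, C_r) \neq a$ for all $i \in N$ (otherwise $|B(a)| = 1$), so $B(a) = \bigcap_{i\in N}\{x \in C_r : x \succ_i a\}$, a top-initial-segment-style intersection. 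The out-neighbours of $a$ are exactly the elements of $B(a)$.

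Next I would argue that $B(a)$ induces a structure in $D$ that contains an arc. The natural claim is: $D|_{B(a)}$ contains at least one arc. To see this, note $B(a) \subseteq C_r$ has size $\ge 2$; pick any two elements $b, c \in B(a)$. If neither $(b,c)$ nor $(c,b)$ is an arc, I want a contradiction. Here I would use that $c \succ_i a$ and $b \succ_i a$ for all voters $i$, and try to show one of $b,c$ must lie ``below'' the other relative to $a$ in every vote, forcing an arc — this is essentially re-deriving why $B(a)$ behaves like a connected/chain-like set. Actually, a cleaner route: among $B(a)$, consider the vertex $b^* \in B(a)$ that is processed latest by the algorithm (largest $L$-index); I'd aim to show every other $c \in B(a)$ has an arc to $b^*$, because $b^* $ ends up being the ``attachment point'' on the path from $a$ inward. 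Alternatively, take the subtree $\mathcal{G}(D|_{\forced})$, which is a tree containing the sink $t$; since from $a$ we can reach $t$ along arcs, the directed path from $a$ eventually enters $\forced$, and I want its first forced vertex $b$ together with $b$'s forced out-neighbour $c$ to both lie in $B(a)$ — with transitivity of the free part giving $(a,c) \in A$ and $(a,b)\in A$, and $(b,c)\in A$ from the path/forced structure.

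So the concrete steps I would carry out are: (1) Fix $a \in \free$; take a directed path $a = v_0 \to v_1 \to v_2 \to \cdots \to t$ in $D$ to the sink. (2) Let $b = v_j$ be the first vertex on this path lying in $\forced$ (it exists since $t \in \forced$, and $j \ge 1$ since $a$ is free). (3) Show $(a, b) \in A$: all of $v_1, \dots, v_{j-1}$ are free, and by repeated application of the ``free'' condition of circumtransitivity (Definition~\ref{def:circum}(2)) along $a \to v_1 \to \cdots \to v_j$, we get $(a, v_j) = (a,b) \in A$; I need the intermediate vertices to be free, which holds by choice of $b$. (4) Since $b \in \forced$ and $b \neq t$ (if $b = t$, handle separately — but if $j \ge 2$ then $v_{j-1}$ is free with an arc into $t$, and applying the free condition to $a \to \cdots \to v_{j-1} \to t$ still gives $(a,t)\in A$; and then I need a \emph{second} forced vertex, so I may want to choose $b$ more carefully as the first forced non-sink, or argue directly), $b$ has a unique out-neighbour $c$, and $c \in \forced$ by Definition~\ref{def:circum}(1); also $(b,c)\in A$. (5) Finally deduce $(a,c) \in A$: we have $(a,b), (b,c) \in A$ with $a$ free. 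If $b$ were free this would be immediate, but $b$ is forced, so I cannot directly invoke the free condition — instead I would go back to the voter-level argument: $b \in B(a)$ means $b \succ_i a$ for all $i$ (using $\operatorname{top}(i,C_r)\neq a$), and $(b,c) \in A$ means $c \succ_i b$ for all $i$ (here I need $\operatorname{top}(i, C_s) \neq b$, which is exactly the statement established inside the proof of the ``forced'' case of Theorem~\ref{thm:circumtransitive}, since $b$ is forced with a unique out-neighbour — in fact that proof shows $\operatorname{top}(i,C_s) = b$ for \emph{some} $i$, so I must be slightly careful: re-examine whether $c \succ_i b$ holds for the voter achieving $\operatorname{top}(i, C_s)=b$, for whom $B(i,C_s,b) = \{\operatorname{second}(i,C_s)\} = \{c\}$, giving $c = \operatorname{second}(i,C_s)$ hence $c \succ_i b$ still); so in all cases $c \succ_i b \succ_i a$ for every $i$, whence $c \in B(i,C_r,a)$ for all $i$, so $c \in B(a)$ and $(a,c)\in A$.

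The main obstacle I anticipate is step (5), reconciling the forced vertex $b$ with the transitivity argument: circumtransitivity's transitivity clause only applies when the middle vertex is \emph{free}, so I must descend to the voter-level ``$\succ_i$'' reasoning and be careful about the branch where some voter has $\operatorname{top}(i, C_s) = b$ (the case $B(i,C_s,b) = \{\operatorname{second}(i,C_s)\}$), making sure $c \succ_i b$ still holds there. A secondary subtlety is ensuring $b \neq t$ and that $c$ is genuinely a \emph{second, distinct} forced vertex (i.e., $b \neq c$, which holds by acyclicity); if the first forced vertex on the path is the sink itself I should instead pick $b$ to be the last free vertex's forced out-neighbour and then take $c$ to be $b$'s out-neighbour, so picking $b$ as the \emph{first forced vertex} and noting $|C| \ge 3$ guarantees $\forced$ has at least two vertices and $b$ has an out-neighbour.
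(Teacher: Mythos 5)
There is a genuine gap at your step (5), and it is exactly at the point you flagged and then talked yourself out of. If some voter $i$ has $\operatorname{top}(i,C_s)=b$, then $B(i,C_s,b)=\{\operatorname{second}(i,C_s)\}$ forces $c=\operatorname{second}(i,C_s)$, which means $b \succ_i c$, \emph{not} $c \succ_i b$ --- the second-ranked candidate is below the top. For such a voter you know $b \succ_i a$ and $b \succ_i c$, but nothing relates $c$ and $a$, so you cannot conclude $c\in B(i,C_r,a)$. In fact your chosen pair can genuinely fail: in profile $P_3$ of Example~\ref{ex:attachment-digraphs}, the vertex $a$ is free with $B(a)=\{b,c,d,e\}$, the path $a\to e\to f$ has $e$ as its first forced vertex with unique out-neighbour $f$, yet $(a,f)\notin A$ because voter $2$ ranks $a \succ_2 f$ (voter $2$ is precisely a voter with $\operatorname{top}(2,C_5)=e$). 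So ``first forced vertex on an arbitrary path, then its forced out-neighbour'' does not work; the statement is saved only because $a$ has \emph{other} forced out-neighbours ($c$ and $d$) that happen to be adjacent. A secondary unresolved issue is the case where the first forced vertex on your path is the sink $t$: the sink has out-degree $0$, so ``$b$ has an out-neighbour'' fails and you would need a second, different forced out-neighbour of $a$ --- whose existence is not something your argument establishes.

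The paper's proof is structured precisely to avoid these traps. It first shows $a$ has \emph{some} forced out-neighbour (shortest-path argument, essentially your steps (1)--(3)); then it shows $a$ has \emph{at least two} forced out-neighbours via a separate extremal argument on ``semi-forced'' free vertices (taking the latest layer $L_r$ containing one and deriving a contradiction through circumtransitivity); and then it shows that the set $\{b\in\forced : (a,b)\in A\}$ is connected in $\mathcal G(D|_{\forced})$ --- and this last step does \emph{not} use digraph-level transitivity at all, but instead picks an actual tree $T\in\mathcal T(P)$, uses that $\mathcal G(D|_{\forced})$ is a subtree of $T$, and applies Proposition~\ref{prop:spt-equivalences} to the top-initial segment $\{x : x\succ_i a\}$ to pull the whole path between two forced out-neighbours into $B(a)$. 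That appeal to single-peakedness on a concrete tree is the missing ingredient in your plan: voter-level transitivity cannot cross a forced middle vertex, as the example above shows, so some argument of this connectivity type (plus the ``at least two forced out-neighbours'' step) is needed to get an adjacent forced pair $b,c$ with $(a,b),(a,c),(b,c)\in A$.
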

\begin{proof}
	Our proof proceeds in four steps. Let $a \in \free$ be a free vertex.
	
	\smallskip

	\noindent \emph{Step 1.} There is a forced vertex $b \in \forced$ with $(a,b) \in A$.
	
	\noindent \emph{Proof.} The directed acyclic graph $D$ has a unique sink $t$, so there exists a directed 
            path $a = c_1 \to c_2 \to \dots \to c_p = t$ from $a$ to $t$.   
	    Take such a path of minimum length. We will argue that $c_2$ 
            is a forced vertex. If $p=2$, we are done,
	    since in this case $(a, t) \in A$, and $t \in \forced$. 
            Now, suppose that $p \ge 3$. Suppose for the sake of contradiction 
            that $c_2$ is a free vertex. Then $c_1, c_2 \in \free$, and 
            $(c_1,c_2), (c_2,c_3) \in A$. Since $D$ is circumtransitive, 
            we have $(c_1,c_3) \in A$. But then $c_1 \to c_3 \to \dots \to c_p$ is a shorter 
            path from $a$ to $t$, a contradiction with our choice of the path.
	
	\smallskip

	\noindent\emph{Step 2.} There are at least two forced vertices $b, c \in \forced$ with $(a,b), (a,c) \in A$.
	
	\noindent \emph{Proof.} Let us say that a free vertex $a$ is {\em semi-forced}
            if there is a unique forced vertex $b \in \forced$ with $(a,b) \in A$; we need
	    to argue that the set of semi-forced vertices is empty. 
            Assume for the sake of contradiction that this is not the case,
            and consider the maximum value of $r$ such that $L_r$ contains a semi-forced vertex;
	    let $a$ be some semi-forced vertex in $L_r$. 
            As $a \in \free$, we have $d^+(a) \ge 2$, and so there exists a free vertex  
            $x \in \free$ with $(a,x) \in A$. Since $(a,x) \in A$, we have $x \in L_s$ 
            for some $s > r$. Now, $x$ is a free vertex, and $s>r$ implies that $x$
            is not semi-forced. Thus, there exist two distinct forced 
            vertices $y,z \in \forced$ with $(x,y), (x,z) \in A$. But then 
            by circumtransitivity we have 
            $(a,y)\in A$, $(a,z) \in A$, 
            which is a contradiction with our choice of $a$.

	\smallskip
	
	\noindent \emph{Step 3.} The set $\{b\in \forced: (a,b)\in A\}$ induces a subtree in $\mathcal G(D)$.
	
	\noindent \emph{Proof.} Consider a vertex $a \in L_r$, and suppose that $A$ contains arcs 
            $(a,b)$ and $(a,c)$ where $b,c\in\forced$.
	    Since $\mathcal{G}(D|_{\forced})$ is a tree, there is a unique path $Q$ 
            from $b$ to $c$ in $\mathcal{G}(D|_{\forced})$; 
            let $C_Q \subseteq \forced$ be the vertex set of this path.
	    Fix some tree $T\in\mathcal T(P)$. 
	    Then $\mathcal{G}(D|_{\forced})$ is a subgraph of $T$, and so $Q$ is a path in $T$.
	    Pick a voter $i \in N$. 
	    Since $b,c \in B(a)$, we have $|B(a)|>1$ and so $|B(i, C_r, a)| > 1$.
	    Hence, $a\neq\text{top}(i, C_r)$ and thus we have $b\succ_i a, c\succ_i a$.
	    Consider the top-initial segment of $i$'s vote given by 
            $W = \{ x\in C : x \succ_i a\}$.
	    By Proposition~\ref{prop:spt-equivalences}, since $P$ is single-peaked on $T$, 
            the set $W$ is connected in $T$.
	    Since $b, c \in W$, the path $Q$ must be contained in $T|_{W}$, 
            and hence $C_Q \subseteq W$.
	    Thus, $x\succ_i a$ for each $x\in C_Q$, and so $C_Q \subseteq B(i, C_r, a)$.
	    As this holds for every $i \in N$, we have $C_Q \subseteq B(a)$, 
            and so $C_Q \subseteq \{b\in \forced: (a,b)\in A\}$. 
            Hence, $\{b\in \forced: (a,b)\in A\}$ is connected in $\mathcal{G}(D|_{\forced})$.
	
	\smallskip

	\noindent\emph{Step 4.} There are two forced vertices $b, c \in \forced$ with $(a,b), (a,c), (b,c) \in A$.
	
	\noindent\emph{Proof.} The set $\{b\in \forced: (a,b)\in A\}$ is connected (by Step 3) and contains 
            at least two members (by Step 2). Hence, by definition of $\mathcal G$, 
            it contains some vertices $b$ and $c$ with $(b,c) \in A$.
\end{proof}

In the next section, we will use the properties of attachment digraphs established
in this section to develop algorithms that can check whether a given profile is single-peaked
on a tree that satisfies certain constraints. 

\section{Recognition Algorithms: Finding Nice Trees}\label{sec:recogn}
Suppose we are given a profile $P$ with $\mathcal T(P)\neq \emptyset$
and wish to find trees in $\mathcal T(P)$ that satisfy additional desiderata. 
In particular, we may want to find trees that can be used with the parameterized algorithms 
for the Chamberlin--Courant rule that we presented earlier.
We will now show how the attachment digraph can be used to achieve this.
We consider a variety of objectives, going beyond minimizing the number of internal vertices or the number of leaves.
These results may be useful for applications other than the computation of the Chamberlin--Courant rule.

We assume throughout this section that $|C| \ge 3$, 
since otherwise there is a unique tree $T$ on $C$, and the problem 
of selecting the best tree is trivial.

\subsection{Minimum Number of Internal Vertices}
In Section~\ref{sec:few-internal}, we saw an algorithm that could efficiently 
solve \textsc{Utilitarian CC} with the Borda scoring function
for profiles single-peaked on a tree $T$ with few internal vertices, 
where $T$ was taken as input to the algorithm. We now show how to find, given a profile $P$, 
a tree $T \in \mathcal T(P)$ that has the fewest internal vertices.
Algorithm~\ref{alg:min-internal} constructs an attachment function, and tries to make every 
vertex a leaf, if possible. In particular, every free vertex in the attachment digraph will 
become a leaf. We begin by showing that Algorithm~\ref{alg:min-internal} is 
well-defined, in the sense that whenever the algorithm chooses a vertex from a set, 
this set is non-empty, and whenever the attachment function is assigned a value, 
the respective arc is present in the attachment digraph.

\begin{algorithm}[th]
	\caption{Find $T\in\mathcal T(P)$ with fewest internal vertices}
	\begin{algorithmic}
		\State Let $D = (C,A)$ be the attachment digraph of $P$
		\State Let $\forced, \free$ be the sets of forced and free vertices in $D$
		\State Let $t$ be the sink vertex of $D$
		\State $f \gets \emptyset$, the attachment function under construction
		\For{each $a \in \forced \setminus \{t\}$}
			\State $f(a) \gets b$ where $b$ is the unique $b \in C$ with $(a,b) \in A$
		\EndFor
		\If{$|\forced| = 2$}
			\State pick some $c \in \forced$
			\For{each $a \in \free$}
				\State $f(a) \gets c$
			\EndFor
		\ElsIf{$|\forced| > 2$}
			\For{each $a \in \free$}
				\State find $c \in \forced$ such that $(a,c) \in A$ 
                                       and $c$ is internal in $\mathcal G(D|_{\forced})$
				\State $f(a) \gets c$
			\EndFor
		\EndIf
		\State \Return $T^* = T(f)$
	\end{algorithmic}
	\label{alg:min-internal}
\end{algorithm}

\begin{proposition}
	Algorithm~\ref{alg:min-internal} returns a tree $T^* \in \mathcal T(P)$.
\end{proposition}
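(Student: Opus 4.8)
The plan is to show that the function $f$ built by Algorithm~\ref{alg:min-internal} is a legitimate attachment function for the attachment digraph $D=(C,A)$ --- meaning it is defined on every vertex of $C\setminus\{t\}$ and satisfies $(a,f(a))\in A$ for each such $a$ --- and then to apply Theorem~\ref{thm:trick}, which immediately gives $T^*=T(f)\in\mathcal T(P)$.

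First I would treat the forced loop. Each $a\in\forced\setminus\{t\}$ is a non-sink vertex of the acyclic digraph $D$, so by Proposition~\ref{prop:acyclic} it has at least one outgoing arc; since $a$ is forced it has out-degree at most $1$, hence exactly one, whose head $b$ becomes the value of $f(a)$. By circumtransitivity (Theorem~\ref{thm:circumtransitive}) the head of an outgoing arc of a forced vertex is again forced, so this step is well-defined and legal, and $b\in\forced$.

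Next I would handle the free vertices, first recording that the case split on $|\forced|$ is exhaustive whenever it matters: if $\free\neq\emptyset$ then, applying Proposition~\ref{prop:twoforced} to any free vertex (legitimate since $|C|\ge 3$), we obtain $|\forced|\ge 2$, so one of the two branches runs; if $\free=\emptyset$ there is nothing to assign. In the branch $|\forced|=2$: for the chosen $c\in\forced$ and any $a\in\free$, Proposition~\ref{prop:twoforced} supplies two distinct forced out-neighbors of $a$, and since $|\forced|=2$ these are exactly the two elements of $\forced$, so in particular $(a,c)\in A$. In the branch $|\forced|>2$: here $\mathcal G(D|_{\forced})$ is a tree on more than two vertices, and for any $a\in\free$ Proposition~\ref{prop:twoforced} yields $b,c\in\forced$ with $(a,b),(a,c),(b,c)\in A$; the arc $(b,c)$ is an edge of the tree $\mathcal G(D|_{\forced})$, and in a tree on at least three vertices no edge joins two leaves (such an edge would be a connected component of the tree), so at least one of $b,c$ is internal in $\mathcal G(D|_{\forced})$ while remaining an out-neighbor of $a$. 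Hence the set from which the algorithm picks $c$ is non-empty and the assignment is legal.

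Since $\forced$ and $\free$ partition $C$ and $t\in\forced$, the two phases together define $f$ on all of $C\setminus\{t\}$, exactly once each, with every assigned value an out-neighbor of its argument; thus $f$ is an attachment function, and Theorem~\ref{thm:trick} finishes the argument. I expect the main obstacle to be the $|\forced|>2$ case: extracting from Proposition~\ref{prop:twoforced} a forced vertex that is simultaneously a valid attachment target for $a$ and \emph{internal} in $\mathcal G(D|_{\forced})$, together with the small but necessary bookkeeping that the case analysis on $|\forced|$ never leaves a free vertex unassigned.
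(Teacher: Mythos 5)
Your proof is correct and follows essentially the same route as the paper: verify that every choice the algorithm makes is possible (exactly one outgoing arc for each forced non-sink vertex, and, via Proposition~\ref{prop:twoforced} with the case split on $|\forced|=2$ versus $|\forced|>2$, a valid forced target --- internal when $|\forced|>2$ --- for each free vertex), and then invoke Theorem~\ref{thm:trick}. The extra bookkeeping you add (exhaustiveness of the case split, the observation that no tree edge joins two leaves when $|\forced|>2$) matches the paper's reasoning, just spelled out in slightly more detail.
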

\begin{proof}
	Our claim follows from Theorem~\ref{thm:trick} once we can show that the choices 
        of the algorithm are possible. Our running assumption that $|C| \ge 3$,
	combined with Proposition~\ref{prop:twoforced}, implies that $|\forced| \ge 2$. 
	
	Suppose that $|\forced| = 2$. By Proposition~\ref{prop:twoforced}, each $a\in\free$ is 
	adjacent to both vertices in $\forced$, and thus $(a,c) \in A$ irrespective 
        of which $c\in\forced$ is chosen by the algorithm. Thus the function $f$ is 
	an attachment function, i.e., the algorithm returns a tree in $\mathcal T(P)$. 

	On the other hand, suppose that $|\forced| > 2$. By Proposition~\ref{prop:twoforced}, 
	each free vertex $a \in \free$ has outgoing arcs to two forced vertices that 
	are adjacent in $\mathcal G(D|_{\forced})$. Since $|\forced| > 2$, 
	at most one of them can be a leaf in the tree $\mathcal G(D|_{\forced})$. 
	Hence, there is a $c \in \forced$ with $(a,c) \in A$ such that $c$ is internal 
	in $\mathcal G(D|_{\forced})$. Thus, the algorithm is well-defined in this case
	as well.
\end{proof}

Next, we show that Algorithm~\ref{alg:min-internal} returns an optimal tree.

\begin{proposition}
	Algorithm~\ref{alg:min-internal} runs in polynomial time and 
	returns a tree $T^* \in \mathcal T(P)$ with the minimum number 
	of internal vertices among trees in $\mathcal T(P)$. 
\end{proposition}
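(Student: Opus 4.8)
The plan is to first pin down exactly which vertices are internal in the tree $T^* = T(f)$ returned by Algorithm~\ref{alg:min-internal}, and then to argue that no tree in $\mathcal{T}(P)$ can do better. By Theorem~\ref{thm:trick} it suffices to reason about attachment functions, and the structural facts we will lean on are Theorem~\ref{thm:circumtransitive} (circumtransitivity of $D$), the observation following it that the forced attachments are uniquely determined so that $\mathcal{G}(D|_{\forced})$ is a subtree of every $T \in \mathcal{T}(P)$, and Proposition~\ref{prop:twoforced} (every free vertex has arcs to two adjacent forced vertices).

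For the characterization of the internal vertices of $T^*$, I would inspect the edge set $\{\{a,f(a)\} : a \in C \setminus \{t\}\}$. Every non-sink forced vertex is attached to its unique (forced) out-neighbor, so these edges are precisely the edges of $\mathcal{G}(D|_{\forced})$; every free vertex is attached to a forced vertex. Hence no vertex is ever the $f$-image of a free vertex, so each free vertex is a leaf of $T^*$, and for a forced vertex $v$ we have $\deg_{T^*}(v) = \deg_{\mathcal{G}(D|_{\forced})}(v) + |\{a \in \free : f(a)=v\}|$. When $|\forced| > 2$, the algorithm chooses $f(a)$ to be an \emph{internal} vertex of $\mathcal{G}(D|_{\forced})$ for every free $a$, so leaves of $\mathcal{G}(D|_{\forced})$ gain no new neighbor and remain leaves, while internal vertices stay internal; thus the internal vertices of $T^*$ are exactly those of $\mathcal{G}(D|_{\forced})$. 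When $|\forced| = 2$, the subtree $\mathcal{G}(D|_{\forced})$ is a single edge $\{c,c'\}$, and all free vertices — of which there is at least one since $|C| \ge 3$ — are attached to $c$, so $T^*$ has $c$ as its only internal vertex.

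For optimality, take any attachment function $g$ with $T = T(g) \in \mathcal{T}(P)$. Since the forced attachments are forced, $\mathcal{G}(D|_{\forced})$ is a subgraph of $T$; as it is a spanning tree on vertex set $\forced$ and $T|_{\forced}$ is a forest containing it, the two coincide, so $T|_{\forced} = \mathcal{G}(D|_{\forced})$. Consequently $\deg_T(v) \ge \deg_{\mathcal{G}(D|_{\forced})}(v)$ for every $v \in \forced$, and every internal vertex of $\mathcal{G}(D|_{\forced})$ is internal in $T$. When $|\forced| > 2$ this shows that $T$ has at least as many internal vertices as $\mathcal{G}(D|_{\forced})$, which is exactly the count achieved by $T^*$. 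When $|\forced| = 2$ this bound is vacuous, but any tree on $|C| \ge 3$ vertices has at least one internal vertex, again matching $T^*$'s count of one. Polynomial running time is immediate: the attachment digraph is built in time $O(|N|\cdot|C|^2)$, and the remaining steps only examine out-degrees, locate the sink, and scan $\free$ once.

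The main obstacle I anticipate is bookkeeping rather than conceptual depth: one must verify carefully that attaching each free vertex to an internal vertex of $\mathcal{G}(D|_{\forced})$ genuinely prevents any leaf of $\mathcal{G}(D|_{\forced})$ from acquiring a second neighbor in $T^*$, and one must treat the boundary case $|\forced| = 2$ on its own, since there the lower bound of one internal vertex comes from the crude fact that $|C| \ge 3$ rather than from the structure of $\mathcal{G}(D|_{\forced})$.
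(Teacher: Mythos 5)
Your proof is correct and follows essentially the same route as the paper: both rest on the fact that $\mathcal{G}(D|_{\forced})$ is contained in every $T \in \mathcal{T}(P)$ (so its internal vertices stay internal in any such tree), identify the internal vertices of $T^*$ as exactly those of $\mathcal{G}(D|_{\forced})$ when $|\forced| > 2$, and treat the $|\forced| = 2$ case by the trivial bound of one internal vertex; the paper merely phrases the count dually, in terms of maximizing leaves rather than minimizing internal vertices. One tiny wording slip: you mean that no \emph{free} vertex lies in the range of $f$ (so each free vertex has degree one in $T^*$), not that "no vertex is the $f$-image of a free vertex."
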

\begin{proof}
	The bound on the running time is immediate from the description of the algorithm.

	By Theorem~\ref{thm:trick} and the definition of $\forced$, 
	for every tree $T \in \mathcal T(P)$ we have 
	$\mathcal G(D|_{\forced}) \subseteq T$.
	Thus, if $a \in \forced$ is not a leaf in the tree 
	$\mathcal G(D|_{\forced})$, then $a$ cannot be a leaf in $T$.
	
	Suppose that $|\forced| = 2$. Since $|C| \ge 3$, we have $\free \neq \emptyset$. 
	Since the two members of $\forced$ are adjacent in any $T \in \mathcal T(P)$, 
	it cannot be the case that both of them are leaves in $T$. 
	Hence the number of leaves in $T \in \mathcal T(P)$ is at most $|\free| + 1$. 
	The tree $T^*$ has exactly $|\free| + 1$ leaves, and hence is optimal.

	On the other hand, suppose that $|\forced| > 2$. Note that every free vertex 
	$a \in \free$ is a leaf in $T^*$ because $f(a) \in \forced$ for all 
	$a \in C \setminus \{t\}$. Further, every leaf of $\mathcal G(D|_{\forced})$ 
	is also a leaf in $T^*$. By our initial observation, none of the remaining vertices 
	can be leaves in any $T \in \mathcal T(P)$, so $T^*$ has the maximum possible number 
	of leaves, and hence the minimum number of internal vertices.
\end{proof}

\subsection{Minimum Diameter}

It turns out that the tree found by Algorithm~\ref{alg:min-internal} is also optimal with respect 
to another metric: it minimizes the diameter.

\begin{proposition}
	Algorithm~\ref{alg:min-internal} returns a tree $T^* \in \mathcal T(P)$ with 
	the minimum diameter among trees in $\mathcal T(P)$. 
\end{proposition}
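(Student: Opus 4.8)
The plan is to pin down the diameter of $T^*$ exactly and then observe that this value lower-bounds the diameter of every tree in $\mathcal T(P)$. Recall that, as established just above, $\mathcal G(D|_{\forced})$ is a subtree of every $T \in \mathcal T(P)$; since distances between two vertices of a subtree of a tree coincide with their distances in the whole tree, this already gives $\mathrm{diam}(T) \ge \mathrm{diam}(\mathcal G(D|_{\forced}))$ for every $T \in \mathcal T(P)$, so it suffices to show $T^*$ attains this bound (with a small correction in the degenerate case). The degenerate case is $|\forced| = 2$: here $\mathcal G(D|_{\forced})$ is a single edge $\{b,c\}$, and since $|C|\ge 3$ there is at least one free vertex; the algorithm attaches every free vertex to the chosen endpoint $c$, so $T^*$ is a star centred at $c$ with diameter $2$. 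As any tree on $|C|\ge 3$ vertices has diameter at least $2$, $T^*$ is optimal in this case.

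The main case is $|\forced| > 2$. Write $S := \mathcal G(D|_{\forced})$, a tree on more than two vertices, so $\mathrm{diam}(S)\ge 2$. By construction of Algorithm~\ref{alg:min-internal} (whose well-definedness rests on Proposition~\ref{prop:twoforced}), $T^*$ is obtained from $S$ by adding each free vertex as a \emph{pendant leaf} attached to some \emph{internal} vertex of $S$; in particular every free vertex is a leaf of $T^*$. I would then use two elementary facts about a tree $S$: (a) the vertex farthest from any vertex $v$ is a leaf, hence if $v$ is internal then $\mathrm{ecc}_S(v) \le \mathrm{diam}(S)-1$ (take a longest path out of $v$ and prolong it by one edge into another branch at $v$); and (b) if $v \ne v'$ are both internal then $d_S(v,v') \le \mathrm{diam}(S)-2$ (prolong the $v$--$v'$ path by one edge at each end into side branches of $v$ and of $v'$, which are vertex-disjoint from the path and from each other since $S$ is a tree). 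Granting (a) and (b): for a new leaf $u$ attached to an internal vertex $v$ and any $x\in\forced$ we have $d_{T^*}(u,x) = 1 + d_S(v,x) \le 1 + \mathrm{ecc}_S(v) \le \mathrm{diam}(S)$; for new leaves $u,u'$ attached to internal $v,v'$ (possibly $v=v'$) we have $d_{T^*}(u,u') = 2 + d_S(v,v') \le \mathrm{diam}(S)$, using $\mathrm{diam}(S)\ge 2$ if $v=v'$ and (b) if $v\ne v'$; and distances among vertices of $\forced$ are unchanged. Hence $\mathrm{diam}(T^*)\le \mathrm{diam}(S)$, and combined with $S\subseteq T^*$ this gives $\mathrm{diam}(T^*) = \mathrm{diam}(S) = \mathrm{diam}(\mathcal G(D|_{\forced}))$, matching the lower bound. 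Thus $T^*$ has the minimum diameter among trees in $\mathcal T(P)$.

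I expect the only real friction to be the careful justification of facts (a) and (b) — in particular checking that the prolonged walks are genuine simple paths, which forces a deliberate choice of the side branches at $v$ and $v'$ — together with the easy but essential observation that $T^*$'s extra leaves hang only off \emph{internal} vertices of $S$, so that (a) and (b) actually apply to them.
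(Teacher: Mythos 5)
Your proof is correct and follows essentially the same route as the paper: both arguments rest on the subtree lower bound $\mathrm{diam}(T) \ge \mathrm{diam}(\mathcal G(D|_{\forced}))$ and on the fact that in $T^*$ every free vertex is a pendant leaf on an \emph{internal} vertex of $\mathcal G(D|_{\forced})$, which is then exploited via a second forced neighbour of that internal vertex. The paper packages this last step as replacing the (possibly free) endpoints of a longest path in $T^*$ by forced neighbours, whereas you package it as eccentricity/distance bounds for internal vertices of $\mathcal G(D|_{\forced})$ — the same path-extension idea in mirror image.
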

\begin{proof}
	Suppose that $|\forced| = 2$. Then $T^*$ is a star with center $c$; no tree on 
	three or more vertices has smaller diameter than a star.
	
	On the other hand, suppose that $|\forced| > 2$. In this case the diameter of $T^*$ 
	is equal to the diameter of $\mathcal G(D|_{\forced})$. 
	To see this, consider a longest path $(c_1,\dots,c_k)$ in $T^*$. 
	If $k = 2$, then $T^*$ is a star, which is a minimum-diameter tree 
	when there are $|C| \ge 3$ vertices. So suppose that $k \ge 3$.
	On a longest path, only $c_1$ and $c_k$ can be free vertices, 
	since all free vertices are leaves in $T^*$.
	Suppose $c_1 \in \free$. Then by construction of $T^*$ we have $c_2\in\forced$, 
	and $c_2$ is an internal vertex of $\mathcal G(D|_{\forced})$. 
	Hence, $c_2$ has at least two neighbors in 
	$\mathcal G(D)$ that are forced. 
	Thus, we can find a neighbor $c_1'$ of $c_2$ such that $c_1'$ is forced and $c_1' \neq c_3$. 
	Then we can replace $c_1$ by $c_1'$ in the longest path 
	(noting that $c_1'$ cannot appear elsewhere on the path 
	because  $\mathcal G(D|_{\forced})$ is a tree). 
	Similarly, if $c_k \in \free$, we can replace $c_k$ by a forced neighbor 
	of $c_{k-1}$. Having replaced all free vertices on the path 
	by forced vertices, we have obtained a longest path 
	in $T^*$ that is completely contained in  $\mathcal G(D|_{\forced})$. 
	Hence, the diameter of $T^*$ is equal to the diameter of $\mathcal G(D|_{\forced})$. 
	
	As $\mathcal G(D|_{\forced}) \subseteq T$ for every $T \in \mathcal T(P)$, the 
	diameter of any $T \in \mathcal T(P)$ must be at least the diameter of $\mathcal 
	G(D|_{\forced})$. Hence $T^*$ has the minimum diameter.
\end{proof}

\subsection{Minimum Number of Leaves}

In Section~\ref{sec:few-leaves}, we saw an algorithm for \textsc{Utilitarian CC} that is 
efficient when the input profile is single-peaked on a tree with few leaves. The algorithm 
assumed that the tree $T$ is part of the input. We will now describe a procedure that,  
given a profile $P$, finds a tree $T^* \in \mathcal T(P)$ with the minimum number of leaves.

\begin{algorithm}[th]
	\caption{Find $T\in\mathcal T(P)$ with fewest leaves}
	\begin{algorithmic}
		\State Let $D = (C,A)$ be the attachment digraph of $P$
		\State $f \gets \emptyset$, the attachment function under construction
		\State Let $t$ be the sink vertex of $D$, and let $s \in \forced$ be a forced vertex 
			with a unique outgoing arc $(s,t) \in A$ (such a vertex exists 
			by Proposition~\ref{prop:twoforced})
		\State $f(s) \gets t$
		\State Construct a bipartite graph $H$ with vertex set $L \cup R$ 
			where 
			$L = \{\ell_a : a \in C \setminus \{s,t\}\}$ and 
			$R = \{r_a : a \in C \}$, and edge set 
			$E_H = \{ \{ \ell_a, r_c\} : (a,c) \in A \}$
		\State Find a maximum matching $M \subseteq E_H$ in $H$
		\For{each $a \in C \setminus \{s, t\}$}
			\If{$\ell_a$ is matched in $M$, i.e. $\{\ell_a,r_c\} \in M$ for some $c \in C$}
				\State $f(a) \gets c$
			\Else
				\State take any $c \in C$ with $(a,c) \in A$
				\State $f(a) \gets c$
			\EndIf
		\EndFor
		\State \Return $T^* = T(f)$
	\end{algorithmic}
	\label{alg:min-leaves}
\end{algorithm}

Minimizing the number of leaves of a tree is equivalent to maximizing its number of internal 
vertices. Thus, to proceed, we first characterize the set of internal vertices of a tree $T(f)$.

\begin{proposition}
	\label{prop:f-internal}
	Let $f$ be an attachment function for the attachment digraph $D$. Then $a \in C \setminus \{t\}$ 
	is an internal vertex of the tree $T(f)$ if and only if $|f^{-1}(a)| \ge 1$, 
	i.e., $a$ is in the range of $f$. The sink vertex $t$ is an internal vertex of $T(f)$ 
	if and only if $|f^{-1}(t)| \ge 2$, i.e., there are two distinct vertices $a, b \in C$ 
	with $f(a) = f(b) = t$.
\end{proposition}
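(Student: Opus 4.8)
The plan is to compute the degree of every vertex of $T(f)$ and then translate ``internal'' into ``degree at least two''. First I would note that, by Theorem~\ref{thm:trick}, $T(f) \in \mathcal T(P)$, so $T(f)$ really is a tree on the vertex set $C$; since $|C| \ge 3$, every vertex of $T(f)$ has degree at least one, and hence a vertex is internal exactly when its degree is at least two. So it suffices to determine $\deg_{T(f)}(a)$ for each $a \in C$.

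Next I would count the edges of $T(f)$ incident to a fixed vertex $a$. By definition the edge set of $T(f)$ is $\{\{b,f(b)\} : b \in C \setminus \{t\}\}$, so an edge is incident to $a$ for one of two reasons: either $b = a$ (possible only if $a \ne t$, contributing the single edge $\{a,f(a)\}$), or $f(b) = a$ (contributing the edge $\{b,a\}$ for each $b \in f^{-1}(a)$). The one point that needs care --- and the only real obstacle --- is to verify that none of these edges coincide, so that the incidence count is exactly $1 + |f^{-1}(a)|$ when $a \ne t$ and $|f^{-1}(a)|$ when $a = t$. This is where acyclicity enters: by Proposition~\ref{prop:acyclic} the digraph $D$ has no directed $2$-cycle, so we can never have $f(f(a)) = a$; in particular $f(a) \notin f^{-1}(a)$, which makes the edge $\{a,f(a)\}$ distinct from every edge $\{b,a\}$ with $b \in f^{-1}(a)$. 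Moreover the edges $\{b,a\}$ for distinct $b \in f^{-1}(a)$ are pairwise distinct, since they already differ in the endpoint $b \ne a$. Hence $\deg_{T(f)}(a) = 1 + |f^{-1}(a)|$ for $a \in C \setminus \{t\}$, while for $t$ the ``$b = a$'' case is vacuous and $\deg_{T(f)}(t) = |f^{-1}(t)|$.

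Finally I would read off the claim: for $a \in C \setminus \{t\}$, $a$ is internal $\iff \deg_{T(f)}(a) \ge 2 \iff |f^{-1}(a)| \ge 1$; and $t$ is internal $\iff \deg_{T(f)}(t) \ge 2 \iff |f^{-1}(t)| \ge 2$. I do not expect any further difficulties --- the argument is pure bookkeeping of incidences, with the absence of $2$-cycles being the only place where one could slip.
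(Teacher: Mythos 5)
Your proof is correct and follows essentially the same route as the paper: translate ``internal'' into ``degree at least two'' and compute $\deg_{T(f)}(a)=1+|f^{-1}(a)|$ for $a\neq t$ and $\deg_{T(f)}(t)=|f^{-1}(t)|$. The only difference is that you explicitly justify the degree formula by ruling out coinciding edges via acyclicity of $D$, a detail the paper's proof leaves implicit.
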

\begin{proof}
	A vertex is internal in a tree if and only if it has degree at least two. 
	From the definition of $T(f)$, for $a \in C \setminus \{t\}$, the degree of $a$ 
	is $1 + |f^{-1}(a)|$, and the degree of $t$ is $|f^{-1}(t)|$. 
	The claim follows immediately.
\end{proof}

Using this observation, we can prove that Algorithm~\ref{alg:min-leaves} returns an optimal tree. 
The algorithm is based on constructing a maximum matching.

\begin{proposition}
	Algorithm~\ref{alg:min-leaves} runs in polynomial time and returns 
	a tree $T^* \in \mathcal T(P)$ with the minimum number of leaves 
	among trees in $\mathcal T(P)$.
\end{proposition}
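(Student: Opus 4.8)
The plan is to establish two claims: (i) the tree $T^{*}=T(f)$ returned by Algorithm~\ref{alg:min-leaves} lies in $\mathcal{T}(P)$ and is computed in polynomial time, and (ii) $T^{*}$ has the maximum possible number of internal vertices among trees in $\mathcal{T}(P)$, which (since every tree on $C$ has $|C|$ vertices) is the same as having the minimum number of leaves. Polynomial time is immediate: the attachment digraph $D$ is built in $O(|N|\,|C|^{2})$ time by Algorithm~\ref{alg:attachment-digraph}, a maximum bipartite matching in $H$ is found in polynomial time, and the remaining bookkeeping is linear. To see $T^{*}\in\mathcal{T}(P)$ I would check that $f$ is a legal attachment function, i.e.\ $(a,f(a))\in A$ for every $a\in C\setminus\{t\}$, and then apply Theorem~\ref{thm:trick}. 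The value $f(s)=t$ is legal because $(s,t)\in A$ by the choice of $s$; for $a\in C\setminus\{s,t\}$ matched by $M$ to $r_{c}$, legality is exactly the defining equivalence $\{\ell_{a},r_{c}\}\in E_{H}\Leftrightarrow (a,c)\in A$; and for an unmatched $\ell_{a}$ a legal target exists because $a$ is a non-sink and hence has out-degree at least one.

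The core of the argument is a correspondence between internal vertices of a tree $T(g)$ and matchings in $H$, via Proposition~\ref{prop:f-internal}: for $c\in C\setminus\{t\}$, $c$ is internal in $T(g)$ iff $c$ lies in the range of $g$, and $t$ is internal iff $|g^{-1}(t)|\ge 2$. I would first prove the upper bound. Fix any attachment function $g$ with $T(g)\in\mathcal{T}(P)$; since $s$ is a forced vertex whose unique out-arc is $(s,t)$, we must have $g(s)=t$. For each internal vertex $c\neq t$ pick a preimage $a_{c}\in g^{-1}(c)$; as $g(s)=t\neq c$ and $t$ is not in the domain of $g$, we get $a_{c}\in C\setminus\{s,t\}$, so $\{\ell_{a_{c}},r_{c}\}\in E_{H}$, and these edges form a matching because the $a_{c}$ (and the $c$) are pairwise distinct. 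If in addition $t$ is internal, then $g^{-1}(t)$ contains some $a^{*}\neq s$, which lies in $C\setminus\{s,t\}$ and gives an extra edge $\{\ell_{a^{*}},r_{t}\}$ disjoint from all the others (it is disjoint since $g(a^{*})=t$ while $g(a_{c})=c\neq t$). Thus $H$ has a matching of size equal to the number of internal vertices of $T(g)$, so every $T\in\mathcal{T}(P)$ has at most $|M|$ internal vertices.

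Next I would verify that $T^{*}$ attains this bound, i.e.\ has exactly $|M|$ internal vertices. The key observation is a consequence of the maximality of $M$: if $r_{c}$ is unmatched, then no $\ell_{a}$ gets assigned to $r_{c}$ by the algorithm — a matched $\ell_{a}$ is assigned to its own partner, and an unmatched $\ell_{a}$ assigned to an unmatched $r_{c}$ would enlarge $M$. Hence $c\in C\setminus\{t\}$ is in the range of $f$ precisely when $r_{c}$ is matched, and, recalling $f(s)=t$, we have $|f^{-1}(t)|\ge 2$ precisely when $r_{t}$ is matched. By Proposition~\ref{prop:f-internal}, the internal vertices of $T^{*}$ are exactly those $c$ with $r_{c}$ matched, so their number is $|M|$; combined with the upper bound, $T^{*}$ maximizes the number of internal vertices and therefore minimizes the number of leaves.

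I expect the main obstacle to be the correct handling of the sink $t$. Unlike every other vertex, $t$ is internal iff it has \emph{two} preimages rather than one, which is exactly why the construction commits to $f(s)=t$ in advance and uses the asymmetric vertex sets $L=\{\ell_{a}:a\in C\setminus\{s,t\}\}$ and $R=\{r_{a}:a\in C\}$. Both directions of the matching argument — building a matching from an arbitrary $T(g)$, and reading off the internal vertices of $T^{*}$ from $M$ — have to track this asymmetry carefully, and getting the bound to come out to exactly $|M|$ on both sides (rather than $|M|$ and $|M|\pm 1$) is the delicate point.
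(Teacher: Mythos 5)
Your proof is correct and follows essentially the same route as the paper's: membership in $\mathcal T(P)$ via Theorem~\ref{thm:trick}, the characterization of internal vertices via Proposition~\ref{prop:f-internal}, and the correspondence between internal vertices and matchings in $H$ together with maximality of $M$, including the careful asymmetric treatment of the sink $t$ and the pre-committed assignment $f(s)=t$. The only difference is presentational: you prove the upper bound directly from an arbitrary attachment function and then show $T^*$ attains $|M|$, whereas the paper first computes that $T^*$ has exactly $|M|$ internal vertices and then derives the optimality by contradiction.
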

\begin{proof}
	The bound on the running time is immediate from the description of the algorithm.
	The algorithm constructs an attachment function, and hence by Theorem~\ref{thm:trick} 
	the output $T^*$ of the algorithm is a member of $\mathcal{T}(P)$.
	
	We will now argue that $T^*$ has the maximum number of internal vertices 
	among trees in $\mathcal T(P)$. By Proposition~\ref{prop:f-internal}, 
	it suffices to show that Algorithm~\ref{alg:min-leaves} 
	finds an attachment function $f$ that maximizes the number of vertices 
	$a$ with $|f^{-1}(a)| \ge 1$ if $a \neq t$ or $|f^{-1}(a)| \ge 2$ if $a = t$.

	First, we claim that under the attachment function $f$ constructed by 
	Algorithm~\ref{alg:min-leaves}, 
	a vertex $c \in C$ is an internal vertex of $T(f)$ if and only if 
	$r_c$ is matched in a maximum matching $M$. We start with the `if' direction.

	\begin{itemize}
		\item   Suppose $c = t$. If $r_t$ is matched in $M$ to $\ell_a$, 
			then both $a \in f^{-1}(t)$ and $s \in f^{-1}(t)$, 
			where $s$ is the vertex chosen at the very start of the algorithm. 
			By definition of the bipartite graph $H$, $a \neq s$, 
			and so $|f^{-1}(t)| \ge 2$, and hence $t$ is an internal vertex 
			by Proposition~\ref{prop:f-internal}.
		\item 	Suppose $c \neq t$. If $r_c$ is matched in $M$ to $\ell_a$, 
			then $a \in f^{-1}(c)$, and so $c$ is an internal vertex 
			by Proposition~\ref{prop:f-internal}. 
	\end{itemize}
	For the `only if' direction, suppose that $r_c$ is not matched in $M$. 
	Then in the \emph{for}-loop of Algorithm~\ref{alg:min-leaves}, we never 
	set $f(a) \gets c$ for any $a \in C \setminus \{t\}$, because otherwise 
	we could add the edge $\{ \ell_a, r_c \}$ to the matching $M$, contradicting 
	its maximality. Hence, if $c = t$ and $c$ is not matched, then $f^{-1}(t) = \{s\}$, 
	and so $t$ is not internal. If $c \neq t$ and $r_c$ is not matched, 
	then $f^{-1}(c) = \emptyset$, so $c$ is not internal.
	It follows that the number of internal vertices of $T(f)$ is $|M|$. 
	
	Now suppose that $T(f)$ is not optimal, and that $T' \in \mathcal T(P)$ is a tree with 
	$q > |M|$ internal vertices. By Theorem~\ref{thm:trick}, since $T' \in \mathcal T(P)$, 
	we have $T' = T(g)$ for some attachment function $g$. But then we can construct 
	a matching $M'$ in $H$ of size $|M'| = q$, as follows:
	\begin{itemize}
		\item 	If $t$ is an internal vertex in $T'$, then by Proposition~\ref{prop:f-internal}, 
			we have $|g^{-1}(t)| \ge 2$. Select some $a \in g^{-1}(t)$ with $a \neq s$, 
			and add $\{ \ell_a, r_t \}$ to $M'$. 
		\item 	For each $c \in C \setminus \{t\}$ that is an internal vertex of $T'$, 	
			select some $a \in g^{-1}(c)$ (which exists by Proposition~\ref{prop:f-internal}),
			and add $\{\ell_a, r_c\}$ to $M'$.
	\end{itemize}
	Clearly, we have added $q$ edges to $M'$. As $g$ is a function, $M'$ is a matching. 
	Since $|M'| > |M|$, we have a contradiction to the maximality of $M$.
\end{proof}

\subsection{Minimum Max-Degree}

\begin{figure}
	\centering
	\begin{tikzpicture}
	[n/.style={circle, fill=black!80, draw=none, inner sep=1.8pt},
	a/.style={circle, draw, inner sep=2.8pt},
	capa/.style={fill=white, font=\footnotesize, inner sep=0.3pt}]
	\node (slabel) at (-6, 0.35) {$x$};
	\node (tlabel) at (6, 0.35) {$z$};
	
	\node (c1label) at (-2, 2.35) {$\ell_{c_1}$};
	\node (c2label) at (-2, 1.35) {$\ell_{c_2}$};
	\node (c3label) at (-2, 0.35) {$\ell_{c_3}$};
	\node (c4label) at (-2, -0.65) {$\ell_{c_4}$};
	
	\node (c1-label) at (2, 2.35) {$r_{c_1}$};
	\node (c2-label) at (2, 1.35) {$r_{c_2}$};
	\node (c3-label) at (2, 0.35) {$r_{c_3}$};
	\node (c4-label) at (2, -0.65) {$r_{c_4}$};
	\node (c5-label) at (2, -1.65) {$r_t$};
	
	\node[n] (source) at (-6,0) {};
	\node[n] (sink) at (6,0) {};
	
	\node[n] (c1) at (-2,2) {};
	\node[n] (c2) at (-2,1) {};
	\node[n] (c3) at (-2,0) {};
	\node[n] (c4) at (-2,-1) {};
	
	\node[n] (c1-) at (2,2) {};
	\node[n] (c2-) at (2,1) {};
	\node[n] (c3-) at (2,0) {};
	\node[n] (c4-) at (2,-1) {};
	\node[n] (c5-) at (2,-2) {};
	
	\draw[-latex] (source) 
	edge node[capa] {$1$} (c1)
	edge node[capa] {$1$} (c2)
	edge node[capa] {$1$} (c3)
	edge node[capa] {$1$} (c4)
	
	(c1) edge node[capa] {$1$} (c2-)
	edge node[capa] {$1$} (c3-)
	(c2) edge node[capa] {$1$} (c4-)
	(c3) edge node[capa] {$1$} (c2-)
	(c4) edge node[capa] {$1$} (c3-)
	edge node[capa] {$1$} (c5-)
	
	;
	
	\draw[latex-] (sink) 
	edge node[capa] {$k-1$} (c1-)
	edge node[capa] {$k-1$} (c2-)
	edge node[capa] {$k-1$} (c3-)
	edge node[capa] {$k-1$} (c4-)
	edge node[capa] {$k$} (c5-);
	\end{tikzpicture}
	\caption{Flow network $H$ constructed by Algorithm~\ref{alg:degree}.}
	\label{fig:degree-flow}
\end{figure}
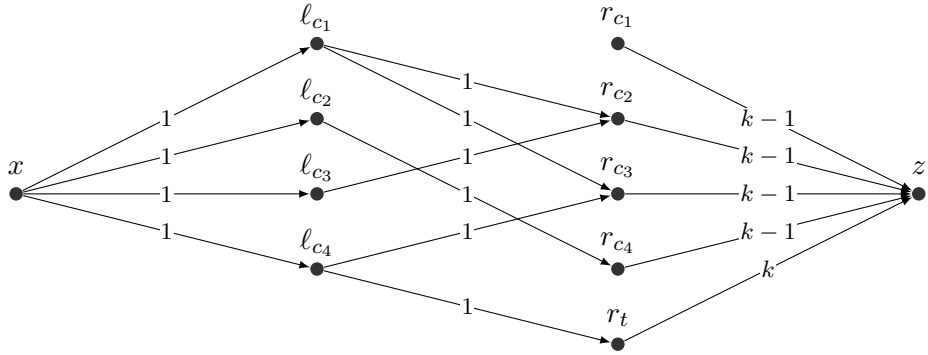

Another measure of tree complexity is its maximum degree. 
To minimize this quantity, we can use the following algorithm, 
which accepts as input a profile $P$ and a positive integer $k$,
and decides whether $P$ is single-peaked on some tree with maximum degree at most $k$. 
It is based on constructing a 
maximum flow network. An example network is shown in Figure~\ref{fig:degree-flow}.

\begin{algorithm}[th]
	\caption{Decide whether there is $T\in\mathcal T(P)$ with maximum degree at most $k$}
	\begin{algorithmic}
		\State Let $D = (C,A)$ be the attachment digraph of $P$
		\State Let $t$ be the sink vertex of $D$
		\State Let $L = \{\ell_a : a \in C \setminus \{t\} \}$ and $R = \{r_a : a \in C\}$ 
		and construct a flow network $H$ on vertex set $\{x,z\} \cup L \cup R$ with arc set 
		\[ E_H = \{ (x, \ell_a) : a \in C \setminus \{t\} \} \cup  
			 \{ (\ell_a, r_b) : a, b \in C, (a,b) \in A \} \cup 
			 \{ (r_a, z) : a \in C \}, 			 
		\] and capacities 
		$\text{cap}(x, \ell_a) = 1$ for all $a \in C \setminus \{t\}$, 
		$\text{cap}(\ell_a, r_b) = 1$ for all $(a,b) \in A$, 
		$\text{cap}(r_a, z) = k - 1$ for all $a \in C \setminus \{t\}$, and 
		$\text{cap}(r_t, z) = k$
		\State Find a maximum flow in $H$
		\State $f \gets \emptyset$, the attachment function under construction
		\If{the size of the maximum flow is $|C|-1$}
			\State For each $(a,b) \in A$ such that the arc $(\ell_a, r_b)$ is saturated, 
				set $f(a) \gets b$
			\State \Return $T^* = T(f)$
		\Else
			\State \Return there is no $T^* \in \mathcal T(P)$ with maximum degree at most $k$
		\EndIf
	\end{algorithmic}
	\label{alg:degree}
\end{algorithm}

\begin{proposition}
	Algorithm~\ref{alg:degree} runs in polynomial time and returns a tree $T^* \in \mathcal T(P)$ 
	with maximum degree at most $k$ if one exists.
\end{proposition}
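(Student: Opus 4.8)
The plan is to set up a correspondence between integral feasible flows of value $|C|-1$ in the network $H$ and attachment functions $f$ for which $T(f)$ has maximum degree at most $k$, and then invoke Theorem~\ref{thm:trick}. The polynomial running time is immediate: by the discussion preceding this section the attachment digraph is computed in time $O(|N|\cdot|C|^2)$, the network $H$ has $O(|C|)$ vertices and $O(|A|)=O(|C|^2)$ arcs with integer capacities bounded by $k\le|C|$, so a maximum flow (and in particular an integral maximum flow, which exists by the integrality theorem for flows with integer capacities) can be found in polynomial time, and reading off $f$ and building $T(f)$ takes linear time.

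For correctness, first note that the arcs leaving $x$ form an $x$--$z$ cut of capacity $|C|-1$, so every flow in $H$ has value at most $|C|-1$. Suppose the algorithm finds an integral maximum flow of value exactly $|C|-1$. Since there are precisely $|C|-1$ arcs out of $x$, each of capacity $1$, all of them are saturated, so each $\ell_a$ with $a\in C\setminus\{t\}$ receives exactly one unit; by conservation and integrality it forwards this unit along a \emph{single} saturated arc $(\ell_a,r_b)$, and we set $f(a)=b$. This $f$ is a well-defined attachment function because $(a,b)\in A$. For each $b\in C$, the number of $a$ with $f(a)=b$ equals the number of saturated arcs into $r_b$, which by conservation equals the flow on $(r_b,z)$, hence $|f^{-1}(b)|\le\mathrm{cap}(r_b,z)$. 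Recalling from the proof of Proposition~\ref{prop:f-internal} that the degree of $b$ in $T(f)$ is $1+|f^{-1}(b)|$ for $b\neq t$ and $|f^{-1}(t)|$ for $b=t$, we obtain degree at most $1+(k-1)=k$ for $b\neq t$ and at most $k$ for $t$. Thus $T^*=T(f)$ has maximum degree at most $k$, and $T^*\in\mathcal T(P)$ by Theorem~\ref{thm:trick}.

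Conversely, suppose some $T^*\in\mathcal T(P)$ has maximum degree at most $k$. By Theorem~\ref{thm:trick}, $T^*=T(g)$ for an attachment function $g$. Routing one unit along $x\to\ell_a\to r_{g(a)}\to z$ for every $a\in C\setminus\{t\}$ yields a feasible flow: the arcs $(x,\ell_a)$ and $(\ell_a,r_{g(a)})$ (the latter exists since $(a,g(a))\in A$) each carry one unit, within capacity $1$, while $(r_b,z)$ carries $|g^{-1}(b)|$ units, which by the degree formula and the degree bound on $T(g)$ is at most $k-1$ for $b\neq t$ and at most $k$ for $b=t$. This flow has value $|C|-1$, so the maximum flow value equals $|C|-1$, the algorithm enters its first branch, and by the previous paragraph it outputs a tree of maximum degree at most $k$. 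If instead no such tree exists, the same reasoning (contrapositive) shows the maximum flow value is strictly less than $|C|-1$, so the algorithm correctly reports that no suitable tree exists.

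I expect the main subtlety to be the asymmetric treatment of the sink $t$ — this is precisely why $\mathrm{cap}(r_t,z)=k$ whereas $\mathrm{cap}(r_b,z)=k-1$ for $b\neq t$, reflecting that $t$ alone has no outgoing arc under $f$ and so can afford one extra neighbour — together with the need to invoke integrality, which is what guarantees that each $\ell_a$ has a \emph{unique} saturated outgoing arc and hence that $f$ is a genuine function rather than a relation.
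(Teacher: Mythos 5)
Your proposal is correct and follows essentially the same route as the paper: both establish a one-to-one correspondence between integral flows of value $|C|-1$ in the network $H$ and attachment functions $f$ whose preimage sizes satisfy $|f^{-1}(a)|\le k-1$ for $a\neq t$ and $|f^{-1}(t)|\le k$, which via the degree formula for $T(f)$ and Theorem~\ref{thm:trick} yields the claim. Your added details (the cut argument bounding the flow value, the appeal to integrality to extract a well-defined $f$, and the explicit handling of the failure case) are all consistent with, and merely spell out, the paper's argument.
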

\begin{proof}
	The bound on the running time is immediate from the description of the algorithm.

	Let $f$ be some attachment function. By definition of $T(f)$, for each $a \in C \setminus \{t\}$,
	the degree of $a$ in $T(f)$ is $1 + |f^{-1}(a)|$, because there is one edge in $T(f)$ 
	corresponding to an outgoing arc of $a$ in $D$, and $|f^{-1}(a)|$ edges in $T(f)$ 
	corresponding to incoming arcs of $a$ in $D$. Also, the degree of the sink vertex $t$ in $T(f)$ 
	is $|f^{-1}(t)|$. Thus, our task reduces to deciding whether there exists an attachment function 
	$f$ with 
	\begin{equation}
	\text{$1 + |f^{-1}(a)| \le k$ (i.e., $|f^{-1}(a)| \le k-1$) for each $a \in C\setminus \{t\}$ 
	and $|f^{-1}(t)| \le k$.}
	\label{eq:low-degree-f}
	\end{equation}
	
	Such attachment functions are in one-to-one correspondence with (integral) flows of size 
	$|C|-1$ in the flow network constructed by Algorithm~\ref{alg:degree}. Indeed, suppose $f$ is an 
	attachment function satisfying \eqref{low-degree-f}. Send one unit of flow from the 
	super-source $x$ along each of its $|C|-1$ outgoing links. 
	For each $a \in C \setminus \{t\}$, send the one unit of flow that arrives to $\ell_a$ 
	towards $r_{f(a)}$. Finally, for each $b \in C$, send the flow that arrives into $r_b$ towards 
	the super-sink $z$. This flow satisfies the capacity constraints because $f$ satisfies 
	\eqref{low-degree-f}. Conversely, any integral flow of size $|C|-1$ can be used
	to define an attachment function that satisfies \eqref{low-degree-f}: 
	for each $a\in C\setminus\{t\}$ there must be one unit of flow
	leaving $\ell_a$; we set $f(a)=b$, where $r_b$ is the destination of this flow.
	The resulting $f$ satisfies \eqref{low-degree-f} due to the capacity constraints of the links 
	between nodes in $R$ and the super-sink $z$.
\end{proof}

\subsection{Minimum Pathwidth}

Here, we show how to find a tree $T \in \mathcal T(P)$ of minimum pathwidth.
Our algorithm is based on an algorithm by 
\citet{scheffler1990linear}, 
which computes a minimum-width path decomposition of a given tree in linear time.

We need a preliminary result showing that a tree always admits a minimum-width path decomposition 
with a certain property: one endpoint of each edge appears in a bag of the path decomposition that has some 
`slack', in the sense that the bag does not have maximum cardinality.

\begin{lemma}
	\label{lem:transform-path-decomp}
	For every tree $T = (C, E)$, there exists a path decomposition $S_1, \dots, S_r$ of $T$ 
	of minimum width $w$ such that, for every edge $e \in E$, there is an endpoint $a \in e$ 
	for which there exists a bag $S_i$ with $a \in S_i$ such that $|S_i| \le w$ 
	(note that $\max_i |S_i| = w + 1$).
\end{lemma}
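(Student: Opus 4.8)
The plan is to put the path decomposition into a convenient normal form — a \emph{nice} path decomposition — and then read off the desired slack bags directly from the order in which vertices are introduced. The guiding intuition: whenever two adjacent vertices $a,b$ coexist in a bag, the bag that appears in the decomposition \emph{just before the later of the two is introduced} still contains the earlier one, but it is one vertex short of being full, so it is exactly a slack bag for the edge $\{a,b\}$.

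\textbf{Normalization and notation.} First I would invoke the standard fact that any width-$w$ path decomposition of $T$ can be converted, without increasing the width, into a \emph{nice} one: a sequence $S_1,\dots,S_r$ with $S_1=S_r=\varnothing$ in which every step passes from $S_i$ to $S_{i+1}$ either by an \emph{introduce} move ($S_{i+1}=S_i\cup\{v\}$ with $v\notin S_i$) or by a \emph{forget} move ($S_{i+1}=S_i\setminus\{v\}$). One obtains this by prepending and appending empty bags and, between any two consecutive bags, interpolating forget moves down to their intersection followed by introduce moves up to the next bag; every interpolated bag is a subset of one of the two original bags, so the width stays $w$, and since for each vertex the set of bags containing it is only subdivided (never split), each vertex is introduced exactly once and forgotten exactly once. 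For $v\in C$ let $\iota(v)$ and $\phi(v)$ be the indices of the first and last bag containing $v$, so the bags containing $v$ are precisely $S_{\iota(v)},\dots,S_{\phi(v)}$, and $\iota(v)\ge 2$ because $S_1=\varnothing$. Note $\max_i|S_i|=w+1$.

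\textbf{The key step.} Since the step into $S_{\iota(v)}$ is an introduce move for $v$, we have $S_{\iota(v)}=S_{\iota(v)-1}\cup\{v\}$ with $v\notin S_{\iota(v)-1}$, hence $|S_{\iota(v)-1}|=|S_{\iota(v)}|-1\le(w+1)-1=w$; so $S_{\iota(v)-1}$ is a bag of size at most $w$. Now take any edge $e=\{a,b\}$. Some bag contains both endpoints, so the intervals $[\iota(a),\phi(a)]$ and $[\iota(b),\phi(b)]$ overlap; as distinct vertices are introduced at distinct steps we may assume $\iota(a)<\iota(b)$. Overlap together with $\iota(a)<\iota(b)$ forces $\iota(b)\le\phi(a)$, and therefore $\iota(a)\le\iota(b)-1\le\phi(a)$, i.e.\ $a\in S_{\iota(b)-1}$. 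Combining the two facts, the endpoint $a$ of $e$ lies in the bag $S_{\iota(b)-1}$, whose size is at most $w$. Since $e$ was arbitrary, this establishes the lemma, the distinguished endpoint of each edge being whichever of its two endpoints is introduced earlier.

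\textbf{Where the care is needed.} The only non-immediate point is the normalization: one must check that interpolating introduce/forget moves between consecutive bags neither raises the width (it does not — the new bags are contained in existing ones) nor destroys the interval property of any vertex (it does not — the bag sequence is merely refined), and that bookending with empty bags is harmless. After that the argument is purely combinatorial and short. The degenerate cases $|C|\le 2$ (where $T$ has at most one edge and $w\le 1$) are trivial to dispatch directly; for $|C|\ge3$ we automatically have $w\ge1$, so no separate attention is required there.
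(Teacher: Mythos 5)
Your proof is correct. The core insight is the same as in the paper — a bag that contains one endpoint of an edge but not the other has size at most $w$ — but your route to it is genuinely different. The paper works with an arbitrary minimum-width path decomposition and, for each edge $\{a,b\}$, performs a local surgery: it takes the first bag $S_i$ containing both endpoints, inserts the reduced bag $S_i\setminus\{b\}$ just before it (choosing $b$ to be an endpoint absent from $S_{i-1}$, or either endpoint if $i=1$), and checks directly that the result is still a path decomposition of width $w$. You instead invoke (and sketch the proof of) the standard normalization to a \emph{nice} path decomposition with single-vertex introduce/forget steps, after which the slack bags need not be inserted at all: for each edge, the bag immediately preceding the introduction of the later endpoint already contains the earlier endpoint and has size at most $w$. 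What each approach buys: the paper's argument is self-contained, short, and requires no machinery beyond the definition of a path decomposition; yours front-loads a single global normalization (whose verification — width preservation and the interval property under interpolation — is the only delicate part, and you address it correctly) and then yields the conclusion uniformly for all edges with a one-line computation. Both constructions run in polynomial time, which is what the subsequent algorithm (Algorithm~\ref{alg:pathwidth}) needs, so your version would serve equally well; your remarks about the cases $|C|\le 2$ are harmless but unnecessary, since the general argument already covers them.
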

\begin{proof}
	
	We show how to transform an arbitrary path decomposition of $T$ into a path decomposition 
	of the same width having the desired property.
	
	Suppose $S_1, \dots, S_r$ is a path decomposition of $T$ with width $w$. For each edge 
	$\{a,b\}\in E$, we do the following: Because $\{a,b\}$ is an edge, there exists a bag containing 
	both $a$ and $b$ Consider the smallest value of $i \in \{1, \dots, r\}$ 
	such that $a,b \in S_i$.
	
	If $i = 1$, we set $S^*=S_i\setminus\{b\}$, and append the new bag $S^*$ to the 
	left of the sequence $S_1, \dots, S_r$. Then $S^*, S_1, \dots S_r$ is still a path decomposition
	of $T$, in this path decomposition $a$ appears in $S^*$, and $|S^*| < |S_1| \le w+1$, 
	so $|S^*|\le w$.
	
	If $i > 1$, then one of $a$ or $b$ does not appear in $S_{i-1}$. Assume without loss
	of generality that $b \not \in S_{i-1}$. Again, set $S^*=S_i\setminus\{b\}$, 
	and note that $|S^*|\le w$. Place the new bag $S^*$ in between $S_{i-1}$ and $S_i$. 
	Then $S_1, \dots, S_{i-1}, S^*, S_{i}, \dots, S_r$ is still a path decomposition
	of $T$, in this path decomposition $a$ appears in $S^*$, and $|S^*|\le w$.
\end{proof}

Clearly, the transformation described in the proof of Lemma~\ref{lem:transform-path-decomp} can be 
performed in polynomial time. Since one can find some path decomposition of a tree 
in polynomial time \cite{scheffler1990linear}, one can 
find a path decomposition with the property stated in Lemma~\ref{lem:transform-path-decomp} 
in polynomial time as well.

\begin{algorithm}[th]
	\caption{Find a tree $T\in\mathcal T(P)$ of minimum pathwidth}
	\begin{algorithmic}
		\State Let $D = (C,A)$ be the attachment digraph of $P$
		\State Let $S_1, \dots, S_r$ be a path decomposition of $\mathcal G(D|_{\forced})$ 
		       of minimum width $w$ that satisfies the condition of Lemma~\ref{lem:transform-path-decomp}
		\State $f \gets \emptyset$, the attachment function under construction
		\For{each $a \in C \setminus \{t\}$}
			\If{$a \in \forced$}
				\State $f(a) \gets b$, for the unique $b \in C$ with $(a,b) \in A$
			\ElsIf{$a \in \free$}
				\State Let $c_1, c_2 \in \forced$ be two forced vertices 
				       such that $(c_1,c_2), (a,c_1),(a,c_2) \in A$
				\State \hskip\algorithmicindent (these exist by 
					Proposition~\ref{prop:twoforced})
				\State Since $\{c_1, c_2\}$ is an edge of $\mathcal G(D|_{\forced})$, 
					by the condition of Lemma~\ref{lem:transform-path-decomp}, 
				\State \hskip\algorithmicindent there is a bag $S_j$ with $c_i \in S_j$ 
					and $|S_j| \le w$, for some $i \in \{1,2\}$
				\State $f(a) \gets c_i$
				\State Make a new bag $S_a = S_j \cup \{ a \}$ and place it to 
				       the right of $S_j$
				\State \hskip\algorithmicindent in the sequence of the path decomposition
			\EndIf
		\EndFor
		\State \Return $T^* = T(f)$
	\end{algorithmic}
	\label{alg:pathwidth}
\end{algorithm}

\begin{proposition}
	Algorithm~\ref{alg:pathwidth} returns a tree $T^* \in \mathcal T(P)$ with minimum pathwidth 
        among trees in $\mathcal T(P)$ in polynomial time.
\end{proposition}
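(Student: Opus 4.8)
The plan is to establish three facts and then note the running time: (i) the output $T^* = T(f)$ really lies in $\mathcal T(P)$; (ii) the pathwidth of $T^*$ is at most $w$, where $w$ is the width of the path decomposition of $\mathcal G(D|_{\forced})$ chosen by the algorithm; and (iii) $w$ is a lower bound on the pathwidth of every tree in $\mathcal T(P)$.

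For (i), I would check that $f$ is an attachment function and then invoke Theorem~\ref{thm:trick}. For $a \in \forced \setminus \{t\}$ the arc $(a,f(a))$ is the unique outgoing arc of $a$, so this case is immediate. For a free vertex $a$, Proposition~\ref{prop:twoforced} supplies forced vertices $c_1, c_2$ with $(a,c_1), (a,c_2), (c_1,c_2) \in A$; in particular $\{c_1,c_2\}$ is an edge of $\mathcal G(D|_{\forced})$, so Lemma~\ref{lem:transform-path-decomp} guarantees that at least one $c_i$ lies in a bag $S_j$ with $|S_j| \le w$, and the algorithm sets $f(a) = c_i$, a legal choice since $(a,c_i) \in A$. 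I would also record two structural facts for later use: by circumtransitivity the $f$-image of a forced vertex is again forced, so the edges of $T^*$ joining two forced vertices are exactly the edges of $\mathcal G(D|_{\forced})$, while every free vertex is a leaf of $T^*$ joined to a forced vertex; moreover, as noted following Theorem~\ref{thm:circumtransitive}, $\mathcal G(D|_{\forced})$ is a subtree of \emph{every} $T \in \mathcal T(P)$.

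For (ii), I would show that the bag sequence the algorithm maintains is a path decomposition of $T^*$ of width $w$. It begins with a path decomposition $S_1, \dots, S_r$ of $\mathcal G(D|_{\forced})$ satisfying the conclusion of Lemma~\ref{lem:transform-path-decomp}, and for each free vertex $a$ inserts the bag $S_a = S_j \cup \{a\}$ immediately to the right of the chosen bag $S_j$, where $f(a) \in S_j$ and $|S_j| \le w$. \emph{Width:} every original bag has size at most $w+1$, and $|S_a| = |S_j| + 1 \le w+1$, so the width stays $w$; this is exactly the point at which the slack guaranteed by Lemma~\ref{lem:transform-path-decomp} is needed. \emph{Edge coverage:} each edge of $T^*$ is either an edge of $\mathcal G(D|_{\forced})$, covered by an original bag, or of the form $\{a,f(a)\}$ with $a$ free, covered by $S_a$. \emph{Interval property:} a free vertex $a$ appears only in $S_a$, whereas a forced vertex $c$ appears in the original sequence in a contiguous block $S_{i_1}, \dots, S_{i_2}$, and a newly inserted bag $S_a = S_j \cup \{a\}$ contains $c$ if and only if $c \in S_j$, i.e.\ if and only if $i_1 \le j \le i_2$; since these bags are inserted within (or just past the end of) that block, the bags containing $c$ still form an interval. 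Hence $T^*$ has pathwidth at most $w$.

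For (iii), since $\mathcal G(D|_{\forced})$ is a subtree of every $T \in \mathcal T(P)$ and pathwidth cannot increase when passing to a subgraph (intersect each bag of a path decomposition of $T$ with the vertex set of $\mathcal G(D|_{\forced})$), every $T \in \mathcal T(P)$ has pathwidth at least $w$. Combining (i)--(iii), $T^*$ belongs to $\mathcal T(P)$ and attains the minimum pathwidth over this set. Polynomiality is clear: the attachment digraph is computed in $O(|N|\,|C|^2)$ time, a minimum-width path decomposition of $\mathcal G(D|_{\forced})$ with the property of Lemma~\ref{lem:transform-path-decomp} is obtained in polynomial time via \citet{scheffler1990linear} and the transformation in the proof of that lemma, and the final loop runs in polynomial time. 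The step I expect to demand the most care is the interval-property check for the modified bag sequence --- arguing precisely that interleaving the superset bags $S_a$ neither splits the interval of any forced vertex nor destroys edge coverage, and that it is the bound $|S_j| \le w$ (rather than merely $|S_j| \le w+1$) that prevents the width from growing.
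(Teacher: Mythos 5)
Your proposal is correct and follows essentially the same route as the paper's proof: the inserted superset bags yield a width-$w$ path decomposition of $T^* = T(f)$ (using the slack from Lemma~\ref{lem:transform-path-decomp}), membership in $\mathcal T(P)$ follows from Theorem~\ref{thm:trick}, and optimality follows because $\mathcal G(D|_{\forced})$ is a subgraph of every tree in $\mathcal T(P)$ and pathwidth is monotone under subgraphs. Your explicit verification of the interval property and edge coverage for the modified bag sequence is a welcome elaboration of steps the paper leaves implicit, but it is not a different argument.
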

\begin{proof}
	Note first that the forced part $\mathcal G(D|_{\forced})$ is a tree and hence its 
	minimum-width path decomposition can be computed efficiently.

	Next we claim that the path decomposition constructed by Algorithm~\ref{alg:pathwidth} 
        is in fact a path decomposition of the output tree $T(f)$.
	Indeed, each free vertex $a \in \free$ becomes a leaf in $T(f)$, and only occurs in a single bag 
	$S_a$ in the constructed path decomposition. Since $a$ is a leaf, there is only one edge of $T$ 
	that contains it (namely, $\{ a, f(a) \}$), and we have $a, f(a) \in S_a$. 
	Also, since $a$ only occurs in a single bag, the set of bags containing $a$ 
	is trivially an interval of the path decomposition sequence. 
	
	Next, observe that the path decomposition of $T(f)$ has the same width $w$ as the pathwidth 
	of the forced part $\mathcal G(D|_{\forced})$, because all new bags have cardinality 
	at most $w + 1$.
	Now, because $\mathcal G(D|_{\forced})$ is a subgraph of every $T \in \mathcal T(P)$, 
	no tree in $\mathcal T(P)$ can have a smaller pathwidth than $\mathcal G(D|_{\forced})$. 
	Since Algorithm~\ref{alg:pathwidth} identifies a tree $T \in \mathcal T(P)$ 
	with the same pathwidth as $\mathcal G(D|_{\forced})$, it outputs an optimal solution.
\end{proof}

\subsection{Other Graph Types}

To conclude this section, we explain how to recognize 
whether $\mathcal T(P)$ contains trees of certain types.

\paragraph{Paths}
The literature 
contains several algorithms for recognizing profiles that are single-peaked on a path. The 
algorithms by \citet{doignon1994polynomial} and \citet{escoffier2008single} can be 
implemented to run in time $O(mn)$. One could also use some of the algorithms presented 
above. Algorithm~\ref{alg:min-leaves} finds a tree $T \in \mathcal T(P)$ with a minimum 
number of leaves. Clearly, if $\mathcal T(P)$ contains a path, then this will be discovered 
by the algorithm. Alternatively, Algorithm~\ref{alg:degree} can be used to look for a tree $T \in 
\mathcal T(P)$ with maximum degree $k = 2$; it will succeed if and only if $P$ is 
single-peaked on a path. However, 
both Algorithm~\ref{alg:min-leaves} and Algorithm~\ref{alg:degree} 
depend on pre-computing the attachment digraph, which takes time $O(m^2n)$. Thus,  
an attachment digraph-based approach would necessarily be slower than 
the linear-time algorithms from previous work.

\paragraph{Stars}
In Proposition~\ref{prop:spt-star}, we observed that a profile is single-peaked on a star 
graph if and only if there is a candidate $c \in C$ such that every voter ranks $c$ in either 
first or second position. This condition can easily be verified in $O(n)$ time, without 
needing to compute the attachment digraph. Note that Algorithm~\ref{alg:min-internal} 
(minimizing the number of internal vertices) will output a star whenever $\mathcal T(P)$ 
contains a star graph.

\paragraph{Caterpillars}
Caterpillar graphs are exactly the trees of pathwidth 1 \cite{proskurowski1999classes}, 
and so Algorithm~\ref{alg:pathwidth} can check whether a profile is single-peaked on a caterpillar.
In fact, one can use an even simpler algorithm: it suffices
to compute $\mathcal G(D|_{\forced})$ and check that it is a caterpillar. 
Indeed, if not, then no tree in $\mathcal T(P)$ can be a caterpillar. 
On the other hand, if $\mathcal G(D|_{\forced})$ is a caterpillar
then Algorithm~\ref{alg:min-internal} 
finds a caterpillar graph in $\mathcal T(P)$.
To see this, recall that this algorithm 
attaches every free vertex as a leaf to an internal vertex 
of $\mathcal G(D|_{\forced})$.

\paragraph{Subdivision of a Star}
A tree is a subdivision of a star if at most one vertex has degree 3 or higher. We can find a 
subdivision of a star in $\mathcal T(P)$, should one exist, by adapting Algorithm~\ref{alg:degree}: 
we guess the center of the subdivision of the star, and then assign suitable upper bounds on 
the vertex degrees by appropriately setting the capacity constrains in the flow network.

\section{Hardness of Recognizing Single-Peakedness on a Specific Tree}
The algorithms presented in Section~\ref{sec:recogn} enable us to answer a wide range of questions about 
the set $\mathcal T(P)$. However the following NP-hardness result shows that not 
every such question can be answered efficiently unless P = NP.

Two graphs $G_1 = (V_1, E_1)$ and $G_2 = (V_2, E_2)$ are said to be \emph{isomorphic} if there is a 
bijection $\phi : V_1 \to V_2$ such that for all $u,v \in V_1$, it holds that $\{ u, v\} \in 
E_1$ if and only if $\{ \phi(u), \phi(v) \} \in E_2$; we write $G_1\cong G_2$ 
whenever this is the case.
We consider the following computational 
problem.

\begin{center}
\begin{tabular*}{0.95\linewidth}{ll}
	\toprule
	\multicolumn{2}{l}{\textsc{Single-Peaked Tree Labeling}} \\
	\midrule
	\textit{Instance:} & Profile $P$ over $C$, a tree $T_0$ on $|C|$ vertices \\
	\textit{Question:} & Is there a tree $T = (C,E)$ such that $T\cong T_0$ and $P$ is single-peaked on $T$? \\
	\bottomrule
\end{tabular*}
\end{center}

In this problem, we are given a `template' unlabeled tree $T_0$, and need to decide whether 
we can label the vertices in this template by candidates so as to make the input profile 
single-peaked on the resulting labeled tree. For example, if $T_0$ is a path, then the 
problem is to decide whether the profile $P$ is single-peaked on a path, and in this case the 
problem is easy to solve. However, the template $T_0$ occurs in the input to the 
decision problem, and it is not clear how to proceed if we would like to check 
whether $T_0$ `fits' into the attachment digraph. 
In fact, as we now show, this problem is NP-complete.

\begin{theorem}\label{thm:giventree}
	The problem \textsc{Single-Peaked Tree Labeling} is \textup{NP}-complete 
	even if $T_0$ is restricted to diameter at most four.
\end{theorem}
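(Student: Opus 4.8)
The plan is as follows. \emph{Membership in \textup{NP}} is immediate: a certificate is a bijection $\phi\colon C\to V(T_0)$, from which one forms the tree $T$ on $C$ with $\{a,b\}\in E(T)$ iff $\{\phi(a),\phi(b)\}\in E(T_0)$ (so $T\cong T_0$ by construction), and then checks in polynomial time, via Proposition~\ref{prop:spt-equivalences}, that $P$ is single-peaked on $T$. The bulk of the work is \textup{NP}-hardness, which I would prove by reducing from \exactcover: given a ground set $X=\{x_1,\dots,x_{3m}\}$ and a family $\mathcal S=\{S_1,\dots,S_q\}$ of three-element subsets of $X$, decide whether some $m$ of the $S_j$ partition $X$. (We may assume each $x_i$ lies in some set and that $q\ge 5$, since otherwise the instance has constant size.) The template tree $T_0$ is the tree with a centre $\gamma$ adjacent to $q$ vertices $\sigma_1,\dots,\sigma_q$, exactly $m$ of which additionally carry three pendant leaves each; then $T_0$ has $1+q+3m$ vertices and diameter at most $4$. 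Correspondingly the candidate set is $C=\{c\}\cup\{\widetilde S_j\}_{j\in[q]}\cup\{\widetilde x_i\}_{i\in[3m]}$, of the same size as $V(T_0)$.

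The key idea is to build a profile $P$ whose attachment digraph $D$ (Algorithm~\ref{alg:attachment-digraph}) is completely under control: the forced part is $\forced=\{c\}\cup\{\widetilde S_j\}$ with the only arcs among forced vertices being $(\widetilde S_j,c)$, so that $\mathcal G(D|_{\forced})$ is the star centred at $c$; and every $\widetilde x_i$ is free with $B(\widetilde x_i)=\{c\}\cup\{\widetilde S_j : x_i\in S_j\}$, with no arc between two element-vertices. Granting such a $P$, Theorem~\ref{thm:trick} identifies $\mathcal T(P)$ with the set of trees $T(f)$ in which every $\widetilde S_j$ is a leaf adjacent to $c$, each $\widetilde x_i$ is a leaf adjacent either to $c$ or to some $\widetilde S_j$ with $x_i\in S_j$, and $c$ is the unique hub. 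I would then establish ``some $T(f)\cong T_0$ iff the \exactcover instance is a yes-instance'' by a degree count: $\gamma$ is the unique maximum-degree vertex of $T_0$ (degree $q>4$), so an isomorphism $T(f)\cong T_0$ forces $\deg_{T(f)}(c)=q$, i.e.\ no $\widetilde x_i$ attaches to $c$, whence all $3m$ elements attach to set-vertices; matching the rest of the degree sequence then forces exactly $m$ of the $\widetilde S_j$ to absorb three elements apiece---which, as $|S_j|=3$, must be precisely $S_j$---and the other $q-m$ set-vertices to remain leaves, so $T(f)$ and $T_0$ have the same shape and the $m$ chosen sets partition $X$. The converse is immediate: from an exact cover, let $f$ send each $\widetilde x_i$ to the unique chosen set-vertex containing it (and each $\widetilde S_j$ to $c$); then $T(f)\cong T_0$, and $P$ is single-peaked on it by Theorem~\ref{thm:trick}.

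The hard part will be realising the prescribed attachment digraph with a concrete, polynomial-size profile that is single-peaked on at least one tree (so that Theorem~\ref{thm:trick} applies). I would fix the star $T_c$ centred at $c$ (all other candidates leaves) as that reference tree, so by Proposition~\ref{prop:spt-star} it suffices that every voter rank $c$ first or second, the rest of each ballot being otherwise unconstrained. The voters are then built from five families: (i) for each $j$, the ballot $\widetilde S_j\succ c\succ(\text{all other set-candidates})\succ(\text{all element-candidates})$---these are the only ballots ranking a set-candidate above $c$, and they pin down $B(\widetilde S_j)=\{c\}$; (ii) for each $j$, a ballot with $c$ first that puts $\widetilde S_j$ below every other set-candidate, so $\widetilde S_j$ is removed as a bottom candidate of $C_2$ in the second round of the algorithm; (iii) for each $i$ and each $j$ with $x_i\notin S_j$, a ballot $c\succ(\text{set-candidates containing }x_i)\succ\widetilde x_i\succ\widetilde S_j\succ(\text{rest})$, which witnesses $\widetilde S_j\notin B(\widetilde x_i)$ while keeping every set-candidate above each element it contains; (iv) for each $i\ne i'$, a ballot with $c$ first, all set-candidates above all elements, and $\widetilde x_i$ above $\widetilde x_{i'}$, witnessing $\widetilde x_{i'}\notin B(\widetilde x_i)$; and (v) for each $i$, a ballot ending with $\widetilde x_i$ in last place, so $\widetilde x_i$ is removed in the first round. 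Every ballot in these families ranks $c$ first or second, ends in an element-candidate (so no set-candidate and not $c$ is ever a round-one bottom candidate), and keeps each set-candidate above every element it contains. The remaining work---checking that these requirements are mutually consistent for the explicit orders above, and then tracing Algorithm~\ref{alg:attachment-digraph} on the resulting profile to confirm that its attachment digraph is exactly $D$---is routine bookkeeping. The total number of voters is $O((mq)^2)$, so the reduction is polynomial-time, which finishes the argument.
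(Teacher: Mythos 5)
Your proposal is correct, and its skeleton is the same as the paper's: both reduce from \exactcover{} with the identical template (a center of degree $q$, of whose $q$ neighbors exactly $m$ carry three pendant leaves), the identical candidate set (one hub candidate, one candidate per set, one per element), and the identical intended correspondence (an exact cover $=$ the choice of which set-candidates receive their three elements as leaves). Where you genuinely diverge is in the mechanism. The paper builds a tiny profile -- just one voter per element and one per set ($p+q$ ballots) -- and argues correctness directly from connectivity of top-initial segments: the set-voters force every $y_j$ to be adjacent to $z$, which pins $z$ to the unique high-degree template vertex, and the element-voters force each $x_i$ to hang off $z$ or a containing set-candidate. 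You instead engineer a larger profile ($O(mq+m^2)$ ballots in five families, all ranking the hub first or second so that single-peakedness on the star is guaranteed via Proposition~\ref{prop:spt-star}) whose attachment digraph you control exactly, then invoke Theorem~\ref{thm:trick} to read off all of $\mathcal T(P)$ and finish with a degree-multiset argument (valid here since both trees are depth-two spiders, and your assumption $q\ge 5$ makes the hub the unique maximum-degree vertex; you should also note that $m\le q$ may be assumed, as otherwise the instance is trivially negative and the template is not even defined). I checked your five families against Algorithm~\ref{alg:attachment-digraph}: the bookkeeping you defer does go through -- $L_1$ is exactly the element-candidates, $L_2$ exactly the set-candidates, $B(\widetilde S_j)=\{c\}$ via the family-(i) second-place rule, and $B(\widetilde x_i)=\{c\}\cup\{\widetilde S_j: x_i\in S_j\}$ via families (iii) and (iv). The trade-off: the paper's route is shorter, needs far fewer voters, and is self-contained (it never touches the attachment-digraph machinery), while your route buys a complete and transparent description of $\mathcal T(P)$, at the price of a bulkier gadget profile and the extra verification burden.
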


\begin{proof}
	The problem is in NP since, having guessed a tree $T$ and an isomorphism $\phi$, 
  	we can easily check that $\phi$ is an isomorphism between $T$ and $T_0$ 
	and that the input profile is single-peaked on $T$.
	
	For the hardness proof, we reduce \textsc{Exact Cover by 3-Sets (X3C)} to our problem.
	An instance of X3C is given by a ground set
	$X$ and a collection $\mathcal Y$ of size-3 subsets of $X$. 
        It is a `yes'-instance if there is a subcollection $\mathcal Y'\subseteq \mathcal Y$ 
        of size $|X|/3$
	such that each element of $X$ appears in exactly one set in $\mathcal Y'$.
	This problem is NP-hard \cite{gj}.

	Suppose we are given an X3C-instance with ground set $X = \{x_1,\dots,x_p\}$, 
	where $p = 3p'$ for some positive integer $p'$, 
	and a collection $\mathcal Y = \{Y_1,\dots,Y_q\}$ of 3-element subsets of $X$. 
	We then construct an instance of \textsc{Single-Peaked Tree Labeling} as follows.
	First, we construct a tree $T_0$ 
	with vertex set $C_0 = \{ a, b_1, \dots, b_{q - p'}, c_1, \dots, c_{p'}\} \cup \{d_{i,j} : 1 \le i \le p', 1 \le j \le 3 \}$,
	and edge set 
	$E_0 = \{ \{ a, b_i \} : 1 \le i \le q - p' \} \cup \{ \{ a, c_i \} : 1 \le i \le p' \} \cup \{ \{c_i, d_{i,j} \} : 1 \le i \le p', 1 \le j \le 3 \}$.
	The resulting tree is drawn below. It has $3p'+p'+1+ (q-p') = p+q+1$ vertices and 
	diameter 4.
	\[\xymatrix@C-2.2pc@R-0.5pc{
		& b_1 \ar@{-}[drrrrrrr] && b_2 \ar@{-}[drrrrr] && b_3 \ar@{-}[drrr] && \dots && b_i \ar@{-}[dl] && \dots & b_{q-p'} \ar@{-}[dllll] \\
		&&&&&&&& a \ar@{-}[dlllllll] \ar@{-}[dlll] \ar@{-}[drrrr] \\
		& c_1 \ar@{-}[dl] \ar@{-}[d] \ar@{-}[dr] 
		&&&& c_2 \ar@{-}[dl] \ar@{-}[d] \ar@{-}[dr]
		&&& \dots 
		&&&& c_{p'} \ar@{-}[dl] \ar@{-}[d] \ar@{-}[dr] \\
		d_{1,1} & d_{1,2} & d_{1,3}
		&&
		d_{2,1} & d_{2,2} & d_{2,3}
		&& \dots &&&
		d_{p', 1} & d_{p',2} & d_{p',3}
	}\]
	Next, we construct a profile $P$ with $|N| = p+q$ voters on the
	candidate set $C = \{ z, x_1, \dots, x_p, y_1, \dots, y_q \}$.
	$P$ will contain one vote for each object in $X$ and one vote for each set in $\mathcal Y$. 
	In the following, all indifferences can be broken arbitrarily.
For each object $x_i$, we add a voter $v_{x_i}$:
\[ z \succ \{ y_j : Y_j \ni x_i \} \succ x_i \succ \{ y_j : Y_j \not\ni x_i \} \succ X \setminus \{x_i\}. \]
Intuitively, the presence of this voter will force $x_i$ to be attached to $z$ or to 
a candidate that corresponds to a set containing $x_i$.
For each set $Y_j$, we add a voter $v_{Y_j}$:
\[ z \succ y_j \succ \{ y_\ell : 1\le \ell \le q, \ell\neq j \} \succ X. \]
The presence of this voter will force an edge from $z$ to  $y_j$.

This completes the description of the reduction. We now prove that it is correct.

Suppose the given X3C-instance is a `yes'-instance, and let $\mathcal Y'$
be a cover consisting of $p'$ sets. Renumbering the elements and sets if necessary, 
we can assume that $\mathcal Y'=\{Y_1, \dots, Y_{p'}\}$ and 
$Y_j=\{x_{3j-2}, x_{3j-1}, x_{3j}\}$ for each $j=1, \dots, p'$.
Then we build a labeling isomorphism $\phi : C_0 \to C$ as follows: 
We set $\phi(a) = z$.
For each $j=1, \dots, p'$, we set $\phi(c_j)=y_j$, 
and $\phi(d_{j, k})=x_{3(j-1)+k}$ for $k=1, 2, 3$.
Also, for each $j=1, \dots, q-p'$, we set $\phi(b_j)=y_{p'+j}$.
Note that $\phi$ is a bijection because $\mathcal Y'$ is an exact cover.
The resulting labeled tree $T$ is shown below.
It is easy to check that the profile $P$ is single-peaked on $T$.

\[\xymatrix@C-2.2pc@R-0.5pc{
		& y_{p'+1} \ar@{-}[drrrrrrr] && y_{p'+2} \ar@{-}[drrrrr] && y_{p'+3} \ar@{-}[drrr] && 
                   \dots && y_{r} \ar@{-}[dl] && \dots & y_{q} \ar@{-}[dllll] \\
		&&&&&&&& z \ar@{-}[dlllllll] \ar@{-}[dlll] \ar@{-}[drrrr] \\
		&y_{1} \ar@{-}[dl] \ar@{-}[d] \ar@{-}[dr] 
		&&&& y_{2} \ar@{-}[dl] \ar@{-}[d] \ar@{-}[dr]
		&&& \dots 
		&&&& y_{p'} \ar@{-}[dl] \ar@{-}[d] \ar@{-}[dr] \\
		x_{1} & x_{2} & x_{3}
		&&
		x_{4} & x_{5} & x_{6}
		&& \dots &&&
		x_{p-2} & x_{p-1} & x_{p}
	}\]
	
	Conversely, suppose that there is a tree $T$ isomorphic to $T_0$ 
        such that $P$ is single-peaked on $T$.
	Let $\phi : C_0 \to C$ is a witnessing isomorphism.
	Note that the vertex $z$ of $T$ must have degree at least $q$, 
        because for each $j \in \{1, \dots, q\}$, voter $v_{Y_j}$ 
        can only be single-peaked on $T$ if $z$ and $y_j$ are adjacent in $T$.
	There is only one such vertex in $T_0$, namely $a$, and hence $\phi(a) = z$. 
	The vertex $z$ of $T$ has exactly $q$ neighbors, which then must all be labeled by some $y_j$. 
	Exactly $p'$ of the $q$ neighbors of $z$ have degree 4.
	Let $\mathcal Y' = \{ Y_j \in \mathcal Y : y_j = \phi(c_i) \text{ for some } 1 \le i \le p' \}$ 
        be the collection of the $p'$ sets corresponding to candidates 
	that occupy the vertices of degree 4.
	We claim that $\mathcal Y'$ is a cover.
	Let $x_i \in X$. The vertex labeled with $x_i$ must be a leaf of $T$ 
        because all internal vertices of $T$ have already been labeled otherwise. 
        Then, because $v_{x_i}$ is single-peaked on $T$, the set 
        $\{z,x_i\} \cup \{ y_j : Y_j \ni x_i \}$ must be connected in $T$, 
        so the neighbor of $x_i$ must be a member of that set. 
        But $x_i$ cannot be a neighbor of $z$, so $x_i$ is a neighbor of some $y_j$ where $x_i \in Y_j$. 
        This implies that $y_j$ is the label of a degree-4 vertex. Hence $Y_j \in \mathcal Y'$, 
        and so $x_i$ is covered by $\mathcal Y'$.
\end{proof}
By creating multiple copies of the center vertex and adding some peripheral vertices, 
we can adjust this reduction to show that {\sc Single Peaked Tree Labeling} remains hard even if 
each vertex of $T_0$ has degree at most three (we omit the somewhat tedious proof).

In the appendix, by modifying the reduction in the proof of Theorem~\ref{thm:giventree}, 
we show that it is also NP-complete to decide 
whether a given preference profile is single-peaked on a \emph{regular} tree, i.e., a tree 
where all internal vertices have the same degree (Theorem~\ref{thm:hard-reg}). 
This hardness result stands in contrast to the many easiness results of Section~\ref{sec:recogn}.

\section{Conclusions}\label{sec:concl}
Without any restrictions on the structure of voters' preferences, 
winner determination under the Chamberlin--Courant rule is NP-hard.
Positive results have been obtained when preferences are assumed to be single-peaked, 
and we studied whether these results can be extended to preferences that are single-peaked on a tree.
For the egalitarian variant of the rule, we showed that this is indeed the case:
a winning committee can be computed in polynomial time for any tree and any scoring function.
For the utilitarian setting, we show that winner determination is hard for general preferences 
single-peaked on a tree, but we find positive results when imposing additional restrictions. 
One algorithm we present runs in polynomial time when preferences are single-peaked on a tree 
which has a constant number of leaves, and another runs efficiently on a tree with a small number 
of internal vertices. Interestingly, the two algorithms are, in some sense, incomparable.
Specifically, the former algorithm works for all scoring functions, 
while for the latter algorithm this is not the case (though it does work for the most common scoring function, 
i.e., the Borda scoring function). On the other hand, the latter algorithm establishes 
that computing a winning committee is in FPT with respect 
to the combined parameter `committee size, number of internal vertices', 
while the former algorithm does not establish fixed-parameter tractability with respect
to the combined parameter `committee size, number of leaves'.
An open question is whether there exist FPT algorithms or W[1]-hardness results for the parameters `number of internal vertices' and the combined parameter `committee size, number of leaves'.

It would be interesting to see whether our easiness results for preferences 
that are single-peaked on a tree
extend to the egalitarian version of the Monroe rule \cite{monroe}, where each committee member
has to represent approximately the same number of voters. 
\citet{betzler2013computation} show that this rule becomes 
easy for preferences single-peaked on a path, but their argument for that rule 
is much more intricate than for egalitarian Chamberlin--Courant.

To make our parameterized winner determination algorithms applicable, 
we have investigated the problem of deciding whether the input profile 
is single-peaked on a `nice' tree, for several notions of `niceness'.
To this end, we have proposed a new data structure, namely, the attachment 
digraph, which compactly encodes the set $\mathcal T(P)$ of all trees $T$ 
such that the profile $P$ is single-peaked on $T$,
and showed how to use it to identify trees with desirable properties.
In particular, we showed how to find a tree in $\mathcal T(P)$ with the minimum number of leaves 
or the minimum number of internal vertices, to be used with our winner
determination algorithms. To demonstrate the power of our framework, 
we also designed efficient algorithms
for several other notions of `niceness', such as small diameter, 
small maximum degree and small pathwidth. However, there are also notions
of `niceness' that defy this approach: we show that it is NP-hard to decide
whether an input profile is single-peaked on a regular tree.
Another interesting measure of `niceness' is vertex deletion distance to a path, 
i.e., the number of vertices that need to be deleted from a tree so that the remaining 
graph is a path. In particular, this parameter is relevant for the elicitation
results of \citet{dey2016elicitation}. Finding a tree in $\mathcal T(P)$ 
that minimizes this parameter is equivalent to finding a tree with the maximum diameter, 
which is closely related to the problem of finding a maximum-length path 
in the attachment digraph. However, we were not able to design a polynomial-time 
algorithm for this problem (or to show that it is computationally hard).
Similarly, the complexity of finding a tree in $\mathcal T(P)$ that has
the minimum path cover number (another parameter considered by \citet{dey2016elicitation})
is an open problem. 

It would also be interesting to explore the parameterized complexity of the 
problems related to the identification of `nice' trees. 
For instance, {\sc Single-Peaked Tree Labeling} is trivially fixed-parameter tractable
with respect to the number of candidates, as we could explore all possible labelings;
however, its parameterized complexity with respect to the number of voters is an
interesting open problem. In a similar vein, we can ask if we can find a tree
in $\mathcal T(P)$ with approximately minimal vertex deletion distance to a path 
or approximately minimal path cover number. Indeed, constant-factor approximation
algorithms for these problems can still be used in conjunction with the elicitation
algorithms of \citet{dey2016elicitation} in order to improve over elicitation algorithms
for unstructured preferences.

\section*{Acknowledgments}
This research was supported by 
National Research Foundation (Singapore) under grant RF2009-08
(Edith Elkind, Lan Yu), by the European Research Council (ERC) under grant 639945 (ACCORD) (Edith Elkind, Dominik Peters), by EPSRC (Dominik Peters), and by
EAPSI and National Science Foundation (USA) under grant OISE-1209805 (Hau Chan).

A preliminary version of the results in this paper appeared in two conference papers 
\cite{yu2013multiwinner,pe16}. In this paper, we unify the exposition of these papers, 
provide more intuition, and give pseudocode and detailed analysis of  several algorithms 
that were only sketched in the conference versions.  
\citet{yu2013multiwinner} leave as an open question whether the Chamberlin--Courant rule 
on profiles single-peaked on a tree remains NP-hard for the Borda scoring function. 
We answer this question in Theorem~\ref{thm:hard-borda}.

We would like to thank the anonymous JAIR reviewers for their careful reading of the paper
and many useful suggestions, which ranged from catching typos to simplifying some of the proofs.

\appendix

\section{Hardness of Utilitarian Chamberlin--Courant for Low-Degree Trees}

Here we modify the reduction in the proof of Theorem~\ref{thm:hard-borda} 
to establish that {\sc Utilitarian CC} remains hard on trees of maximum degree~3.

\begin{theorem}\label{thm:hard-deg3}
	Given a profile $P$ that is single-peaked on a tree with maximum degree 3, 
        a target committee size $k$, and a target score $B$, it is \textup{NP}-complete 
        to decide whether there exists a committee of size $k$ with score at least $B$ 
        under the utilitarian Chamberlin--Courant rule with the Borda scoring function.
\end{theorem}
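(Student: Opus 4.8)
The plan is to adapt the \textsc{Vertex Cover} reduction from the proof of Theorem~\ref{thm:hard-borda}, replacing the single high-degree vertex $a$ by a path of \emph{clones} $a_1, a_2, \dots, a_p$. In the new tree $T$ I would attach $y_i$ as a pendant onto $a_i$ (keeping $z_i$ attached to $y_i$ as before), and hang the $M$ dummy candidates as a pendant path off $a_p$. Every vertex of $T$ then has degree at most three: an interior clone $a_i$ ($2\le i\le p-1$) has its two path-neighbours plus $y_i$, while $a_1$, $a_p$, the $y_i$, the $z_i$ and the dummy-path vertices all have degree at most two. The voters are essentially as before, but each vote must be completed so as to stay single-peaked on $T$: an $N_1$-voter for $u_i$ ranks $y_i\succ z_i\succ a_i$, then the remaining clones expanding outward from $a_i$ along the path, then $d_1\succ\cdots\succ d_M$, then the rest; the $N_3$-voters are analogous; and the $N_2$-voter for an edge $e_j=\{u_{j_1},u_{j_2}\}$ with $j_1<j_2$ ranks $a_{j_1}\succ a_{j_1+1}\succ\cdots\succ a_{j_2}$, then $y_{j_1}\succ y_{j_2}$, then the remaining clones (expanding outward from the sub-path $a_{j_1}\dots a_{j_2}$), then $d_1\succ\cdots\succ d_M$, then the rest. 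Checking that each top-initial segment of each such ranking is connected in $T$ — hence single-peakedness, via Proposition~\ref{prop:spt-equivalences} — is routine from the shape of $T$.

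The essential new phenomenon, and the point where the original numbers break, is that the $N_2$-voter for $e_j$ must now ``walk'' along the clone path from $a_{j_1}$ to $a_{j_2}$ before reaching $y_{j_1},y_{j_2}$, so even a \emph{covered} edge contributes up to roughly $p$ to the cost rather than the constant $2$ of the original reduction. To compensate, I would put $K:=q(p+2)$ identical voters per vertex in $N_1$ (instead of $5q$), keep the committee size $k=p+t$, set the cost bound to $B:=K(p-t)+q(p+1)$, and let $M$ be ``very large'', e.g.\ $M:=B+1$ (note $B$ does not involve $M$). All these quantities, and hence the whole instance, are polynomial in the size of the \textsc{Vertex Cover} instance, and membership in NP is immediate (guess $W$ and evaluate $\cost_\mu^+$).

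For correctness, the forward direction goes as before: from a size-$t$ vertex cover $S$, the committee $W=Z\cup\{y_i:u_i\in S\}$ has $|W|=p+t=k$, the $N_3$-voters and the $N_1$-voters for $u_i\in S$ contribute $0$, the remaining $K(p-t)$ voters in $N_1$ contribute $1$ each, and each $N_2$-voter is represented by one of $y_{j_1},y_{j_2}$ and so contributes at most $p+1$; hence $\cost_\mu^+(P,W)\le K(p-t)+q(p+1)=B$. For the converse, suppose $\cost_\mu^+(P,W)\le B$. First $Z\subseteq W$, since otherwise the $M$ $N_3$-voters for an uncovered $z_i$ already force cost $>M>B$. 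As $|W|=p+t$ and $Z\subseteq W$, there are exactly $t$ further candidates in $W$; if any of them is a clone or a dummy, then at most $t-1$ of the $y_i$ lie in $W$, so the $N_1$-voters force cost $\ge K(p-t+1)=K(p-t)+K>B$, using $K>q(p+1)$. Hence $W=Z\cup Y'$ with $Y'\subseteq Y$, $|Y'|=t$, the $N_1$-cost is exactly $K(p-t)$, and the remaining budget for $N_2$ is $q(p+1)$; if some edge $e_j$ were not covered by $\{u_i:y_i\in Y'\}$, its $N_2$-voter would be represented only by a candidate ranked below all $M$ dummies, contributing more than $M>B$, a contradiction. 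Thus $\{u_i:y_i\in Y'\}$ is a vertex cover of size $t$.

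I expect the main obstacle to be exactly the re-balancing of the numerical parameters: splitting $a$ forces the ``covered-edge'' contribution up from $O(1)$ to $O(p)$, so one must inflate the $N_1$-multiplier to $\Theta(pq)$ to guarantee that sacrificing a $y$-slot (for instance, to place a clone in the committee and thereby cheaply represent many $N_2$-voters) is never profitable. Verifying that the new constants still satisfy every required inequality, and that all votes can in fact be completed single-peaked on the degree-$3$ tree, constitutes the bulk of the (routine but somewhat tedious) work.
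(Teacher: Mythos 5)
Your proposal is correct and follows essentially the same route as the paper's own proof: clone $a$ into a path $a_1,\dots,a_p$ with $y_i$ pendant on $a_i$, $z_i$ on $y_i$, and the dummies as a path off $a_p$, then re-balance the $N_1$ multiplier to $\Theta(pq)$ so that covered edges costing up to $p+1$ (rather than $2$) never outweigh a missing $y$-candidate; the paper uses $5pq$ voters per vertex, $B=(5pq)(p-t)+q(p+1)$ and $M=5p^2q$, which plays exactly the role of your $K=q(p+2)$ and $M=B+1$. The only cosmetic differences (ranking all clones versus only the sub-path $a_{j_1},\dots,a_{j_2}$ ahead of the two $y$'s in the $N_2$ votes, and your harmless slip that $a_p$ actually has degree $3$) do not affect correctness.
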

\begin{proof}
	We will provide a reduction from the classic \textsc{Vertex Cover} problem.
	Given an instance $(G, t)$ of \textsc{Vertex Cover} such that $G=(V, E)$,
	$V=\{u_1, \dots, u_p\}$ and $E=\{e_1, \dots, e_q\}$,
        we construct an instance 
	of \textsc{Utilitarian CC} as follows. 

	Let $M=5p^2q$; intuitively, $M$ is a large number.
	We introduce three candidates $a_i$, $y_i$ and $z_i$ 
        for each vertex $u_i \in V$, and $M$ dummy candidates. 
        Formally, we set $A=\{a_1, \dots, a_p\}$, $Y=\{y_1, \dots, y_p\}$, $Z=\{z_1, \dots, z_p\}$, 
        $D = \{ d_1,\dots,d_M \}$, and define the candidate set to be
	$C= A \cup Y \cup Z\cup D$.
	We set the target committee size to be $k = p + t$.
	
	We now introduce the voters, who will come in three types: $N = N_1 \cup N_2 \cup N_3$.
	
	\[\def\arraystretch{1.3}
	\begin{array}{cccccccccccccccccccccc}
	\toprule
	\multicolumn{3}{c}{N_1} & & \multicolumn{3}{c}{N_2} & & \multicolumn{3}{c}{N_3} \\
	\cmidrule{1-3} \cmidrule{5-7} \cmidrule{9-11} 
	5pq & \cdots & 5pq      && 1 & \cdots & 1 && M & \cdots & M \\
	\cmidrule{1-3} \cmidrule{5-7} \cmidrule{9-11} 
	y_1 & 	& y_p      && A              && A               &&  z_1   &&   z_p  \\
	z_1 &   & z_p      && y_{j_{1,1}}    && y_{j_{q,1}}     &&  y_1   &&   y_p   \\
	A   &   & A        && y_{j_{1,2}}    && y_{j_{q,2}}     &&  A     && A \\
	D   &   & D        && D              && D               &&  D     && D \\
	\smash\vdots       && \smash\vdots   && \smash\vdots    &&  \smash\vdots     && \smash\vdots 
        && \smash\vdots  \\   
	\bottomrule
	\end{array}
	\]
	
	\begin{itemize}
		\item The set $N_1$ consists of $5pq$ identical voters for each $u_i\in V$: 
                      they rank $y_i$ first, $z_i$ second, and $a_i$ third, followed by 
                      other candidates:
		      \[ 
                      y_i \succ z_i \succ a_i \succ a_{i+1} \succ \dots \succ a_p 
                          \succ a_{i-1} \succ \dots \succ a_1 
                          \succ d_1 \succ \dots \succ d_M \succ \cdots 
                      \]
		\item The set $N_2$ consists of a single voter for each edge $e_j\in E$: 
                      this voter ranks candidates in $A$ first (as $a_1 \succ\dots\succ a_p$), 
                      followed by the two candidates 
                      from $Y$ that correspond to the endpoints of $e_j$ (in an arbitrary order),
                      followed by the dummy candidates $d_1,\dots,d_M$, 
                      followed by all other candidates as specified below.
                      The purpose of these voters is to ensure that every edge is covered 
                      by one of the vertices that correspond to a committee member, 
                      and to incur a heavy penalty of $M$ if the edge is uncovered.
		\item The set $N_3$ is a set of $M$ identical voters for each $u_i\in V$ 
                      who all rank $z_i$ first and $y_i$ second:
		      \[ 
                      z_i \succ y_i \succ a_i \succ a_{i+1} \succ \dots \succ a_p \succ a_{i-1} 
                          \succ \dots \succ a_1 \succ d_1 \succ \dots \succ d_M \succ \cdots 
                      \]
	\end{itemize}
	
	\noindent
	We complete the voters' preferences so that the resulting profile is 
        single-peaked on the following tree:
	\[
	\xymatrix{
		a_1 \ar@{-}[d] \ar@{-}[r] & \cdots \ar@{-}[r] & a_p \ar@{-}[d] 
                    \ar@{-}[r] & d_1 \ar@{-}[r] & \cdots \ar@{-}[r] & d_M \\
		y_1 \ar@{-}[d] & \cdots & y_p \ar@{-}[d]  \\
		z_1 & \cdots & z_p
	}
	\]
	This tree is obtained by starting with a path through $A$ and $D$, 
        and then attaching $y_i$ as a leaf onto $a_i$ and $z_i$ as a leaf onto $y_i$
        for every $i = 1,\dots,p$. 
        Note that the resulting tree has maximum degree 3. 
        It remains to specify how to complete each vote in our profile to
        ensure that the resulting profile is single-peaked on this tree. 
        Inspecting the tree, we see that it suffices to ensure that 
        for each $i=1, \dots, p$ it holds that in all votes where the positions 
        of $y_i$ and $z_i$ are not given explicitly, candidate $y_i$ is ranked above~$z_i$.

	We will again reason about costs rather than scores.
	We set the upper bound on cost to be $B=(5pq)(p-t)+q(p+1)$ (note that by construction, $M > B$). 
	This completes the description of our instance of the \textsc{Utilitarian CC} 
	problem with the Borda scoring function $\vecs = (0, -1, -2, \dots)$.
	Intuitively, the `correct committee' we have in mind consists of all $z_i$ candidates 
    (of which there are $p$) and a selection of $y_i$ candidates that corresponds 
    to a vertex cover (of which there should be $t$), should a vertex cover of size $t$ exist. 
		Now let us prove that the reduction is correct.

	Suppose we have started with a `yes'-instance of \textsc{Vertex Cover}, 
        and let $S$ be a collection of $t$ vertices that forms a vertex cover of $G$. 
	Consider the committee $W=Z\cup\{y_i : u_i\in S\}$. Note that $|W| = p + t = k$.
	The voters in $N_3$ and $5pqt$ voters in $N_1$ have their most-preferred 
        candidate in $W$, so they contribute $0$ to the cost of $W$.
	For the remaining $(5pq)(p-t)$ voters in $N_1$, their contribution to the cost of $W$ is $1$, 
        since $z_i\in W$ for all $i$. 
	Further, each voter in $N_2$ contributes at most $p+1$ to the cost.
	Indeed, the candidates that correspond to the endpoints of the 
        respective edge are ranked in positions $p+1$ and $p+2$
        in this voter's ranking, and since $S$ is a vertex cover for $G$,
	one of these candidates is in $S$. We conclude that 
        $\cost^+_\mu(P, W)\le (5pq)(p-t)+q(p+1) = B$.
	
	Conversely, suppose there exists a committee $W$ of size $k = p+t$
	with $\cost^+_\mu(P, W)\le B$.
	Note first that $W$ has to contain all candidates in $Z$: otherwise, 
        there are $M$ voters in $N_3$
	with cost at least $1$, and then the utilitarian Chamberlin--Courant cost of $W$
	is at least $M > B$, a contradiction. Thus $Z\subseteq W$. 
	We will now argue that $W\setminus Z$ is a subset of $Y$, 
	and that $S' = \{u_i : y_i\in W\setminus Z\}$ is a vertex cover of $G$.
	Suppose that $W\setminus Z$ contains too few candidates from $Y$, 
        i.e., at most $t-1$ candidates from $Y$. Then $N_1$ contains at least $(5pq)(p-(t-1))$
	voters who contribute at least $1$ to the cost of $W$, so 
	$\cost^+_\mu(P, W)\ge (5pq)(p-t+1) > (5pq)(p-t) + q(p+1) = B$, a contradiction.
	Thus, we have $W\setminus Z\subseteq Y$. 
	Now, suppose that $S'$ is not a vertex cover for~$G$.
	Let $e_j\in E$ be an edge that is not covered by $S'$, 
	and consider the voter in $N_2$ corresponding to $e_j$. Clearly, 
	none of the candidates ranked in positions $1, \dots, p+2+M$ by this voter
	appear in $W$. Thus, this voter contributes 
	more than $M$ to the cost of $W$, so the total cost of $W$ is more than $M > B$, 
        a contradiction.
	Thus, a `yes'-instance of  \textsc{Utilitarian CC} corresponds
	to a `yes'-instance of \textsc{Vertex Cover}.
\end{proof}

\section{Hardness of Utilitarian Chamberlin--Courant for Stars}
For the Borda scoring function, we have seen in Theorem~\ref{thm:hard-borda} that 
\textsc{Utilitarian CC} is NP-complete for trees of diameter 4, but in 
Section~\ref{sec:few-internal} we have argued that this problem is easy for stars, i.e.,  
trees of diameter 2. The algorithm that worked for stars 
uses specific properties of the Borda scoring function. In this section, 
we show that for some positional scoring functions 
\textsc{Utilitarian CC} remains hard even on stars.

Recall from Proposition~\ref{prop:spt-star} that a profile $P$ is single-peaked on a star if and only if there is a candidate $c$ such that for every $i \in N$, either $\text{top}(i) = c$ or $\text{second}(i) = c$.

\begin{theorem}\label{thm:hard-star}
	\textsc{Utilitarian CC} is \textup{NP}-hard even for profiles
	that are single-peaked on a star. The hardness result holds
	for any family of positional scoring functions
	whose scoring vectors $\vecs$ satisfy 
	$s_1=0$, $s_2=\dots=s_\ell<0$, $s_{\ell+1}<s_\ell$ for some $\ell\ge 5$.   
\end{theorem}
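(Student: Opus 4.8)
The plan is to prove this by adapting the \textsc{Vertex Cover} reduction from the proof of Theorem~\ref{thm:hard-borda} so that the resulting profile becomes single-peaked on a \emph{star}. Write $\sigma = -s_2 \ge 1$; by hypothesis $s_2 = s_3 = \dots = s_\ell$ and $s_{\ell+1} \le s_2 - 1$, so a voter pays $0$ when represented at the top, exactly $\sigma$ when represented at any position in the ``plateau'' $\{2,\dots,\ell\}$, and at least $\sigma+1$ at any later position. Given a \textsc{Vertex Cover} instance $(G,t)$ with $V=\{u_1,\dots,u_p\}$, $E=\{e_1,\dots,e_q\}$ (assume $1 \le t < p$), I would introduce a center candidate $c$, a vertex candidate $y_i$ and an accumulator candidate $z_i$ for each $u_i$, one auxiliary candidate $a$, and $O(\ell)$ dummy candidates, declaring every candidate but $c$ to be a leaf of the star centred at $c$; set $k = p+t$. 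The electorate is $N_1\cup N_2\cup N_3$, obtained from the Borda construction by splicing $c$ into the second position of every vote: for each $u_i$, $N_1$ has $\alpha := q+1$ copies of $y_i \succ c \succ z_i \succ (\text{dummies}) \succ \cdots$; for each $e_j=\{u_i,u_{i'}\}$, $N_2$ has one vote $a \succ c \succ y_i \succ y_{i'} \succ (\text{dummies}) \succ \cdots$ with dummies in positions $5,\dots,\ell$ (here I use $\ell\ge5$) so that the first $\ell$ positions of this vote contain only $a,c,y_i,y_{i'}$ and dummies; and for each $u_i$, $N_3$ has $M$ copies of $z_i \succ c \succ \cdots$, with $M$ a sufficiently large polynomial. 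All unspecified positions are completed arbitrarily, subject only to placing $y_i$ above $z_i$ whenever neither is yet placed --- exactly the slack Proposition~\ref{prop:spt-star} allows --- and since $c$ sits at position~$2$ everywhere, the profile is single-peaked on the star. I set the target $B = \sigma(\alpha(p-t)+q)$ and ask for a size-$k$ committee with $\cost_\mu^+(P,W)\le B$.

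For the easy direction, if $S$ is a vertex cover with $|S|=t$ then $W=\{z_1,\dots,z_p\}\cup\{y_i:u_i\in S\}$ has size $k$: the $N_3$-voters and the in-$S$ $N_1$-voters pay $0$; every other $N_1$-voter is represented by its $z_i\in W$ at position~$3$, paying $\sigma$; and every $N_2$-voter is represented by an endpoint candidate in $W$ at position~$3$ or~$4$, again paying $\sigma$ --- giving exactly $\sigma\alpha(p-t)+\sigma q=B$. For the converse I would show that a cost-$\le B$ committee $W$ must consist of $Z$ plus $t$ vertex candidates forming a cover, via the chain: (1) $c\notin W$, because with $c\in W$ every voter is represented in the plateau, so the cost is $\sigma(n-\sum_{b\in W}\mathrm{plu}(b))$, and since $\mathrm{plu}(c)=0<\alpha<\mathrm{plu}(z_i)=M$ the best such committee still costs at least $B+\sigma\alpha>B$; (2) $Z\subseteq W$, since otherwise the $M$ copies of some $N_3$-vote each pay $\ge\sigma$, exceeding $B$; (3) $|W\setminus Z|=t$, and if fewer than $t$ vertex candidates were in $W$, the $\ge\alpha(p-t+1)$ many $N_1$-voters whose top candidate is absent would pay $\ge\sigma\alpha(p-t+1)>B$, so $W=Z\cup\{y_i:u_i\in S'\}$ with $|S'|=t$ and nothing else; (4) $S'$ is a cover, for otherwise the $N_2$-voter of a missed edge has $a,c$ and both its endpoint candidates outside $W$, hence is represented by an accumulator past position $\ell$, paying $\ge\sigma+1$ and breaking the budget.

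The single genuine obstacle --- step (1) --- is exactly what makes stars different from the diameter-$4$ trees of Theorem~\ref{thm:hard-borda}: on a star the centre $c$ is ranked first or second by \emph{every} voter, so admitting $c$ hands every voter the plateau score $\sigma$ for free and threatens to trivialise the instance. The construction is built to neutralise this: the intended committee already represents every voter inside the plateau (the accumulators rescue the unselected vertex-voters and the cover property rescues the edge-voters), so $c$ would add nothing, while the multiplicities $\alpha$ and $M$ make the vertex and accumulator candidates so popular that a committee slot spent on the plurality-$0$ centre is strictly wasteful. The one place that needs real care is choosing $\alpha$, $M$ and $B$ so that all four steps above hold at once; the rest is a routine re-run of the Borda reduction, with a flat plateau (of length $\ell-1\ge4$) replacing the strictly decreasing Borda scores.
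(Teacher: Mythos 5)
Your proposal is correct, but it takes a genuinely different route from the paper's proof. You adapt the \textsc{Vertex Cover} reduction of Theorem~\ref{thm:hard-borda}, splicing the star center into position~2 of every vote (which by Proposition~\ref{prop:spt-star} is all that single-peakedness on a star requires) and replacing Borda's strictly decreasing scores by the plateau $s_2=\dots=s_\ell$: the intended committee (all accumulators $z_i$ plus a cover's worth of $y_i$) costs exactly $\sigma$ per imperfectly represented voter, an uncovered edge pushes its voter past position $\ell$ and hence costs strictly more, and your plurality-counting step rules out ever selecting the center. The paper instead reduces from \textsc{Exact Cover by 3-Sets} restricted to instances where each element occurs in at most three sets, using a single accumulator candidate $z$, committee size $p'+1$, and a private block of $\ell$ dummies per element; there the occurrence bound is precisely what places the sets containing $x_i$ in positions $3$--$5$, which is where the hypothesis $\ell\ge 5$ enters, and the exclusion of the star center from any good committee falls out implicitly from the same counting that bounds the number of $Y$-candidates. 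Your construction buys a more direct reuse of the Theorem~\ref{thm:hard-borda} machinery (and in fact only needs the two edge endpoints in positions $3$--$4$, so it would tolerate a slightly shorter plateau), at the price of the explicit center-exclusion argument and the calibration of $\alpha=q+1$, $M$, and $B=\sigma(\alpha(p-t)+q)$ --- all of which check out: with $c$ in the committee the cost is at least $B+\sigma\alpha$, fewer than $t$ vertex candidates cost at least $\sigma\alpha(p-t+1)>B$ since $\alpha>q$, and a missed edge adds at least one unit on top of the exact budget $B$.
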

\begin{proof}
	We will reduce from the restricted version of \textsc{Exact Cover by 3-Sets (X3C)}
	to our problem. Recall that an instance of X3C is given by a ground set
	$X$ and a collection $\mathcal Y$ of size-3 subsets of $X$; 
        it is a `yes'-instance if there is a subcollection $\mathcal Y'\subseteq \mathcal Y$ 
        of size $|X|/3$
	such that each element of $X$ appears in exactly one set in $\mathcal Y'$.
	This problem is NP-hard. Moreover it remains NP-hard even if each element
        of $X$ appears in at most three sets in $\mathcal Y$ \cite{Gonzalez85}.
	
	Fix a family of positional scoring functions $\mu$ that satisfy 
	the condition in the statement of the theorem for some $\ell\ge 5$.
	For brevity, we write $s=s_2$ and $S=s_{\ell+1}$. 
	
	Given an instance $(X, \mathcal{Y})$ of \textsc{X3C} such that 
	$X=\{x_1, \dots, x_p\}$, $p=3p'$, $\mathcal Y=\{Y_1, \dots, Y_q\}$, 
	and $|\{Y_j\in\mathcal{Y} : x_i\in Y_j\}|\le 3$ for each $x_i\in X$, we construct an instance 
	of our problem as follows. 
	We set $Y=\{y_1, \dots, y_q\}$, create $p$ sets of dummy candidates $D_1, \dots, D_p$
	of size $\ell$ each, and let
	$C=\{a, z\}\cup X \cup Y\cup D_1\cup\ldots \cup D_p$. 
	
	We now introduce the voters, who will come in three types $N = N_1 \cup N_2 \cup N_3$.
	
	For each $Y_j\in \mathcal{Y}$ we construct $p+1$ voters who rank $y_j$ first, $a$ second
	and $z$ third, followed by all other candidates in an arbitrary order;
	let $N_1$ denote the set of voters constructed in this way.
	Further, for each $x_i\in X$, we construct a voter who ranks $x_i$ first, followed
	by $a$, followed by the candidates $y_j$ such that $x_i\in Y_j$ (in an arbitrary order), 
	followed by candidates in $D_i$, followed by all other candidates 
	in an arbitrary order. Denote the resulting set of voters by $N_2$.
	Finally, let $N_3$ be a set of $(p+1)(q+1)$ voters who all rank $z$
	first and $a$ second, followed by all other candidates in an arbitrary order.
	Set $B=s((p+1)(q-p')+p)$ and $k=p'+1$. 
	This completes the description of our instance
	of the \textsc{Utilitarian CC} problem. Observe that every voter in $N$
	ranks $a$ second, so by Proposition~\ref{prop:spt-star} the constructed profile
	in single-peaked on a star. 
	
		\[\def\arraystretch{1.3}
	\begin{array}{ccccccccccccccccccccc}
		\toprule
		\multicolumn{3}{c}{N_1} & & \multicolumn{3}{c}{N_2} & & \multicolumn{1}{c}{N_3} \\
		\cmidrule{1-3} \cmidrule{5-7} \cmidrule{9-9} 
		p+1 & \cdots & p+1      && 1 & \cdots & 1 && (p+1)(q+1) \\
		\cmidrule{1-3} \cmidrule{5-7} \cmidrule{9-9} 
		y_1 &              & y_q          && x_1              && x_p                  &&  z  \\
		a &              & a          && a    && a        &&  a  \\
		z   &              & z            && y_{j_{1,1}}    && y_{j_{p,1}}        &&  \smash\vdots\\
		\smash\vdots &     & \smash\vdots &&  y_{j_{1,2}}    && y_{j_{p,2}}             \\
		&     & &&  y_{j_{1,3}}    && y_{j_{p,3}}             \\
		&     &  && d_1            && d_1               \\
		&         &                       && \smash\vdots   && \smash\vdots        \\[-2pt]
		&         &                       && d_M            && d_M                \\[-2pt]
		&         &                       && \smash\vdots   && \smash\vdots       \\
		\bottomrule
	\end{array}
	\]
	
	Suppose we have started with a `yes'-instance of \textsc{X3C}, and let $\mathcal{Y}'$
	be a collection of $p'$ subsets in $\mathcal Y$ that cover $X$. 
	Set $W=\{z\}\cup\{y_j : Y_j\in\mathcal{Y}'\}$.
	Clearly, all voters in $N_3$ and $(p+1)p'$ voters in $N_1$ are perfectly represented
	by $W$ and so contribute 0 to the score of $W$. 
	The remaining voters in $N_1$ 
	contribute $s = s_3$ to the score of $W$, since $z\in W$. 
	Further, each voter in $N_2$
	contributes at least $s = s_5$ to the score of $W$. 
	Indeed, consider a voter who ranks some $x_i\in X$ first.
	Then he ranks the candidates $y_j$ such that $x_i\in Y_j$ in positions $5$ 
	or higher. Since $\mathcal{Y}'$ is a cover of $X$, at least one of these candidates
	appears in $W$. We conclude that $\score^+_\mu(P, W)\ge s((p+1)(q-p')+p) = B$.
	
	Conversely, suppose there exists a committee $W$ of size $p'+1$
	such that $\score^+_\mu(P, W)\ge B$.
	Note first that $W$ has to contain $z$: otherwise, each voter in $N_3$
	would contribute at most $s$ to the score, and the total score of $W$ 
	would  be at most $s|N_3| < B$. 
	We will now argue that $W\setminus\{z\}$
	is a subset of $Y$ and that $\mathcal{Y}'' = \{Y_j : y_j\in W\setminus \{z\}\}$ 
	is an exact cover of $X$.
	Suppose first that $W\setminus\{z\}$ contains at most $p'-1$
	candidates from $Y$. Then $N_1$ contains at least $(p+1)(q-p'+1)$
	voters who contribute at most $s$ to the score of $W$, so 
	$\score^+_\mu(P, W)\le s(p+1)(q-p'+1) < B$, a contradiction.
	Thus, we have $W\setminus\{z\}\subseteq Y$. Now, suppose that
	$\mathcal{Y}''$ is not an exact cover of $X$, 
	let $x_i$ be an element of $X$ that is not covered by $\mathcal{Y}''$, 
	and consider the voter in $N_2$ that ranks $x_i$ first. Clearly, 
	none of the candidates ranked in positions $1, \dots, \ell$ by this voter
	appear in $W$. Thus, this voter 
	contributes at most $S$ to the score of $W$. All other voters in $N_2$ 
	contribute at most $s$ to the score of $W$. Further, there are $(p+1)(q-p')$ voters
	in $N_1$ who are not perfectly represented by $W$. We conclude that the total
	score of $W$ is at most 
	$s((p+1)(q-p')+(p-1))+S < B$, a contradiction.
	Thus, a `yes'-instance of  \textsc{Utilitarian CC} corresponds
	to a `yes'-instance of \textsc{X3C}.
\end{proof}

\section{Hardness of Recognizing Preferences Single-Peaked on Regular Trees}
Recall that a tree is $k$-regular if every non-leaf vertex has degree $k$.

\begin{theorem}\label{thm:hard-reg}
	Given a profile $P$, it is \textup{NP}-complete to decide whether $P$ is single-peaked on a regular tree, i.e., whether there exists 
	a positive integer $k$ such that $P$ is single-peaked on a 
	$k$-regular tree. The problem is also hard for each fixed $k\ge 4$.
\end{theorem}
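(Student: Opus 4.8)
The plan is to show membership in NP first and then give a polynomial-time reduction from (a restricted version of) \textsc{Exact Cover by 3-Sets}, obtained by modifying the reduction in the proof of Theorem~\ref{thm:giventree}. Membership in NP is immediate: a certificate is a tree $T$ on the candidate set $C$ together with the value $k$; one checks in polynomial time that $T$ is a tree, that all internal vertices of $T$ have degree exactly $k$, and that $P$ is single-peaked on $T$ (using the top-initial-segment characterization of Proposition~\ref{prop:spt-equivalences}). Since necessarily $k\le|C|$, the existential quantifier over $k$ causes no difficulty.

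For hardness, recall that the reduction of Theorem~\ref{thm:giventree} turns an X3C instance $(X,\mathcal Y)$ with $|X|=3p'$ and $|\mathcal Y|=q$ into a profile whose single-peaked trees are forced to have a central candidate $z$ adjacent to all set-candidates $y_1,\dots,y_q$, and in which, in the `yes' case, one may additionally attach the three element-candidates of each cover set to the corresponding $y_j$. The one obstruction to regularity is that $z$ has degree $q$ whereas the `cover' vertices $y_j$ have degree $4$. I would remove this obstruction in two different ways, depending on whether $k$ may grow with the instance or is a fixed constant at least $4$.

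For the `there exists $k$' statement I would take $k=q$ (assuming w.l.o.g.\ that $q\ge 4$, as smaller instances are trivial). Keep the voters of Theorem~\ref{thm:giventree} that force $z$ to be adjacent to every $y_j$ and that constrain the attachment of each element-candidate; in addition, introduce $p'(q-4)$ fresh \emph{dummy} candidates, and make every element- and dummy-candidate the bottom choice of some voter, so that Proposition~\ref{prop:bottom-is-leaf} forces all of them to be leaves of any tree on which the profile is single-peaked. Finally choose the number of candidates so that $|C|=(p'+1)(q-1)+2$; by the identity $\ell=(k-2)I+2$ for finite $k$-regular trees (equivalently $I=(|C|-2)/(k-1)$), \emph{every} $q$-regular tree on $C$ has exactly $p'+1$ internal vertices. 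Since $z$ (having degree $q\ge 2$) is one of them, the tree restricted to $\{z\}\cup\{y_1,\dots,y_q\}$ is forced to be the star centred at $z$, so $z$'s $q$ edges are used up and all elements and dummies hang off the $y_j$'s; the counting then forces exactly $p'$ of the $y_j$ to be internal, each with exactly its three element-pendants (at most $3$ are possible, and $3p'$ must be placed) plus $q-4$ dummy pendants, whence these $p'$ sets form an exact cover. The converse direction builds this tree explicitly from a cover and checks single-peakedness, which is routine.

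For a fixed $k\ge 4$ I would instead reduce from \textsc{Exact Cover by $(k-1)$-Sets}, which is NP-hard precisely because $k-1\ge 3$ (for $k-1=2$ it is just perfect matching, which explains the bound $k\ge 4$); then a `used' set-candidate naturally acquires degree $k$ (one edge to its parent plus $k-1$ element-edges) with no padding needed. The difficulty is that $z$ can no longer be a single vertex of degree $q$, so, mimicking the device already used for the degree-$3$ variant of Theorem~\ref{thm:giventree}, I would replace $z$ by a path of `hub' candidates $z_1,\dots,z_m$ with $m\approx q/(k-2)$, add voters forcing consecutive hubs to be adjacent (and each hub to have degree $\ge 2$, hence internal) and forcing every set-candidate to hang off some hub, pad hubs and unused set-candidates up to degree $k$ with dummy leaves, again make all elements and dummies someone's bottom choice, and tune $|C|$ so that any $k$-regular tree on $C$ has exactly $m+p'$ internal vertices. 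The hardest part of the whole argument is this fixed-$k$ case: in a $k$-regular tree one cannot single out the hubs or the used set-candidates by their degree, so the voter gadgets must pin down the entire coarse structure of the tree --- which candidates form the hub path, which set-candidates are internal, and which elements hang off which set --- purely through single-peakedness constraints, while the counting has to be arranged to force exactly the right number of internal vertices; I expect this bookkeeping to be the main source of technical effort, much as the paper already signals for the analogous degree-$3$ result.
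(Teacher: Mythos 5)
Your NP-membership argument and your instinct to start from the Theorem~\ref{thm:giventree} gadget are fine, but both halves of the hardness argument have genuine gaps. For the ``exists $k$'' version, your backward direction only reasons about $q$-regular trees: you fix $k=q$, tune $|C|$ so that $q$-regular trees have $p'+1$ internal vertices, and conclude a cover exists. But the problem quantifies over \emph{all} $k$, and nothing in your construction pins $k$ down to $q$. The votes only force $\deg(z)\ge q$, so a $k'$-regular tree with $k'>q$ is not excluded; in particular, every voter inherited from the Theorem~\ref{thm:giventree} gadget ranks $z$ first, so by Proposition~\ref{prop:spt-star} the profile is single-peaked on the star centred at $z$ --- which is itself a regular tree --- unless your unspecified leaf-forcing voters are designed to avoid ranking $z$ in the top two positions. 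Even then, intermediate values of $k'$ remain problematic, because the element voters explicitly allow each $x_i$ to attach directly to $z$; a tree in which $z$ absorbs extra element/dummy leaves (raising its degree to some $k'>q$ with $(k'-1)\mid(p'+1)(q-1)$) can satisfy all your constraints without encoding a cover. The paper avoids this by a gadget that forces one vertex to have degree \emph{exactly} four in every admissible tree (a vertex $s_0$ whose only possible neighbours are four forced ones), so any regular tree must be $4$-regular and the fixed-$k$ argument takes over; your construction has no analogous mechanism, and adding one is nontrivial since excluding $z$ as a possible neighbour of the $x_i$ conflicts with the connectivity of their top-initial segments.

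For fixed $k\ge4$ --- which is the substantive claim of the theorem --- you only sketch a hub-path construction with dummy padding and global counting of internal vertices, and you yourself flag the key step (pinning down the coarse structure by voter gadgets) as unresolved. The paper's construction shows this difficulty is avoidable and no counting of $|C|$ or padding of unused set-candidates is needed: the spine is a path $s_0,\dots,s_{q+1}$ whose edges are forced by voters, each internal spine vertex $s_j$ carries its \emph{own} forced pendants ($k-3$ private leaves $\ell_j$ plus the set-candidate $y_j$), so its degree is already exactly $k$; regularity then forces each element to attach to some $y_j$ containing it, and forces each $y_j$ to have either $0$ or exactly $k-1$ element children (leaf or degree $k$), which directly encodes an exact cover by $(k-1)$-sets. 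Unused $y_j$ simply remain leaves, so there is no need to ``single out used set-candidates by degree'' or to force a prescribed number of internal vertices. As written, your proposal does not yet constitute a proof of either part of the statement.
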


\begin{proof}
	The problem is in NP since for a given $k$-regular tree $T$ we can easily check whether 
	$P$ is single-peaked on $T$.
	
	We start by giving a hardness proof for fixed $k=4$, and later explain how to modify the reduction 
	for other fixed $k$ and for non-fixed $k$.
	
	Again, we reduce from X3C. Suppose that we are given an X3C-instance with 
	ground set $\{x_1,\dots,x_p\}$, $p = 3p'$, and a collection of subsets 
	$\mathcal Y = \{Y_1,\dots,Y_q\}$.
	We construct a profile over the following candidates. 
	Let 
	\[
	X = \{ x_1, \dots, x_p\}, \quad 
	S = \{s_0, s_1, \dots, s_q, s_{q+1}\}, \quad 
	L = \{\ell_1,\dots, \ell_q\}, \quad 
	Y = \{ y_1, \dots, y_q \}. 
	\]
Then our candidate set is 
	$X\cup S\cup L \cup Y$; that is, there is one candidate per element $x_i$, 
	three candidates $s_j,\ell_j,y_j$ for each set $Y_j$, 
	and two further candidates $s_0$ and $s_{q+1}$. 
	The candidate $y_j$ represents the set $Y_j$. 
	We now introduce the voters; the reader may find it helpful 
	to look at the tree in Figure~\ref{fig:tree-hard-reg} to understand the 
	intuition behind the construction.
	
	First, we force $(s_0, s_1, \dots, s_q, s_{q+1})$ to form a path. 
	For this, we need to force that $(s_j, s_{j+1})$ is an edge, for each $j = 0,\dots,q$.
	To this end, for each $j=0, \dots, q$ we create a voter who ranks the candidates as
	\[ s_j \succ s_{j+1} \succ s_{j+2} \succ\dots\succ s_{q+1}\succ s_{j-1}\succ \dots\succ 
	s_1\succ s_0\succ Y\succ L\succ X. \]
	We also force $s_0$ and $s_{q+1}$ to be leaves. This requires introducing two more voters:
	\[ s_1\succ\dots\succ s_{q+1}\succ Y\succ L\succ X\succ s_0, \]
	\[ s_0\succ\dots\succ s_{q}  \succ Y\succ L\succ X\succ s_{q+1}. \]
	Further, we force each $\ell_j \in L$ and each $x_i \in X$ to be a leaf.
	That is, for each $j=1, \dots, q$ we introduce a voter who ranks the candidates as
	\[ S\succ Y\succ L \setminus\{\ell_j\}\succ X\succ \ell_j,  \]
	and for each $i=1, \dots, p$ we introduce a voter who ranks the candidates as
	\[ S\succ Y\succ L\succ X \setminus\{x_i\}\succ x_i.  \]
	Here (and below) we write $S$ as shorthand for $s_0 \succ s_1 \succ \dots \succ s_{q+1}$.
	
	Now, for each $j=1, \dots, q$ the vertices $\ell_j$ and $y_j$ need to have an edge to $s_j$.
	Thus, for each $j=1, \dots, q$ we introduce two voters who rank the candidates as
	\[ \ell_j\succ s_j\succ s_{j+1}\succ \dots\succ s_{q+1}\succ s_{j-1}\succ \dots\succ s_0\succ 
		Y\succ L \setminus \{\ell_j\}\succ X, \]
	\[ y_j\succ s_j\succ s_{j+1}\succ \dots\succ s_{q+1}\succ s_{j-1}\succ \dots\succ s_0\succ 
		Y  \setminus \{y_j\}\succ L\succ X. \]
	Finally, for each $i=1, \dots, p$, we introduce a voter whose role is to ensure that
	$x_i$ is attached to a vertex $y_j$ such that $x_i\in Y_j$ 
	(or to an element of $S$, but this will never happen):
	\begin{equation}
	S\succ \{ y_j : x_i \in Y_j \}\succ x_i\succ 
        \{y_j: x_i\not\in Y_j\}\succ L\succ X \setminus \{x_i\}. \label{eqn:4-reg-attach}
	\end{equation}
	
	This concludes the description of the reduction. We will now prove that it is correct.
	
	Suppose that a collection $\mathcal Y' = \{Y_{j_1},\dots,Y_{j_{p'}}\}$ 
	forms an exact cover of $X$. 
	Then the constructed profile is single-peaked on the 4-regular tree 
	on the candidate set with the following set of edges $E$ (see Figure~\ref{fig:tree-hard-reg}): 
	$\{s_j, s_{j+1}\}$ for all $j = 0, \dots, q$; 
	$\{\ell_j, s_j\}$ for all $j=1, \dots, q$; 
	$\{y_j, s_j\}$ for all $j=1, \dots, q$; 
	$\{x_i, y_k\}$ for all $x_i\in X$, where $Y_k$ is the set in $\mathcal Y'$ 
	that contains $x_i$. 
	By considering top-initial segments of the votes given above, 
	we see that this choice makes all votes single-peaked on this tree.
		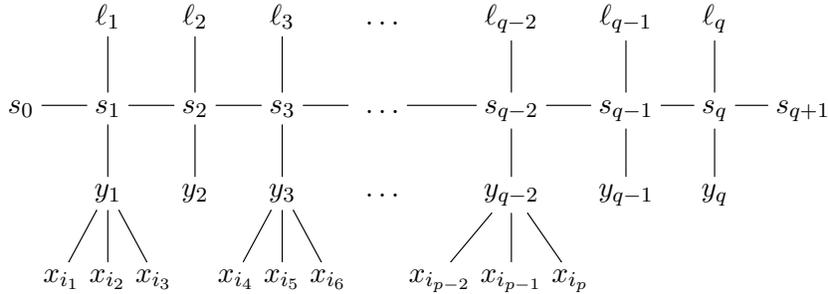
\begin{figure}[ht]
	\centering
	\[\xymatrix@C-2.2pc@R-0.5pc{
		&& \ell_1 \ar@{-}[d] && \ell_2 \ar@{-}[d] && \ell_3\ar@{-}[d] &&& \dots && 
		\ell_{q-2}\ar@{-}[d] &\quad& \ell_{q-1}\ar@{-}[d] &\quad& \ell_q\ar@{-}[d]\\
		s_0\ar@{-}[rr]
		&&s_1\ar@{-}[rr] \ar@{-}[d]
		&&s_2\ar@{-}[rr] \ar@{-}[d]
		&&s_3\ar@{-}[rr] \ar@{-}[d]
		&&& \dots \ar@{-}[rr]
		&& s_{q-2} \ar@{-}[rr] \ar@{-}[d]
		&& s_{q-1} \ar@{-}[rr] \ar@{-}[d]
		&& s_q  \ar@{-}[rr] \ar@{-}[d]
		&\quad& s_{q+1}\\
		&&y_1 \ar@{-}[dl] \ar@{-}[d] \ar@{-}[dr] 
		&& y_2
		&& y_3 \ar@{-}[dl] \ar@{-}[d] \ar@{-}[dr]
		&&& \dots 
		&& y_{q-2} \ar@{-}[dl] \ar@{-}[d] \ar@{-}[dr] 
		&& y_{q-1}
		&& y_q \\
		&x_{i_1} & x_{i_2} & x_{i_3}
		&&
		x_{i_4} & x_{i_5} & x_{i_6}
		&&  
		&	x_{i_{p-2}} & x_{i_{p-1}} & x_{i_{p}}
	}\]
\caption{Tree $T$ constructed in the proof of Theorem~\ref{thm:hard-reg}.}
\label{fig:tree-hard-reg}
\end{figure}

	Conversely, suppose there is a 4-regular tree $T = (C,E)$ such that all votes 
	are single-peaked on $T$. We show that in this case our instance of X3C admits
	an exact cover.
	When introducing the voters, 
	we argued that for any such tree it must be the case that
	$\{s_j, s_{j+1}\} \in E$, for all $j = 0, \dots, q$ and 
	$\{\ell_j, s_j\}, \{y_j, s_j\}\in E$ for all $j = 1, \dots, q$. 
	Further, we know that each of the vertices in $L\cup X \cup \{s_0, s_{q+1}\}$ is a leaf of $T$.
	However, we do not yet know how the vertices in $X$ are attached to the rest of the tree.
	As no vertex can be attached to a leaf, each vertex $x_i\in X$	
	is attached to some vertex in $Y\cup S\setminus\{s_0, s_{q+1}\}$.
	Now, each vertex $s_j$ in $S\setminus\{s_0, s_{q+1}\}$ already has four neighbors in $T$ 
	(namely, $s_{j-1}, s_{j+1}, \ell_j, y_j$); 
	thus, since $T$ is 4-regular, $x_i$ cannot be a neighbor of $s_j$. 
	Thus $x_i$'s neighbor is some $y_j\in Y$. 
	Now, consider the voter whose preferences are given by (\ref{eqn:4-reg-attach}); 
	for this voter's preferences to be single-peaked on $T$, 
	$x_i$ must be attached to a vertex $y_j$ that corresponds to a set $Y_j \ni x_i$. 
	On the other hand, by 4-regularity, each $y_j$ is connected to either 0 or 3 
	vertices in $X$. Hence this tree encodes an exact cover of $X$.
	
	For other fixed values of $k\ge 5$, we can perform essentially the same reduction 
	from the problem 
	\textsc{exact cover by $(k-1)$-sets}, which is also NP-hard;%
\footnote{For example, one can reduce from X3C to X4C by taking an X3C instance with $3p$ elements, adding $p$ dummy elements $d_1, \dots ,d_p$, and replacing each 3-set $Y$ by the $p$ 4-sets $Y \cup \{d_1\}, \dots Y \cup \{d_p\}$.}	
	the only modification 
	we need to make is to use $k-3$ copies of the set $L$.
	
	If the value of $k$ is not fixed, we can do the following (see picture): 
	Prepend $s_{-1}$ to the path $S$ (where now $s_{-1}$ is forced to be a leaf, 
	but $s_0$ is not), and introduce new leaves $a$ and $b$ that must be attached to $s_0$. 
	Modify the votes given by (\ref{eqn:4-reg-attach}) in such a way that $x_i$ can be attached to 
	$s_1,\dots,s_q$ or an appropriate $y_j$, but not to $s_0$. Then any regular tree 
	on which the profile is single-peaked will have to be 4-regular, 
	due to $s_0$ having degree 4, and the argument above goes through.
	\[\xymatrix@C-1.5pc@R-0.5pc{
		&& a \ar@{-}[d]
		&& \ell_1 \ar@{-}[d] &&\dots\\
		s_{-1} \ar@{-}[rr]
		&&s_0\ar@{-}[rr] \ar@{-}[d]
		&&s_1\ar@{-}[rr] \ar@{-}[d]
		&&\dots\\
		&&b
		&&y_1 \ar@{-}[dl] \ar@{-}[d] \ar@{-}[dr] 
		&& \dots\\
		&&&x_{i_1} & x_{i_2} & x_{i_3}
		&&
	}\]
\end{proof}

\vskip 0.2in
\bibliography{spt}
\bibliographystyle{theapa}

\end{document}